\tikzstyle{internal} = [draw, fill, shape=circle]
\tikzstyle{external} = [shape=circle]
\tikzstyle{square}   = [draw, fill, rectangle]
\tikzstyle{triangle} = [draw, fill, regular polygon, regular polygon sides=3, inner sep=3pt]
\tikzstyle{pentagon} = [draw, fill, regular polygon, regular polygon sides=5, inner sep=2pt, minimum size=14pt]
\tikzset{every fit/.append style=text badly centered}
\newcommand{\NP}{{\bf NP}}
\newcommand{\vbl}{{\sf var}}
\newcommand{\val}{\sigma}
\newcommand{\resample}{\textsc{Resample}}
\newcommand{\prs}{\textsc{Partial Rejection Sampling}}
\newcommand{\gprs}{\textsc{General Partial Rejection Sampling}}
\newcommand{\pvector}{\textbf{p}}
\newcommand{\Res}{{\sf Res}}
\newcommand{\Bad}{{\sf Bad}}
\newcommand{\PrPRS}{\Pr_{\mathrm{PRS}}}
\newcommand{\Ex}{\mathop{\mathbb{{}E}}\nolimits}
\newcommand{\rt}[2]{#1_{i,j_{i,#2}}}
\newcommand{\abs}[1]{\left\vert#1\right\vert}
\renewcommand{\Pr}{\mathop{\mathrm{Pr}}\nolimits}
\newcommand{\Cone}{6}
\newcommand{\Ctwo}{3}
\newcommand{\LLL}[0]{Lov\'asz Local Lemma}
\newtheorem{theorem}{Theorem}
\newtheorem{lemma}[theorem]{Lemma}
\newtheorem{corollary}[theorem]{Corollary}
\newtheorem*{remark}{Remark}
\newtheorem*{example}{Example}
\newtheorem{condition}[theorem]{Condition}
\newtheorem{definition}[theorem]{Definition}
\crefname{theorem}{Theorem}{Theorems}
\crefname{observation}{Observation}{Observations}
\crefname{claim}{Claim}{Claims}
\crefname{condition}{Condition}{Conditions}
\crefname{algorithm}{Algorithm}{Algorithms}
\crefname{property}{Property}{Properties}
\crefname{example}{Example}{Examples}
\crefname{fact}{Fact}{Facts}
\crefname{lemma}{Lemma}{Lemmas}
\crefname{corollary}{Corollary}{Corollaries}
\crefname{definition}{Definition}{Definitions}
\crefname{remark}{Remark}{Remarks}
\crefname{proposition}{Proposition}{Propositions}
\crefname{equation}{equation}{equations}
\def\prob#1#2#3{\goodbreak\begin{list}{}{\labelwidth\z@ \itemindent-\leftmargin
                        \itemsep\z@  \topsep6\p@\@plus6\p@
                        \let\makelabel\descriptionlabel}
                      \item[\textbf{Name}]#1
                      \item[\textbf{Instance}]#2
                      \item[\textbf{Output}]#3
                \end{list}}
\title{Uniform Sampling through the Lov\'asz Local Lemma}
\author{Heng Guo}
\address{School of Informatics, 
University of Edinburgh, Informatics Forum, Edinburgh EH8 9AB, United Kingdom.}
\email{hguo@inf.ed.ac.uk}
\author{Mark Jerrum}
\address{School of Mathematical Sciences,
Queen Mary, University of London, Mile End Road, London E1 4NS, United Kingdom.}
\email{mj@qmul.ac.uk}
\author{Jingcheng Liu}
\address{Department of EECS, University of California, Berkeley, CA}
\email{liuexp@berkeley.edu}
\begin{document}

\begin{abstract}
  We propose a new algorithmic framework, called ``partial rejection sampling'', 
  to draw samples exactly from a product distribution, 
  conditioned on none of a number of bad  events occurring.
  Our framework builds new connections between the variable framework of the Lov\'asz Local Lemma and some classical sampling algorithms 
  such as the ``cycle-popping'' algorithm for rooted spanning trees.
  Among other applications, we discover new algorithms to sample satisfying assignments of $k$-CNF formulas with bounded variable occurrences.
\end{abstract}

\maketitle

\section{Introduction}

The Lov\'asz Local Lemma \cite{EL75} is a classical gem in combinatorics that guarantees the existence of a perfect object that avoids all events deemed to be ``bad''.
The original proof is non-constructive but there has been great progress in the algorithmic aspects of the local lemma.
After a long line of research \cite{Beck91a,Alon91,MR98,CS00,SS05,Sri08}, the celebrated result by Moser and Tardos \cite{MT10} gives efficient algorithms to find such a perfect object
under conditions that match the Lov\'asz Local Lemma in the so-called variable framework.
However, it is natural to ask whether, under the same condition, we can also sample a perfect object uniformly at random instead of merely finding one.

Roughly speaking, the resampling algorithm by Moser and Tardos \cite{MT10} works as follows.
We initialize all variables randomly. If bad events occur, then we arbitrarily choose a bad event and resample all the involved variables.
Unfortunately, it is not hard to see 
that this algorithm can produce biased samples.
This seems inevitable. As Bez{\'{a}}kov{\'{a}} et al.\ showed~\cite{BGGGS16}, sampling can be \NP-hard even under conditions that are stronger than those of the local lemma.
On the one hand, the symmetric \LLL{} only requires $ep\Delta\le 1$, where $p$ is the probability of bad events and $\Delta$ is the maximum degree of the dependency graph.
On the other hand, translating the result of \cite{BGGGS16} to this setting, one sees that as soon as $p\Delta^2\ge C$ for some constant $C$, then even approximately sampling perfect objects in the variable framework becomes \NP-hard.

The starting point of our work is a new condition (see~\Cref{cond:diamond}) under which we show that the output of the Moser-Tardos algorithm is in fact uniform (see~\Cref{thm:prs}).
Intuitively, the condition requires any two dependent bad events to be disjoint.
Indeed, instances satisfying this condition are called ``extremal'' in the study of Lov\'asz Local Lemma.
For these extremal instances, we can in fact resample in a parallel fashion, since the occurring bad events form an independent set in the dependency graph.
We call this algorithm ``partial rejection sampling'',\footnote{Despite the apparent similarity in names, our algorithm is different from ``partial resampling'' in \cite{HS13b,HS13a}. We resample \emph{all} variables in certain sets of events whereas ``partial resampling'' only resamples a subset of variables from some bad event.}
in the sense that it is like rejection sampling, but only resamples an appropriate subset of variables.

Our result puts some classical sampling algorithms under a unified framework, 
including the ``cycle-popping'' algorithm by Wilson~\cite{Wilson96} for sampling rooted spanning trees,
and the ``sink-popping'' algorithm by Cohn, Pemantle, and Propp~\cite{CPP02} for sampling sink-free orientations of an undirected graph.
Indeed, Cohn et al.\ \cite{CPP02} coined the term ``partial rejection sampling'' and asked for a general theory,
and we believe that extremal instances under the variable framework is a satisfactory answer.
With our techniques, we are able to give a new algorithm to sample solutions for a special class of $k$-CNF formulas,
under conditions matching the Lov\'asz Local Lemma (see~Corollary \ref{cor:ext-CNF}),
which is an \NP-hard task for general $k$-CNF formulas.
Furthermore, we provide \emph{explicit} formulas for the expected running time of these algorithms (see~\Cref{thm:exp-time}),
which matches the running time upper bound given by Kolipaka and Szegedy~\cite{KS11} under Shearer's condition~\cite{Shearer85}.

The next natural question is thus whether we can go beyond extremal instances.
Indeed, our main technical contribution is a general uniform sampler (\Cref{ALG:G-PRS}) that applies to \emph{any} problem under the variable framework.
The main idea is that, instead of only resampling occurring bad events, 
we resample a larger set of events so that the choices made do not block any perfect assignments in the end,
in order to make sure of uniformity in the final output.

As a simple example, we describe how our algorithm samples independent sets.
The algorithm starts by choosing each vertex with probability $1/2$ independently. 
At each subsequent round, in the induced subgraph on the currently chosen vertices, the algorithm finds all the connected components of size $\ge 2$.
Then it resamples all these vertices and their boundaries (which are unoccupied). 
And it repeats this process until there is no edge with both endpoints occupied.
What seems surprising is that this simple process does yield a uniformly random independent set when it stops.
Indeed, as we will show in~\Cref{thm:hard-core}, this simple process is an exact sampler for \emph{weighted independent sets} (also known as the \emph{hard-core model} in statistical physics).
In addition, it runs in expected linear time under a condition that matches, up to a constant factor, the \emph{uniqueness threshold} of the model (beyond which the problem of approximate sampling becomes \NP-hard).

In the more general setting, we will choose the set of events to be resampled, denoted by $\Res$,
iteratively.
We start from the set of occurring bad events.
Then we include all neighbouring events of the current set $\Res$, until there is no event $A$ on the boundary of $\Res$
such that the current assignment, projected on the common variables of $A$ and $\Res$, can be extended so that $A$ may happen.
In the worst case, we will resample all events (there is no event in the boundary at all).
In that scenario the algorithm is the same as a naive rejection sampling,
but typically we resample fewer variables in every step.
We show that this is a uniform sampler on assignments that avoid all bad events once it stops (see~\Cref{thm:g-prs}).

One interesting feature of our algorithm is that, unlike Markov chain based algorithms,
ours does not require the solution space (or any augmented space) to be connected.
Moreover, our sampler is exact;
that is, when the algorithm halts, the final distribution is precisely the desired distribution.
Prior to our work, most exact sampling algorithms were obtained by coupling from the past \cite{PW96}.
We also note that previous work on the Moser-Tardos output distribution, such as \cite{HSS11}, 
is not strong enough to guarantee a uniform sample (or $\varepsilon$-close to uniform in terms of total variation distances).

We give sufficient conditions that guarantee a linear expected running time of our algorithm in the general setting (see \Cref{thm:exp-time:G-prs}).
The first condition is that $p\Delta^2$ is bounded above by a constant.
This is optimal up to constants in observance of the \NP-hardness result in \cite{BGGGS16}.
Unfortunately, the condition on $p\Delta^2$ alone does not make the algorithm efficient.
In addition, we also need to bound the expansion from bad events to resampling events,
which leads to an extra condition on intersections of bad events.
Removing this extra condition seems to require substantial changes to our current algorithm.

To illustrate the result, we apply our algorithm to sample satisfying assignments of $k$-CNF formulas 
in which the degree of each variable (the number of clauses containing it) is at most $d$.
We say that a $k$-CNF formula has intersection $s$ if any two \emph{dependent} clauses share at least $s$ variables.
The extra condition from our analysis naturally leads to a lower bound on $s$.
Let $n$ be the number of variables.
We (informally) summarize our results on $k$-CNF formulas as follows (see Corollary \ref{cor:sharing-CNF} and \Cref{thm:sharing-CNF-hard}):
\begin{itemize}
	\item If $ d\le \frac{1}{6e}\cdot 2^{k/2}$, $dk \ge 2^{3e}$ and $s\ge \min\{\log_2 dk,k/2\}$,
    then the general partial rejection resampling algorithm outputs a uniformly random solution to a $k$-CNF formula with degree $d$ and intersection $s$
    in expected running time $O(n)$.
  \item If $d\ge 4\cdot 2^{k/2}$ (for an even $k$), then even if $s=k/2$, 
	  it is \NP-hard even to \emph{approximately} sample a solution to a $k$-CNF formula with degree $d$ and intersection $s$.
\end{itemize}
As shown in the hardness result, the intersection bound does not render the problem trivial.

Previously, sampling/counting satisfying assignments of $k$-CNF formulas required the formula to be monotone and $d\le k$ to be large enough \cite{BGGGS16} (see also \cite{BDK06,LL15}).
Although our result requires an additional lower bound on intersections,
not only does it improve the dependency of $k$ and $d$ \emph{exponentially}, but also achieves a \emph{matching} constant $1/2$ in the exponent.
Furthermore the samples produced are exactly uniform.
Thus, if the extra condition on intersections can be removed,
we will have a sharp phase transition at around $d=O(2^{k/2})$ in the computational complexity of 
sampling solutions to $k$-CNF formulas with bounded variable occurrences.
A similar sharp transition has been recently established for, e.g., sampling configurations in the hard-core model~\cite{Wei06,SS14,GSV16}.

Simultaneous to our work, Hermon, Sly, and Zhang \cite{HSZ16} showed that Markov chains for monotone $k$-CNF formulas are rapidly mixing, if $d\le c2^{k/2}$ for a constant $c$.
In another parallel work, Moitra~\cite{Moi16} gave a novel algorithm to sample solutions for general $k$-CNF when $d\lesssim 2^{k/60}$. 
We note that neither results are directly comparable to ours and the techniques are very different.
Both of these two samplers are approximate while ours is exact.
Moreover, ours does not require monotonicity (unlike \cite{HSZ16}),
and allows larger $d$ than \cite{Moi16} but at the cost of an extra intersection lower bound.
Unfortunately, our algorithm can be exponentially slow when the intersection $s$ is not large enough.
In sharp contrast, as shown by Hermon et al.\ \cite{HSZ16}, Markov chains mix rapidly for $d\le c2^{k}/k^2$ when $s=1$.

While the study of algorithmic \LLL{} has progressed beyond the variable framework~\cite{HS14a,AI16,HV15}, we restrict our focus to the variable framework in this work.
It is also an interesting future direction to investigate and extend our techniques of uniform sampling beyond the variable framework.
For example, one may want to sample a permutation that avoids certain patterns.
The classical sampling problem of perfect matchings in a bipartite graph can be formulated in this way.

Since the conference version of this paper appeared \cite{GJL17},
a number of applications of the partial rejection sampling method have been found \cite{GJ18a,GJ18b,GJ18c}.
One highlight is the first fully polynomial-time randomised approximation scheme (FPRAS) for all-terminal network reliability \cite{GJ18a}.
For the extremal instances, tight running time bounds have also been obtained \cite{GH18}.
Moreover, partial rejection sampling is adapted to dynamic and distributed settings as well \cite{FLY18}.

\section{Partial Rejection Sampling}\label{sec:PRS}

We first describe the ``variable'' framework.
Let $\{X_1,\dots,X_n\}$ be a set of random variables.
Each $X_i$ can have its own distribution and range $D_i$.
Let $\{A_1,\dots,A_m\}$ be a set of ``bad'' events that depend on $X_i$'s.
For example, for a constraint satisfaction problem (CSP) with variables $X_i$ ($1\le i\le n$) and constraints $C_j$ ($1\le j\le m$), 
each $A_j$ is the set of unsatisfying assignments of $C_j$ for $1\le j\le m$.
Let $\vbl(A_i)$ be the (index) set of variables that $A_i$ depends on.

The dependency graph $G=(V,E)$ has $m$ vertices, 
identified with the integers $\{1,2,\ldots,m\}$,
corresponding to the events $A_i$, 
and $(i,j)$ is an edge if $A_i$ and $A_j$ depend on one or more common variables, and $i \neq j$.
In other words, for any distinct $i,j$, $(i,j)\in E$ if $\vbl(A_i)\cap\vbl(A_j)\neq\emptyset$.
We write $A_i\sim A_j$ if the vertices $i$ and~$j$ are adjacent in~$G$.
The asymmetric Lov\'{a}sz Local Lemma \cite{EL75} states the following.

\begin{theorem}  \label{thm:LLL}
	If there exists a vector $\boldsymbol x \in [0,1)^m$ such that $\forall i \in [m]$,
  \begin{align} \label{eqn:LLL}
    \Pr(A_i)\le x_i \prod_{(i,j)\in E}(1-x_j), 
  \end{align}
  then $\displaystyle\Pr\left(\bigwedge_{i=1}^m \overline{A_i}\right)\ge \prod_{i=1}^m (1-x_i)>0$.
\end{theorem}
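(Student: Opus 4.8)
The plan is to follow the classical inductive argument of Erd\H{o}s and Lov\'asz, adapted to the variable framework, where two events with disjoint variable sets are genuinely (not merely ``lopsidependently'') independent. The heart of the matter is the following claim: for every $i$ and every set $S\subseteq\{1,\dots,m\}\setminus\{i\}$ with $\Pr\bigl(\bigwedge_{j\in S}\overline{A_j}\bigr)>0$,
\begin{align*}
  \Pr\Bigl(A_i \;\Bigm|\; \bigwedge_{j\in S}\overline{A_j}\Bigr)\le x_i .
\end{align*}
I would prove this by induction on $|S|$. The base case $S=\emptyset$ is immediate from~\eqref{eqn:LLL}, since $\prod_{(i,j)\in E}(1-x_j)\le 1$.

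For the inductive step, split $S$ into $S_1=\{j\in S: (i,j)\in E\}$ and $S_2=S\setminus S_1$, and write
\begin{align*}
  \Pr\Bigl(A_i \Bigm| \bigwedge_{j\in S}\overline{A_j}\Bigr)
  = \frac{\Pr\bigl(A_i \wedge \bigwedge_{j\in S_1}\overline{A_j} \bigm| \bigwedge_{k\in S_2}\overline{A_k}\bigr)}
         {\Pr\bigl(\bigwedge_{j\in S_1}\overline{A_j} \bigm| \bigwedge_{k\in S_2}\overline{A_k}\bigr)} .
\end{align*}
For the numerator I would drop the $\overline{A_j}$ conjuncts and use that $A_i$ depends only on $\vbl(A_i)$, which is disjoint from $\bigcup_{k\in S_2}\vbl(A_k)$; hence in the variable framework $A_i$ is independent of $\bigwedge_{k\in S_2}\overline{A_k}$, so the numerator is at most $\Pr(A_i)\le x_i\prod_{(i,j)\in E}(1-x_j)$. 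For the denominator, enumerate $S_1=\{j_1,\dots,j_r\}$ and expand as a telescoping product
\begin{align*}
  \prod_{\ell=1}^{r}\Pr\Bigl(\overline{A_{j_\ell}} \Bigm| \bigwedge_{t<\ell}\overline{A_{j_t}}\wedge\bigwedge_{k\in S_2}\overline{A_k}\Bigr)
  = \prod_{\ell=1}^{r}\Bigl(1-\Pr\bigl(A_{j_\ell}\bigm| {\textstyle\bigwedge_{t<\ell}\overline{A_{j_t}}\wedge\bigwedge_{k\in S_2}\overline{A_k}}\bigr)\Bigr)\ge \prod_{\ell=1}^{r}(1-x_{j_\ell}),
\end{align*}
where each conditioning set has size at most $|S|-1$, so the induction hypothesis applies to each factor. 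Since $\{j_1,\dots,j_r\}\subseteq\{j:(i,j)\in E\}$ and each $x_j<1$, this is at least $\prod_{(i,j)\in E}(1-x_j)$. Dividing the numerator bound by the denominator bound yields exactly $x_i$, closing the induction. Along the way one must check that every conditional probability that appears is well defined, i.e. the conditioning events have positive probability; this follows because at each stage the relevant product of $(1-x_j)$-type lower bounds is strictly positive, so one can carry the hypothesis ``$\Pr(\bigwedge_{j\in S}\overline{A_j})>0$'' through the induction.

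Finally, to obtain the theorem I would expand by the chain rule over an arbitrary ordering of all $m$ events and apply the claim with $S=\{1,\dots,i-1\}$ at each step:
\begin{align*}
  \Pr\Bigl(\bigwedge_{i=1}^m\overline{A_i}\Bigr)=\prod_{i=1}^m\Pr\Bigl(\overline{A_i}\Bigm|\bigwedge_{j<i}\overline{A_j}\Bigr)=\prod_{i=1}^m\Bigl(1-\Pr\bigl(A_i\bigm|{\textstyle\bigwedge_{j<i}\overline{A_j}}\bigr)\Bigr)\ge\prod_{i=1}^m(1-x_i)>0 .
\end{align*}
The only delicate point is the bookkeeping in the inductive step: one must ensure the conditioning sets strictly shrink so that the induction is legitimate, and must invoke genuine independence of $A_i$ from the $S_2$-events — which is precisely what the variable framework provides and is the reason the argument goes through unchanged. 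Keeping all conditional probabilities well defined throughout is a minor but necessary technicality.
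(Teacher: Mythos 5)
Your proof is correct and is the standard inductive argument for the asymmetric Lov\'asz Local Lemma. One point to note: the paper does not actually prove \Cref{thm:LLL} — it is stated as a classical result and cited to Erd\H{o}s and Lov\'asz \cite{EL75}. However, precisely the inductive claim you establish, namely that conditioning on any collection of complementary events cannot inflate $\Pr(A_i)$ above $x_i$, reappears in the paper in a slightly generalized form as \Cref{lem:event-bound} (Theorem 2.1 of \cite{HSS11}), and the paper's proof of that lemma uses the same decomposition $S=S_1\cup S_2$, the same telescoping expansion of the denominator, and the same appeal to mutual independence in the variable framework. So while there is no ``paper's proof'' of the statement itself to compare against, your argument is the one the paper implicitly relies on and explicitly reproduces in its more general variant; your bookkeeping (the conditioning sets strictly shrink, and well-definedness of conditional probabilities is maintained because the accumulated lower bound $\prod(1-x_j)$ stays positive) is handled correctly.
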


Theorem \ref{thm:LLL} has a condition that is easy to verify, but not necessarily optimal.
Shearer \cite{Shearer85} gave the optimal condition for the local lemma to hold for a fixed dependency graph $G$.
To state Shearer's condition, we will need the following definitions.
Let $p_i:=\Pr(A_i)$ for all $1\le i\le m$.
Let $\mathcal{I}$ be the collection of independent sets of $G$.
Define the following quantity:
\begin{align*}
  q_I(\pvector):=\sum_{J\in\mathcal{I},\, I\subseteq J}(-1)^{|J|-|I|}\prod_{i\in J}p_i,
\end{align*}
where $\pvector=(p_1,\ldots,p_m)$.
When there is no confusion we also simply write $q_I$ instead of the more cumbersome $q_I(\pvector)$.
Moreover, to simplify the notation, let $q_i:=q_{\{i\}}$ for $1\le i\le m$.
Note that if $I\notin \mathcal{I}$, $q_I=0$.

\begin{theorem}[Shearer \cite{Shearer85}]
  If $q_I\ge 0$ for all $I\subseteq V$,
  then $\Pr\left(\bigwedge_{i=1}^m \overline{A_i}\,\right)\ge q_{\emptyset}$.
  \label{thm:Shearer}
\end{theorem}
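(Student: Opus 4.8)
The plan is to give essentially Shearer's original inclusion--exclusion argument, organised through conditional probabilities. For $S\subseteq V$, write $q_\emptyset(S)$ for the quantity $q_\emptyset$ evaluated on the induced subgraph $G[S]$ with the corresponding probabilities $(p_i)_{i\in S}$ (so $q_\emptyset(V)=q_\emptyset$ and $q_\emptyset(\emptyset)=1$), and let $N[j]$ denote $j$ together with its neighbours in~$G$. Two preliminary facts are needed. First, a purely combinatorial recursion: classifying the independent sets of $G[S]$ according to whether they contain a fixed $j\in S$ yields
\[
  q_\emptyset(S)=q_\emptyset(S\setminus\{j\})-p_j\,q_\emptyset\bigl(S\setminus N[j]\bigr).
\]
Second, the standard fact that the hypothesis ``$q_I\ge 0$ for all $I\subseteq V$'' propagates to every induced subgraph, i.e.\ $q_I(S)\ge 0$ for all $S\subseteq V$ and all $I\subseteq S$ (which follows from the corresponding recursions for general $q_I$). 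Since the argument below divides by $q_\emptyset(S)$, I will in fact assume the strict inequality $q_\emptyset(S)>0$ for every $S$; the boundary case where some $q_\emptyset(S)=0$ then follows by a routine limiting argument (approaching $\pvector$ from the interior of Shearer's region) and may be dispatched at the end.

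The core of the proof is the following claim, which I would prove by strong induction on the number of events involved: for every $T\subseteq V$ and every $j\notin T$,
\[
  \Pr\Bigl(\overline{A_j}\ \Big|\ \bigwedge_{i\in T}\overline{A_i}\Bigr)\ \ge\ \frac{q_\emptyset(T\cup\{j\})}{q_\emptyset(T)}\,,
\]
with the understanding that the conditioning event has positive probability (which is part of what the induction establishes). Granting the claim, \Cref{thm:Shearer} follows immediately from the chain rule: fixing an enumeration $V=\{v_1,\dots,v_m\}$ and telescoping,
\[
  \Pr\Bigl(\bigwedge_{i=1}^{m}\overline{A_i}\Bigr)
  =\prod_{k=1}^{m}\Pr\Bigl(\overline{A_{v_k}}\ \Big|\ \bigwedge_{l<k}\overline{A_{v_l}}\Bigr)
  \ \ge\ \prod_{k=1}^{m}\frac{q_\emptyset(\{v_1,\dots,v_k\})}{q_\emptyset(\{v_1,\dots,v_{k-1}\})}
  =\frac{q_\emptyset(V)}{q_\emptyset(\emptyset)}=q_\emptyset .
\]

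To prove the claim, split $T$ into the events not adjacent to $A_j$, namely $B=T\setminus N[j]$, and the neighbours $C=T\cap N(j)$, so that $B$ and $C$ partition~$T$. Because $A_j$ depends only on $\vbl(A_j)$, which is disjoint from $\vbl(A_i)$ for every $i\in B$, the event $A_j$ is independent of $\bigwedge_{i\in B}\overline{A_i}$; hence
\[
  \Pr\Bigl(A_j\wedge\bigwedge_{i\in T}\overline{A_i}\Bigr)\ \le\ \Pr\Bigl(A_j\wedge\bigwedge_{i\in B}\overline{A_i}\Bigr)=p_j\,\Pr\Bigl(\bigwedge_{i\in B}\overline{A_i}\Bigr).
\]
Dividing through by $\Pr\bigl(\bigwedge_{i\in T}\overline{A_i}\bigr)=\Pr\bigl(\bigwedge_{i\in B}\overline{A_i}\bigr)\cdot\Pr\bigl(\bigwedge_{i\in C}\overline{A_i}\mid\bigwedge_{i\in B}\overline{A_i}\bigr)$ gives
\[
  \Pr\Bigl(A_j\ \Big|\ \bigwedge_{i\in T}\overline{A_i}\Bigr)\ \le\ \frac{p_j}{\Pr\bigl(\bigwedge_{i\in C}\overline{A_i}\mid\bigwedge_{i\in B}\overline{A_i}\bigr)}\,.
\]
Now expand the denominator by the chain rule over the elements of $C=\{c_1,\dots,c_r\}$; each factor $\Pr\bigl(\overline{A_{c_t}}\mid\bigwedge_{i\in B\cup\{c_1,\dots,c_{t-1}\}}\overline{A_i}\bigr)$ involves the event set $B\cup\{c_1,\dots,c_t\}$, which has fewer elements than $T\cup\{j\}$, so the induction hypothesis bounds it below by $q_\emptyset(B\cup\{c_1,\dots,c_t\})/q_\emptyset(B\cup\{c_1,\dots,c_{t-1}\})$. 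The product telescopes to $q_\emptyset(B\cup C)/q_\emptyset(B)=q_\emptyset(T)/q_\emptyset(B)$, and since $B=T\setminus N[j]$ we obtain, using the combinatorial recursion,
\[
  \Pr\Bigl(\overline{A_j}\ \Big|\ \bigwedge_{i\in T}\overline{A_i}\Bigr)\ \ge\ 1-\frac{p_j\,q_\emptyset(B)}{q_\emptyset(T)}=\frac{q_\emptyset(T)-p_j\,q_\emptyset\bigl(T\setminus N[j]\bigr)}{q_\emptyset(T)}=\frac{q_\emptyset(T\cup\{j\})}{q_\emptyset(T)}\,,
\]
which is exactly the claim; the positivity of $\Pr\bigl(\bigwedge_{i\in T\cup\{j\}}\overline{A_i}\bigr)$ comes along for free since $q_\emptyset(T\cup\{j\})>0$. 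This closes the induction.

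The step I expect to be the main obstacle is not any single calculation but the choice of the right inductive statement. A naive induction over subsets attempting to prove $\Pr\bigl(\bigwedge_{i\in S}\overline{A_i}\bigr)\ge q_\emptyset(S)$ directly stalls: the inclusion--exclusion identity $\Pr\bigl(\bigwedge_{i\in S}\overline{A_i}\bigr)=\Pr\bigl(\bigwedge_{i\in S\setminus\{j\}}\overline{A_i}\bigr)-\Pr\bigl(A_j\wedge\bigwedge_{i\in S\setminus\{j\}}\overline{A_i}\bigr)$ requires an \emph{upper} bound on the second term, whereas the induction hypothesis only supplies lower bounds. Recasting the target as the ratio estimate for conditional probabilities above is precisely what makes the induction self-sustaining: upper-bounding $\Pr(A_j\mid\cdots)$ is the same as lower-bounding $\Pr(\overline{A_j}\mid\cdots)$, and the neighbours of $j$ are peeled off one at a time, each within the scope of the induction. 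The only remaining fuss is the bookkeeping around strict positivity of $q_\emptyset$ on induced subgraphs, so that every division is legitimate; this is handled by the propagation fact together with the limiting argument mentioned above.
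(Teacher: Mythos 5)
The paper does not prove this theorem itself; it cites it directly from Shearer's 1985 paper, so there is no internal proof to compare against. Your proposal is a correct reconstruction of the standard argument (essentially Shearer's original one, also the one given in Scott--Sokal). Three quick observations on the details you left as remarks. First, the combinatorial recursion $q_\emptyset(S)=q_\emptyset(S\setminus\{j\})-p_j\,q_\emptyset(S\setminus N[j])$ is exactly right and follows by splitting independent sets of $G[S]$ according to whether they contain $j$. Second, the ``propagation'' fact that $q_I\ge 0$ on $G$ implies $q_I(S)\ge 0$ on every induced subgraph $G[S]$ is indeed standard but is not free: it amounts to the statement that Shearer's region is down-closed in $\pvector$ (set $p_j=0$ for $j\notin S$), which is a generalization of the paper's Lemma~\ref{lem:monotone} from the single quantity $q_\emptyset$ to all $q_I$; you should flag that this extra monotonicity needs its own short induction. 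Third, your identification of the right inductive invariant --- the conditional-probability ratio bound $\Pr\bigl(\overline{A_j}\mid\bigwedge_{i\in T}\overline{A_i}\bigr)\ge q_\emptyset(T\cup\{j\})/q_\emptyset(T)$ rather than the direct lower bound on $\Pr\bigl(\bigwedge_{i\in S}\overline{A_i}\bigr)$ --- is precisely the point that makes Shearer-type arguments go through, and your explanation of \emph{why} the naive induction stalls is accurate. The telescoping, the independence of $A_j$ from $B(T\setminus N[j])$, the chain-rule peeling of the neighbour set $C$, and the limiting argument at the boundary of the Shearer region are all sound.
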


In particular, if the condition holds with $q_{\emptyset}>0$, then $\Pr\left(\bigwedge_{i=1}^m \overline{A_i}\,\right)>0$.

Neither Theorem \ref{thm:LLL} nor Theorem \ref{thm:Shearer} yields an efficient algorithm to find the assignment avoiding all bad events,
since they only guarantee an exponentially small probability. 
There has been a long line of research devoted to an algorithmic version of LLL,
culminating in Moser and Tardos \cite{MT10} with essentially the same condition as in Theorem \ref{thm:LLL}.
The \resample\ algorithm of Moser and Tardos is very simple, described in Algorithm \ref{alg:Moser-Tardos}.

\begin{algorithm}
  \caption{The \resample\ algorithm}
  \label{alg:Moser-Tardos}
  \begin{enumerate}[itemsep=0.5em, topsep=0.5em]
    \item Draw independent samples of all variables $X_1,\dots,X_n$ from their respective distributions.
    \item While at least one $A_i$ holds, pick one such $A_i$ arbitrarily and resample all variables in $\vbl(A_i)$.
    \item Output the current assignment.
  \end{enumerate}
\end{algorithm}

In \cite{MT10}, Moser and Tardos showed that Algorithm \ref{alg:Moser-Tardos} finds a good assignment very efficiently.

\begin{theorem}[Moser and Tardos \cite{MT10}]
  Under the condition of Theorem \ref{thm:LLL},
  the expected number of resampling steps in~\Cref{alg:Moser-Tardos} is at most $\sum_{i=1}^m\frac{x_i}{1-x_i}$.
  \label{thm:Moser-Tardos}
\end{theorem}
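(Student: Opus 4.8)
The plan is to follow the witness-tree method of Moser and Tardos. Write $N_i$ for the number of steps at which \Cref{alg:Moser-Tardos} resamples $A_i$; since each step resamples exactly one event, the total number of resampling steps equals $\sum_{i=1}^m N_i$, so it suffices to prove $\Ex[N_i]\le x_i/(1-x_i)$ for every $i$ and then sum. To this end I would record the \emph{execution log} $C$ of a run, with $C(t)$ the index of the event resampled at step $t$, and attach to each step $t$ a labelled rooted tree $\tau_C(t)$ built backwards: start with a root labelled $C(t)$; then for $s=t-1,\dots,1$ in turn, if some node already in the tree has a label equal or adjacent in $G$ to $C(s)$, add a new child labelled $C(s)$ to the deepest such node (ties broken arbitrarily), and otherwise skip $s$. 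Write $[v]$ for the label of a node $v$ and $V(\tau)$ for the node set of $\tau$.

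Next I would establish the two structural facts underlying the method. First, every tree that can actually arise from a run is \emph{proper}: two distinct nodes whose labels are equal or adjacent must lie at different depths, since otherwise the one added later in the backward construction would have been hung strictly below the other; consequently the children of any node carry distinct, pairwise non-adjacent labels, each drawn from the inclusive neighbourhood $N^{+}(j):=\{j\}\cup\{k:(j,k)\in E\}$ of the parent's label $j$. Second, the map $t\mapsto\tau_C(t)$ is injective on the steps that resample a fixed event: the number of nodes of $\tau_C(t)$ carrying label $i$ is exactly $\abs{\{s\le t:C(s)=i\}}$ (every such step does receive a node, the root being an always-eligible attachment site), and this count is strictly monotone in $t$ along steps with $C(t)=i$. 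Combining these,
\[
  \Ex[N_i]=\sum_{t}\Pr[\text{step }t\text{ occurs and }C(t)=i]=\sum_{\tau}\Pr[\tau\text{ occurs}],
\]
where the sum runs over proper trees $\tau$ rooted at $i$ and ``$\tau$ occurs'' means $\tau_C(t)=\tau$ for some $t$; the second equality uses that the events $\{\tau_C(t)=\tau\}$, $t\ge 1$, are pairwise disjoint for each fixed $\tau$.

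The heart of the argument --- and the step I expect to be the main obstacle --- is the \emph{Witness Tree Lemma}: for any fixed proper tree $\tau$ rooted at $i$, $\Pr[\tau\text{ occurs}]\le\prod_{v\in V(\tau)}p_{[v]}$, where $p_j:=\Pr(A_j)$. I would prove it by coupling the algorithm with a fixed \emph{resampling table} whose $i$-th row is an i.i.d.\ sample sequence from the law of $X_i$, consumed from left to right each time $X_i$ is (re)sampled, so that the algorithm's trajectory is a deterministic function of the table. On the event that $\tau$ occurs, a ``$\tau$-check'' visits the nodes of $\tau$ in order of non-increasing depth and verifies, at a node $v$ with $j=[v]$, that the table entries current for $\vbl(A_j)$ immediately before the corresponding resampling form an assignment in $A_j$. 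The crucial point is that the blocks of entries inspected at distinct nodes are pairwise disjoint --- restricted to any single variable's row they form a strictly increasing chain of entries, since resampling any event containing that variable advances that row --- so the individual checks are mutually independent and the one at $v$ passes with probability exactly $p_{[v]}$, which yields the product bound.

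Finally I would bound $\sum_{\tau}\prod_{v\in V(\tau)}p_{[v]}$ over proper trees rooted at $i$. By the local-lemma hypothesis \eqref{eqn:LLL} we have $p_j\le\widetilde p_j:=x_j\prod_{(j,k)\in E}(1-x_k)$, so it suffices to bound the same sum with $\widetilde p$ in place of $p$. For this I would introduce the Galton--Watson process rooted at a node labelled $i$ in which a node labelled $j$ independently, for each $k\in N^{+}(j)$, has a child labelled $k$ with probability $x_k$ and no such child with probability $1-x_k$. A direct computation --- collecting, for each non-root node, the factor $x_{[\cdot]}$ contributed by its parent, and for each node the factors $1-x_k$ over those $k\in N^{+}([\cdot])$ not realised among its children --- shows that this process produces a given finite labelled tree $\tau$ with probability exactly $\frac{1-x_i}{x_i}\prod_{v\in V(\tau)}\widetilde p_{[v]}$. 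Since distinct trees correspond to disjoint events, $\sum_{\tau}\prod_{v}\widetilde p_{[v]}\le x_i/(1-x_i)$, hence $\Ex[N_i]\le x_i/(1-x_i)$; summing over $i$ completes the proof.
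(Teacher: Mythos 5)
The paper does not prove this result; it is imported verbatim from Moser and Tardos \cite{MT10}, so there is no in-paper argument to compare against. Your proposal correctly reconstructs the standard witness-tree proof: the backward construction with ``attach to the deepest eligible node'' giving properness, injectivity of $t\mapsto\tau_C(t)$ on the steps that resample a fixed $A_i$, the Witness Tree Lemma via the resampling-table coupling and disjointness of inspected cells, and the Galton--Watson bound. I checked the Galton--Watson computation: rearranging $\prod_v\bigl(\prod_{k\in W_v}x_k\bigr)\bigl(\prod_{k\in N^+([v])\setminus W_v}(1-x_k)\bigr)$ by pulling $x_k/(1-x_k)$ to the child and $\prod_{k\in N^+([v])}(1-x_k)$ to the node indeed yields $\frac{1-x_i}{x_i}\prod_v \widetilde p_{[v]}$ with $\widetilde p_j=x_j\prod_{(j,k)\in E}(1-x_k)$, matching the LLL hypothesis, and the rest follows. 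One small presentational point: in your statement of the $\tau$-check, ``the table entries current $\ldots$ immediately before the corresponding resampling'' sounds execution-dependent, whereas the whole point is that these entries are determined by $\tau$ alone (for a node $v$ at depth $d$ and variable $k\in\vbl(A_{[v]})$, the index into row $k$ is the number of strictly deeper nodes of $\tau$ whose event involves $k$); this is what makes the $\tau$-check a fixed event on the table, independent of the algorithm's trajectory. You do state the crucial disjointness correctly, so the argument goes through; I would just make that determinism explicit.
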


Unfortunately, the final output of Algorithm \ref{alg:Moser-Tardos} is not distributed 
as we would like, namely as a product distribution conditioned on avoiding all bad events.

In addition, Kolipaka and Szegedy~\cite{KS11} showed that up to Shearer's condition,
Algorithm \ref{alg:Moser-Tardos} is efficient.
Recall that $q_i:=q_{\{i\}}$ for $1\le i\le m$.

\begin{theorem}[Kolipaka and Szegedy~\cite{KS11}]
  If $q_I\ge 0$ for all $I\in\mathcal{I}$ and $q_{\emptyset}>0$,
  then the expected number of resampling steps in~\Cref{alg:Moser-Tardos} is at most $\sum_{i=1}^m\frac{q_i}{q_{\emptyset}}$.
  \label{thm:KS11}
\end{theorem}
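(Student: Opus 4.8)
The plan is to adapt the witness argument of Moser and Tardos~\cite{MT10}, in the sharper ``level-by-level'' form due to Kolipaka and Szegedy, and then to evaluate the resulting generating function exactly using Shearer's inclusion--exclusion identities. Throughout, for a subset $S\subseteq\{1,\dots,m\}$ write $p_S:=\prod_{i\in S}p_i$ and $\Gamma^+(S):=S\cup\{j:A_i\sim A_j\text{ for some }i\in S\}$ for its closed neighbourhood in $G$; for an induced subgraph $H$ of $G$ let $q_\emptyset(H)$ be Shearer's quantity $q_\emptyset$ computed inside $H$ with the inherited probabilities. Writing $N_i$ for the random number of times $A_i$ is resampled in a run of~\Cref{alg:Moser-Tardos}, the quantity to bound is $\Ex\!\left[\sum_{i=1}^m N_i\right]=\sum_{i=1}^m\Ex[N_i]$, so it suffices to prove $\Ex[N_i]\le q_i/q_\emptyset$ for each $i$.

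First I would set up the usual coupling with an idealised ``resampling table'': for each variable an infinite stream of independent fresh samples, the $t$-th of which is consumed at its $t$-th (re)sampling, so that running~\Cref{alg:Moser-Tardos} against the table reproduces its distribution. To the $k$-th resampling of $A_i$ I would attach a \emph{stable set sequence} $\sigma=(I_1,\dots,I_t)$: a sequence of non-empty independent sets of $G$ with $I_1=\{i\}$ and $I_{\ell+1}\subseteq\Gamma^+(I_\ell)$ for all $\ell$, obtained by tracing backwards through the execution log which earlier resamplings are responsible for the current one and organising them into levels. Two facts have to be established, exactly as in~\cite{MT10}: (i) distinct resamplings of $A_i$ produce distinct stable set sequences; and (ii) for a fixed $\sigma=(I_1,\dots,I_t)$, the probability that $\sigma$ occurs during a run is at most its \emph{weight} $w(\sigma):=\prod_{\ell=1}^{t}p_{I_\ell}$, since the entries of $\sigma$ index pairwise disjoint blocks of fresh table samples and, for $\sigma$ to occur, each event named in $\sigma$ must be realised by the corresponding block. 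Together these give $\Ex[N_i]\le T_i:=\sum_\sigma w(\sigma)$, the sum over all stable set sequences with $I_1=\{i\}$.

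Next I would evaluate $T_i$. For any independent set $S$ let $T(S)$ be the total weight of stable set sequences $(I_1,\dots,I_t)$ with $I_1=S$, so $T_i=T(\{i\})$; splitting a sequence after its first entry gives the monotone fixed-point system
\[
  T(S)=p_S\Bigl(1+\sum_{\emptyset\neq S'\subseteq\Gamma^+(S),\ S'\in\mathcal{I}}T(S')\Bigr),
\]
and $T$ is its least non-negative solution (the pointwise supremum of the iterates of the right-hand side applied to $0$). I claim $S\mapsto q_S/q_\emptyset$ is also a solution, whence $T(S)\le q_S/q_\emptyset$. The algebraic input is the identity
\[
  \sum_{S'\subseteq\Gamma^+(S),\ S'\in\mathcal{I}}q_{S'}(\pvector)=q_\emptyset\bigl(G\setminus\Gamma^+(S)\bigr)=\frac{q_S(\pvector)}{p_S}.
\]
For the first equality, expand each $q_{S'}$ as $\sum_{J\supseteq S',\,J\in\mathcal{I}}(-1)^{|J|-|S'|}p_J$, swap the order of summation, and use that $\sum_{S'\subseteq W}(-1)^{|S'|}$ vanishes whenever $W:=J\cap\Gamma^+(S)\neq\emptyset$; the surviving terms are exactly $\sum_{J\in\mathcal{I},\,J\cap\Gamma^+(S)=\emptyset}(-1)^{|J|}p_J=q_\emptyset(G\setminus\Gamma^+(S))$. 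The second equality is the standard factorisation obtained by writing an independent superset $J\supseteq S$ as $S\sqcup J'$ with $J'\in\mathcal{I}(G\setminus\Gamma^+(S))$. Substituting the identity into the fixed-point system verifies that $q_S/q_\emptyset$ is a solution; it is finite and non-negative because Shearer's condition is inherited by induced subgraphs, so $q_I(H)\ge0$ for all $I$ and all induced $H$, and iterating $q_\emptyset(H\setminus v)=q_\emptyset(H)+q_{\{v\}}(H)\ge q_\emptyset(H)$ gives $q_\emptyset(H)\ge q_\emptyset(G)>0$. Therefore $\Ex[N_i]\le T(\{i\})\le q_i/q_\emptyset$, and summing over $i$ yields $\Ex\!\left[\sum_{i=1}^m N_i\right]\le\sum_{i=1}^m q_i/q_\emptyset$.

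The main obstacle is the combinatorial part, namely the correct definition and analysis of the backward trace in the second step: one must show that the events responsible for a given resampling genuinely organise into a sequence of \emph{independent} sets satisfying the $\Gamma^+$ containment, that the map from resamplings to stable set sequences is injective, and that the occurrence probability of a fixed sequence is dominated by its weight. This is precisely where Moser--Tardos-style case analysis on the execution log is needed, refined so that the levels remain independent; plain witness trees are not good enough here, since they only yield the weaker (and, beyond the Moser--Tardos regime, sometimes infinite) bound. By contrast, the generating-function evaluation in the third step is routine once the two Shearer identities above are in hand, and no separate convergence argument is required: finiteness of $T(S)$ comes for free from the existence of the finite fixed point $q_S/q_\emptyset$, which is exactly where the hypotheses $q_I\ge0$ and $q_\emptyset>0$ are used.
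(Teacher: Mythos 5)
The paper never proves \Cref{thm:KS11}; it is imported verbatim from Kolipaka--Szegedy and cited. So there is no ``paper's own proof'' to compare against directly, but the surrounding development (Lemma \ref{lem:log-prob}, Corollary \ref{cor:pi-seq}, Theorem \ref{thm:exp-time}) establishes the exactly analogous bound \emph{with equality} for extremal instances, and it does so by a route quite different from yours. The paper analyses the \emph{forward} execution log, using \Cref{cond:diamond} to compute the probability of a given log exactly as $q_I\,p_{\mathcal S}$ (Lemma \ref{lem:log-prob}); summing and normalising then yields the combinatorial identity $\sum_{\mathcal S:\,S_1=I}p_{\mathcal S}=q_I/q_\emptyset$ (Corollary \ref{cor:pi-seq}) ``for free'' from the probability space. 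You instead reconstruct the original Kolipaka--Szegedy argument: a \emph{backward} witness extraction producing stable set sequences, an injective map from resamplings to sequences, a resampling-table coupling bounding each sequence's probability by its weight, and then a purely algebraic evaluation of $\sum_\sigma w(\sigma)$ via the monotone fixed-point system and Shearer's inclusion--exclusion identities. Both routes hinge on the same generating-function identity; the paper derives it probabilistically (valid only because extremal instances exist), while you derive it by an Abel-type summation swap and the factorisation $q_S=p_S\,q_\emptyset(G\setminus\Gamma^+(S))$, which is more elementary and does not invoke any probability space. Your algebra is correct (I checked both identities and the substitution), your fixed-point argument is sound, and you correctly flag that the combinatorial core --- the careful construction of the backward witness, its injectivity, and the disjointness of the table cells across levels --- is where the real work lies. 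The upshot is that your bound only requires $\le$ (not the paper's exact equality), which is precisely what lets you avoid \Cref{cond:diamond}; conversely, the paper's technique gives the stronger equality but only in the extremal regime. One small superfluity: you do not need the hereditary Shearer argument or the monotonicity $q_\emptyset(H)\ge q_\emptyset(G)$ --- the only facts required to certify $S\mapsto q_S/q_\emptyset$ as a finite non-negative fixed point are $q_S\ge0$ and $q_\emptyset>0$, both hypotheses of the theorem.
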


On the other hand, Wilson's cycle-popping algorithm \cite{Wilson96} is very similar to the \resample\ algorithm but it outputs a uniformly random rooted spanning tree.
Another similar algorithm is the sink-popping algorithm by Cohn, Pemantle, and Propp \cite{CPP02} to generate a sink-free orientation uniformly at random.
Upon close examination of these two algorithms, we found a common feature of both problems.

\begin{condition}
  If $(i,j)\in E$ $($or equivalently $A_i\sim A_j)$, then $\Pr(A_i\wedge A_j)=0$; 
  namely the two events $A_i$ and $A_j$ are disjoint if they are dependent.
  \label{cond:diamond}
\end{condition}

In other words, any two events $A_i$ and $A_j$ are either independent or disjoint.
These instances have been noticed in the study of \LLL{}.
They are the ones that minimize $\Pr\left(\bigwedge_{i=1}^m \overline{A_i}\,\right)$ given Shearer's condition
(namely $\Pr\left(\bigwedge_{i=1}^m \overline{A_i}\,\right) = q_{\emptyset}$).
Instances satisfying Condition \ref{cond:diamond} have been named \emph{extremal} \cite{KS11}.

We will show that, given Condition \ref{cond:diamond},
the final output of the \resample\ algorithm is a sample from 
a conditional product distribution (Theorem \ref{thm:prs}).
Moreover, we will show that under Condition \ref{cond:diamond},
the running time upper bound $\sum_{i=1}^m\frac{q_i}{q_{\emptyset}}$ given by Kolipaka and Szegedy (Theorem \ref{thm:KS11}) is indeed the exact expected running time.
See Theorem \ref{thm:exp-time}.

In fact, when Condition \ref{cond:diamond} holds,
at each step of Algorithm \ref{alg:Moser-Tardos}, the 
occurring events form an independent set of the dependency graph~$G$.
Think of the execution of Algorithm \ref{alg:Moser-Tardos} as going in rounds.
In each round we find the set~$I$ of bad events that occur.
Due to Condition \ref{cond:diamond}, $\vbl(A_i)\cap\vbl(A_j)=\emptyset$ for any $i,j\in I$,
i.e., $I$ is an independent set in the dependency graph.
Therefore, we can resample all variables involved in the occurring 
bad events without interfering with each other.
This motivates Algorithm \ref{ALG:PRS}.

We call Algorithm \ref{ALG:PRS} the \prs\ algorithm.
This name was coined by Cohn, Pemantle, and Propp \cite{CPP02}.
Indeed, they ask as an open problem how to generalize their sink-popping algorithm and Wilson's cycle popping algorithm.
We answer this question under the variable framework.
\prs\ differs from the normal rejection sampling algorithm by only resampling ``bad'' events.
Moreover, \Cref{ALG:PRS} is uniform only on extremal instances, and is a special case of \Cref{ALG:G-PRS} given in \Cref{sec:G-prs},
which is a uniform sampler for all instances.

\begin{algorithm}
  \caption{\prs\ for extremal instances}
  \label{ALG:PRS}
  \begin{enumerate}[itemsep=0.5em, topsep=0.5em]
    \item Draw independent samples of all variables $X_1,\dots,X_n$ from their respective distributions.
    \item While at least one bad event holds, find the independent set $I$ of occurring $A_i$'s.
      Independently resample all variables in $\bigcup_{i\in I}\vbl(A_i)$.
    \item Output the current assignment.
  \end{enumerate}
\end{algorithm}

In fact, Algorithm \ref{ALG:PRS} is the same as the parallel version of \Cref{alg:Moser-Tardos} by Moser and Tardos~\cite{MT10} for extremal instances.
Suppose each event is assigned to a processor, which determines whether the event holds by looking at the variables associated with the event.
If the event holds then all associated variables are resampled.
No conflict will be created due to Condition \ref{cond:diamond}.

In the following analysis, we will use the resampling table idea,
which has appeared in both the analysis of Moser and Tardos \cite{MT10} and Wilson \cite{Wilson96}.
Note that we only use this idea to analyze the algorithm rather than to really create the table in the execution.
Associate each variable $X_i$ with an infinite stack of random values $\{X_{i,1},X_{i,2},\dots\}$.
This forms the \emph{resampling table} where each row represents a variable and there are infinitely many columns,
as shown in Table \ref{tab:resampling}.
In the execution of the algorithm, when a variable needs to be resampled, 
the algorithm draws the top value from the stack, or equivalently moves from the current entry in the resampling table to its right.

\begin{table}  
  \caption{A resampling table with $4$ variables}
  \label{tab:resampling}
  \centering
  \begin{tabular}{|r|cccc|}
    \hline
    $X_1$ & $X_{1,1}$ & $X_{1,2}$ & $X_{1,3}$ & $\dots$\\
    \hline
    $X_2$ & $X_{2,1}$ & $X_{2,2}$ & $X_{2,3}$ & $\dots$\\
    \hline
    $X_3$ & $X_{3,1}$ & $X_{3,2}$ & $X_{3,3}$ & $\dots$\\
    \hline
    $X_4$ & $X_{4,1}$ & $X_{4,2}$ & $X_{4,3}$ & $\dots$\\
    \hline
  \end{tabular}
\end{table}

Let $t$ be a positive integer to denote the round of Algorithm \ref{ALG:PRS}.
Let $j_{i,t}$ be the index of the variable $X_i$ in the resampling table at round $t$.
In other words, at the $t$-th round, $X_i$ takes value $\rt{X}{t}$.
Thus, the set $\val_t=\{\rt{X}{t}\mid 1\le i\le n\}$ is the current assignment of variables at round $t$.
This $\val_t$ determines which events happen.
Call the set of occurring events, viewed as a subset of the vertex set of the dependency graph, $I_t$.
(For convenience, we shall sometimes identify the event $A_i$ with its index~$i$;  
thus, we shall refer to the ``events in $S$'' rather than the ``events indexed by $S$''.)
As explained above, $I_t$ is an independent set of~$G$ due to Condition \ref{cond:diamond}.
Then variables involved in any of the events in~$I_t$ are resampled.
In other words,
\begin{align*}
  j_{i,t+1}=
  \begin{cases}
    j_{i,t}+1 & \text{if }\exists \ell \in I_t \text{ such that } i\in\vbl(A_\ell);\\
    j_{i,t} & \text{otherwise.}
  \end{cases}
\end{align*}
Therefore, any event that happens in round $t+1$,
must share at least one variable with some event in~$I_t$ (possibly itself).
In other words, $I_{t+1}\subseteq \Gamma^{+}(I_t)$ where $\Gamma^+(\cdot)$ denotes the set of all neighbours of~$I$ unioned with $I$ itself.
This inspires the notion of independent set sequences (first introduced in \cite{KS11}).

\begin{definition}  \label{def:IS-seq}
A list $\mathcal{S}=S_1,S_2,\dots,S_\ell$ of independent sets in $G$ is called an \emph{independent set sequence} 
  if $S_i\neq\emptyset$ for all $1\le i\le \ell-1$ and for every $1\le i\le \ell-1$, $S_{i+1}\subseteq \Gamma^+(S_i)$.
\end{definition}

We adopt the convention that 
the empty list is an independent set sequence with $\ell=0$.
Note that we allow $S_{\ell}$ to be $\emptyset$.

Let $M$ be a resampling table.
Suppose running Algorithm \ref{ALG:PRS} on $M$ does not terminate up to some integer $\ell\ge 1$ rounds.
Define the \emph{log} of running Algorithm \ref{ALG:PRS} on $M$ up to round $\ell$ as the sequence of independent sets $I_1,I_2,\dots,I_{\ell}$ created by this run.
Thus, for any $M$ and $\ell\ge 1$, the log $I_1,I_2,\dots,I_{\ell}$ must be an independent set sequence.
Moreover, if \Cref{ALG:PRS} terminates at round $T$,
let $\sigma_{t}:=\sigma_{T}$ if $t>T$.
Denote by $\mu(\cdot)$ the product distribution of all random variables.

\begin{lemma}
  Suppose Condition \ref{cond:diamond} holds.
  Given any log $\mathcal{S}=S_1,S_2,\dots,S_\ell$ of length $\ell\ge 1$ and 
  conditioned on seeing the log $\mathcal{S}$,
  $\val_{\ell+1}$ is a random sample from the product distribution
  conditioned on the event $\bigwedge_{i \in [m] \setminus  \Gamma^{+} (S_\ell)} \overline{A_i}$,
  namely from $\mu\big(\cdot \mid \bigwedge_{i \in [m] \setminus  \Gamma^{+} (S_\ell)} \overline{A_i}\big)$.
  \label{lem:diamond}
\end{lemma}

We remark that Lemma~\ref{lem:diamond} is not true for non-extremal instances (that is, if \Cref{cond:diamond} fails).
In particular, Lemma~\ref{lem:diamond} says that given any log, every valid assignment is not only reachable, but also with the correct probability.
This is no longer the case for non-extremal instances --- some valid assignments from the desired conditional product distribution could be ``blocked'' under the log $\mathcal{S}$.
In \Cref{sec:G-prs} we show how to instead achieve uniformity by resampling an ``unblocking'' set of bad events.

\begin{proof}
  The set of occurring events at round~$\ell$ is $S_\ell$.
  Hence $\val_{\ell+1}$ does not make any of the $A_i$'s happen where $i\notin \Gamma^{+}(S_\ell)$.
  Call an assignment $\val$ \emph{valid} if none of the $A_i$'s happen where $i\notin \Gamma^{+}(S_\ell)$.
  To show that $\val_{\ell+1}$ has the desired conditional product distribution,
  we will show that the probabilities of getting any two valid assignments $\val$ and $\val'$ are 
  proportional to their probabilities of occurrence in $\mu(\cdot)$.

  Let $M$ be the resampling table so that the log of the algorithm is $\mathcal{S}$ up to round $\ell\ge 1$, and $\val_{\ell+1}=\val$.
  Indeed, since we only care about events up to round $\ell+1$, we may truncate the table so that $M=\{X_{i,j}\;\vert\; 1\le i\le n,\;\; 1\le j\le j_{i,\ell+1}\}$.
  Let $M'=\{X_{i,j}'\;\vert\; 1\le i\le n,\;\; 1\le j\le j_{i,\ell+1}\}$ be another table where $X_{i,j}'=X_{i,j}$ if $j< j_{i,\ell+1}$ for any $i\in[n]$, but $\val_{\ell+1}=\val'$.
  In other words, we only change the values in the final round $(\rt{X}{\ell+1})$, and 
  only to another valid assignment.

  We claim that the algorithm running on $M'$ generates the same log $\mathcal{S}$.
  The lemma then follows by the following argument.
  Assuming the claim holds, then conditioned on the log $\mathcal{S}$, 
  every possible table $M$ such that $\val_{\ell+1}=\val$ is one-to-one corresponding to another table $M'$ so that $\val_{\ell+1}=\val'$.
  It implies that for every pair of valid assignments $\val,\val'$, 
  there is a bijection between the resampling tables resulting in them.
  The ratio between the probability of two corresponding tables is exactly the ratio between the probabilities of $\val$ and $\val'$ under $\mu(\cdot)$.
  Since the probability of getting a particular~$\val$ in round $\ell+1$ is the sum over the probabilities of all resampling tables (conditioned on the log $\mathcal{S}$) leading to $\val$, 
  the probability of getting $\val$ is also proportional to its weight under $\mu(\cdot)$.

  Suppose the claim fails and the logs obtained by running the algorithm on $M$ and~$M'$ differ.
  Let $t_0\le \ell$ be the first round where resampling changes.
  Without loss of generality, let $A$ be an event that occurs in $S_{t_0}$ on $M'$ but not on $M$.
  Moreover, there must be a non-empty set of variables $Y \subseteq \vbl(A)$ that have values $(\rt{X}{\ell+1})$,
  as otherwise the two runs would be identical.
  Since resampling history does not change before $t_0$, in the $M'$ run, the assignment of variables in $Y$ must be $(\rt{X'}{\ell+1})$ at time $t_0$.

  We claim that $Y=\vbl(A)$.
  If the claim does not hold, then $Z:=\vbl(A)\setminus Y\neq\emptyset$.
  Any variable in $Z$ has not reached final round, and must be resampled in the $M$ run.
  Let $X_j\in Z$ be the first such variable being resampled at or after round $t_0$ in the $M$ run.
  (The choice of $X_j$ may not be unique, and we just choose an arbitrary one.)
  Recall that $Y \neq \emptyset$, $A$ can no longer happen, thus there must be $A' \neq A$ causing such a resampling of $X_j$.
  Then $\vbl(A)\cap \vbl(A')\neq\emptyset$.
  Consider any variable $X_k$ with $k\in \vbl(A)\cap \vbl(A')$.
  It is resampled at or after time $t_0$ in the $M$ run due to $A'$.
  Hence $X_k\in Z$ for any such $k$.
  Moreover, in the $M$ run, until $A'$ happens, $X_k$ has not been resampled since time $t_0$,
  because $A'$ is the first resampling event at or after time $t_0$ that involves variables in $Z$.
  On the other hand, in the $M'$ run, $X_k$'s value causes $A$ to happen at time $t_0$.
  Hence, there exists an assignment on variables in $\vbl(A)\cap \vbl(A')$ such that both $A$ and $A'$ happen.
  Clearly this assignment can be extended to a full assignment so that both $A$ and $A'$ happen.
  However, $A\sim A'$ as they share the variable $X_{j}$.
  Due to Condition \ref{cond:diamond}, $A\cap A'=\emptyset$. Contradiction!
  Therefore the claim holds.

  We argue that the remaining case, $Y=\vbl(A)$, is also not possible.
  Since $A$ occurs in the $M'$~run, we know, by the definition of $\sigma'$, 
  that $A\in\Gamma^+(S_\ell)$.  Thus, some event whose variables intersect with 
  those in~$A$ must occur in the $M$ run.  But when the algorithm attempts to update 
  variables shared by these two events in the $M$ run, it will access values beyond the final round of the
  resampling table, a contradiction.
\end{proof}

\begin{theorem}
  When Condition \ref{cond:diamond} holds and Algorithm \ref{ALG:PRS} halts,
  its output is the product distribution $\mu(\cdot)$ conditioned on avoiding all bad events.
  \label{thm:prs}
\end{theorem}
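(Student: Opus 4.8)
The plan is to deduce \Cref{thm:prs} from \Cref{lem:diamond} by conditioning on the entire execution log of \Cref{alg:prs} together with the number of resampling rounds it performs, and then averaging over these. Write $\mu$ for the product distribution of $(X_1,\dots,X_n)$, and call an assignment \emph{good} if it causes none of the bad events $A_1,\dots,A_m$ to occur. The target distribution is $\mu$ conditioned on the assignment being good; since the algorithm, when it halts, outputs the current assignment $\val_{T+1}$ at the first round $T+1$ with $I_{T+1}=\emptyset$, its output is always good, and it suffices to show that, conditioned on halting, the output has the correct relative weights.

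First I would treat the degenerate case $T=0$, i.e.\ the initial assignment $\val_1$ is already good. Conditioning on this event is exactly conditioning the sample $\val_1\sim\mu$ on being good, which is the target distribution, so there is nothing more to do in this case. (This case is separated out only because \Cref{lem:diamond} requires $\ell\ge 1$.)

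Now fix $\ell\ge 1$ and a nonempty independent set sequence $\mathcal{S}=S_1,\dots,S_\ell$, and condition on the event $E_{\mathcal{S}}$ that the first $\ell$ rounds produce occurring-event sets $I_1=S_1,\dots,I_\ell=S_\ell$ and that the algorithm halts in round $\ell+1$. The key observation is that $E_{\mathcal{S}}$ is the intersection of ``the log up to round $\ell$ is $\mathcal{S}$'' with ``$\val_{\ell+1}$ is good'', and that, given the former, the latter is an event depending only on $\val_{\ell+1}$; indeed, as already noted inside the proof of \Cref{lem:diamond}, $\val_{\ell+1}$ automatically avoids every $A_i$ with $i\notin\Gamma^+(S_\ell)$, so ``$\val_{\ell+1}$ is good'' is equivalent to ``$\val_{\ell+1}$ avoids every $A_i$ with $i\in\Gamma^+(S_\ell)$.'' By \Cref{lem:diamond}, conditioned on the log being $\mathcal{S}$, the assignment $\val_{\ell+1}$ is distributed as $\mu$ conditioned on avoiding all $A_i$ with $i\notin\Gamma^+(S_\ell)$. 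Conditioning this distribution further on the event that $\val_{\ell+1}$ avoids the remaining events $A_i$, $i\in\Gamma^+(S_\ell)$ --- using only that successive conditioning on two events equals conditioning on their intersection --- yields $\mu$ conditioned on being good. Hence, conditioned on $E_{\mathcal{S}}$, the output $\val_{\ell+1}$ has exactly the target distribution.

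Finally, the event $\{T=0\}$ together with the events $E_{\mathcal{S}}$ (over all $\ell\ge 1$ and all nonempty independent set sequences $\mathcal{S}$ of length $\ell$) partition the event that the algorithm halts, and on each piece the conditional distribution of the output is the same, namely $\mu$ conditioned on being good; by the law of total probability the unconditional output distribution, given that the algorithm halts, is $\mu$ conditioned on being good. I expect the only delicate point to be the bookkeeping in the previous paragraph --- verifying that, given the log, the halting event depends only on $\val_{\ell+1}$, so that it can be layered on top of the conclusion of \Cref{lem:diamond} without disturbing the product structure; everything else is routine.
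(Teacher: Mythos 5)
Your proposal is correct and follows essentially the same route as the paper's proof: condition on the log $\mathcal{S}$, invoke \Cref{lem:diamond} to get the product distribution conditioned on $\bigwedge_{i\notin\Gamma^+(S_\ell)}\overline{A_i}$, further condition on the output avoiding the remaining bad events to arrive at the target distribution, and average over logs. You have merely spelled out two points the paper leaves implicit --- the degenerate $T=0$ case (where \Cref{lem:diamond} does not apply) and the final law-of-total-probability step --- which is sound bookkeeping but not a different argument.
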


\begin{proof}
%
  Let an independent set sequence $\mathcal{S}$ of length $\ell$ be the log of any successful run. 
  Then  $S_{\ell}=\emptyset$.
  By Lemma \ref{lem:diamond}, conditioned on the log $\mathcal{S}$, the output assignment $\sigma$ 
  is $\mu\big(\cdot \mid \bigwedge_{i \in [m] \setminus  \Gamma^{+} (S_\ell)} \overline{A_i}\big)
  = \mu\big(\cdot \mid \bigwedge_{i \in [m]} \overline{A_i}\big)$.
  This is valid for any possible log,
  and the theorem follows.
\end{proof}

In other words, let $\Sigma$ be the set of assignments that avoid all bad events.
Let $U$ be the output of Algorithm \ref{ALG:PRS}.  In the case that all variables are
sampled from the uniform distribution, 
we have $\Pr(U=\sigma)=\frac{1}{|\Sigma|}$, for all $\sigma\in\Sigma$.

\section{Expected running time of Algorithm \ref{ALG:PRS}}

In this section we give an explicit formula for the running time of Algorithm \ref{ALG:PRS}.
We assume that \Cref{cond:diamond} holds throughout the section.

We first give a combinatorial explanation of $q_I$ for any independent set $I$ of the dependency graph $G$.
To simplify the notation, we denote the event $\bigwedge_{i\in S}A_i$, i.e.,
the conjunction of all events in~$S$, by $A(S)$.  

For any set~$I$ in the dependency graph, 
we denote by $p_I$ the probability $\Pr_\mu(A(I))$ that all events in~$I$ happen
(and possibly some other events too).
If $I$ is an independent set in the dependency graph, any two events in $I$ are independent and
\begin{align}
  p_I=\prod_{i\in I}p_i.
  \label{eqn:pi}
\end{align}
Moreover, for any set $J$ of events that is not an independent set, 
we have $p_J=0$ due to~\Cref{cond:diamond}.

On the other hand, the quantity $q_I$ is in fact the probability 
that exactly the events in $I$ happen and no others do.
This can be verified using inclusion-exclusion, together with Condition \ref{cond:diamond}:
\begin{align}
  \Pr_\mu\left(\bigwedge_{i\in I}A_i\wedge\bigwedge_{i\notin I}\overline{A_i}\right) & = \sum_{J\supseteq I}(-1)^{|J\setminus I|}p_J \notag\\
  & = \sum_{J\in\mathcal{I},\ J\supseteq I}(-1)^{|J\setminus I|}p_J = q_I,
  \label{eqn:qi}
\end{align}
where $\mathcal{I}$ denotes the collection of all independent sets of~$G$.
Since the events $(\bigwedge_{i\in I}A_i\wedge\bigwedge_{i\notin I}\overline{A_i})$ are mutually exclusive for different $I$'s,
\begin{align*}
  \sum_{I\in\mathcal{I}} q_I=1.
\end{align*}
Moreover, since the event $A(I)$ is the union over $J\supseteq I$ of the events 
$(\bigwedge_{i\in J}A_i\wedge\bigwedge_{i\notin J}\overline{A_i})$, we have
\begin{align}
  p_I=\sum_{J\in\mathcal{I},\ J\supseteq I} q_J.
  \label{eqn:pi-qi}
\end{align}

\begin{lemma}\label{lem:log-prob}
  Assume \Cref{cond:diamond} holds.  
  Let $\mathcal{S}=S_1,\ldots,S_\ell$ be an independent set sequence of length $\ell>0$.
  Then in \Cref{ALG:PRS},
  \begin{align*}
    \Pr\big(\text{the log is $\mathcal{S}$ up to round $\ell$}\big)=q_{S_{\ell}}\prod_{t=1}^{\ell-1}p_{S_t}.
  \end{align*}
\end{lemma}
\begin{proof}
  Clearly, if $q_{S_{\ell}}=0$, then the said sequence will never happen.
  We assume that $q_{S_{\ell}}>0$ in the following.

  Recall that $\mu$ is the product distribution of sampling all variables independently.
  We need to distinguish the probability space with respect to $\mu$ from that with respect to the execution of the algorithm.
  We write $\PrPRS(\cdot)$ to refer to the algorithm,
  and write $\Pr_\mu(\cdot)$ to refer to the (static) space with respect to~$\mu$.
  As noted before, to simplify the notation we will use $A(S)$ 
  to denote the event $\bigwedge_{i\in S} A_i$, where $S\subseteq[m]$.
  In addition, $B(S)$ will be used to denote $\bigwedge_{i\in S} \overline{A_i}$.
  For $I\in\mathcal{I}$, define 
  $$\partial I:=\Gamma^+(I)\setminus I,\quad I^e:=[m]\setminus\Gamma^+(I),\quad
  \text{and}\quad I^c:=[m]\setminus I=\partial I\cup I^e.$$
  So $\partial I$ is the ``boundary'' of $I$,
  comprising events that are not in~$I$ but which depend on~$I$, and $I^e$ is the ``exterior''
  of $I$, comprising events that are independent of all events in~$I$.  The complement $I^c$
  is simply the set of all events not in~$I$.  Note that $B(I^c)=B(\partial I)\wedge B(I^e)$.
  As examples of the notation, 
  $\Pr_\mu(A(I))=\prod_{i\in I}p_i=p_I$ is the probability that all events in~$I$
  occur under~$\mu$, and 
  $\Pr_\mu(A(I)\wedge B(I^c))=q_I$ is the probability that exactly the events in~$I$
  occur. 
  
  By the definition of $I^e$, we have that
  \begin{align}
    \Pr_\mu\left(B(I^e)\mid A(I)\right)=\Pr_\mu(B(I^e)),
    \label{eqn:not-BI-cond-I}
  \end{align}
  and, by Condition \ref{cond:diamond}, that
  \begin{align}
    A(I)\wedge B(\partial I)=A(I).
    \label{eqn:BI-wedge-I}
  \end{align}
  Hence for any $I\in\mathcal{I}$,
  \begin{align}
    q_{I} & = \Pr_\mu\big(A(I)\wedge B(I^c)\big)\notag\\
    & = \Pr_\mu\big(A(I)\wedge B(\partial I) \wedge B(I^e)\big) \notag\\
    & = \Pr_\mu\big(A(I)\wedge B(I^e)\big) &&\text{by \eqref{eqn:BI-wedge-I}} \notag\\
    & = \Pr_\mu\left(A(I)\right)\Pr_\mu\left(B(I^e)\right). &&\text{by \eqref{eqn:not-BI-cond-I}}\label{eqn:q_I}    
  \end{align}

  We prove the lemma by induction. 
  It clearly holds when $\ell=1$.
  At round $\ell\ge 2$, since we only resample variables that are involved in $S_{\ell-1}$,
  we have that $S_{\ell}\subseteq\Gamma^+(S_{\ell-1})$.
  Moreover, variables are not resampled in any $A_i$ where $i\in S_{\ell-1}^e$, and hence
  \begin{align}\label{eqn:St+1}
    B(S_{\ell}^c) \wedge B(S_{\ell-1}^e) = B(S_{\ell}^c).
  \end{align}
  Conditioned on $S_{\ell-1}$, by Lemma \ref{lem:diamond},
  the distribution of $\val_{\ell}$ at round $\ell$ is the product distribution conditioned 
  on none of the events outside of $\Gamma^+(S_{\ell-1})$ occuring; 
  namely, it is $\Pr_\mu\big( \cdot\mid B(S_{\ell-1}^e) \big)$.
  Thus the probability of getting $S_{\ell}$ in round $\ell$ is
  \begin{align}
    &\PrPRS\big(A(S_{\ell})\wedge B(S_{\ell}^c)\text{ holds in round $\ell$}\bigm|
       \text{prior log is }S_1,\ldots,S_{\ell-1}\big)\notag\\
    &\qquad{}=\Pr_{\mu}\big( A(S_{\ell})\wedge B(S_{\ell}^c)\mid B(S_{\ell-1}^e) \big) \notag \\
    &\qquad{} = \frac{\Pr_\mu\big(A(S_{\ell})\wedge B(S_{\ell}^c)\wedge B(S_{\ell-1}^e)\big)}{\Pr_\mu(B(S_{\ell-1}^e))} \notag\\
    &\qquad{} = \frac{\Pr_\mu\big(A(S_{\ell})\wedge B(S_{\ell}^c)\big)}{\Pr_\mu(B(S_{\ell-1}^e))} 
       &&\text{by \eqref{eqn:St+1}}\notag\\
    &\qquad{} = \frac{q_{S_{\ell}}}{\Pr_\mu(B(S_{\ell-1}^e))}.
    \label{eqn:rec-Si}
  \end{align}
  By \eqref{eqn:rec-Si} and the induction hypothesis, we have
  \begin{align*}
    \PrPRS\left( \text{the log is $\mathcal{S}$ up to round $\ell$} \right) 
    & = \frac{q_{S_{\ell}}}{\Pr_\mu(B(S_{\ell-1}^e))}\cdot q_{S_{\ell-1}}\prod_{t=1}^{\ell-2}p_{S_t}\\
    & = q_{S_{\ell}}\prod_{t=1}^{\ell-1}p_{S_t},
  \end{align*}
  where to get the last line we used \eqref{eqn:q_I} on $S_{\ell-1}$.
\end{proof}

Essentially the proof above is a delayed revelation argument.
At each round $1\le t\le \ell-1$, we only reveal variables that are involved in $S_t$.
Thus, at round $\ell$, we have revealed all variables that are involved in $\mathcal{S}$.
With respect to these variables, the sequence $\mathcal{S}$ happens with probability $p_{\mathcal{S}}$. 
Condition \ref{cond:diamond} guarantees that what we have revealed so far does not interfere with the final output (cf.\ Lemma \ref{lem:diamond}).
Hence the final state happens with probability $q_{S_{\ell}}$.

We write $p_{\mathcal{S}}=\prod_{i=1}^{\ell}p_{S_i}$ for an independent set sequence $\mathcal{S}$ of length $\ell\ge 0$.
Note the convention that $p_{\mathcal{S}}=1$ if $\mathcal{S}$ is empty and $\ell=0$.
Then, Lemma \ref{lem:log-prob} implies the following equality,
which is first shown by Kolipaka and Szegedy \cite{KS11} in the more general (not necessarily extremal) setting of the local lemma.

\begin{corollary}
  Assume \Cref{cond:diamond} holds.
  If $q_{\emptyset}>0$, then
  \begin{align*}
    \sum_{\mathcal{S}\text{ s.t.\ }S_1=I} p_{\mathcal{S}}q_{\emptyset}=q_{I},
  \end{align*}
  where $\mathcal{S}$ is an independent set sequence and $I$ is an independent set of $G$.
  \label{cor:pi-seq}
\end{corollary}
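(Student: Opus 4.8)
The plan is to sum the identity of Lemma~\ref{lem:log-prob} over all relevant logs and final states, and then recognise the left-hand side as a probability that telescopes to~$1$. Concretely, fix a non-empty independent set~$I$, and consider running Algorithm~\ref{alg:prs}. Since $q_\emptyset>0$, by Theorem~\ref{thm:prs} (together with Theorem~\ref{thm:KS11}, whose hypotheses follow from $q_I\ge 0$ for extremal instances --- indeed here all $q_I$ are genuine probabilities by~\eqref{eqn:qi}, so they are automatically non-negative) the algorithm halts with probability~$1$. I would first argue that, with probability~$1$, the first round produces some occurring set of bad events; when that set is exactly~$I$, the run has a well-defined (possibly empty, if it halts after one round) continuation. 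The key observation is that the events ``the first round produces exactly~$I$, the log up to round $\ell+1$ is $\mathcal{S}=(S_1,\dots,S_\ell)$ with $S_1=I$, and $I_{\ell+1}$ takes some particular value'' partition the event ``$I_1=I$'' as $\ell$ ranges over $\ell\ge 1$ and the data range over all compatible choices --- with the $\ell=0$, empty-continuation case corresponding to the run halting immediately after round~$1$, i.e.\ $I_2=\emptyset$.

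The main step is then a bookkeeping identity. By Lemma~\ref{lem:log-prob}, the probability that after $\ell$ rounds the log is $\mathcal{S}$ and $I_{\ell+1}=I'$ equals $q_{I'}p_{\mathcal{S}}$. Summing over all $I'\in\mathcal{I}$ compatible with the final set of~$\mathcal{S}$ (i.e.\ $I'\subseteq\Gamma^+(S_\ell)$, including $I'=\emptyset$ which signals termination) and using $\sum_{I'\in\mathcal{I}}q_{I'}=1$ --- more precisely the ``local'' version $\sum_{I'\subseteq\Gamma^+(S_\ell),\,I'\in\mathcal I}q_{I'}=1$, which holds because from any assignment exactly one such set of events occurs --- I obtain that the probability the log is exactly $\mathcal{S}$ (and the algorithm then does whatever it does next) is $p_{\mathcal{S}}$. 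Summing that over all logs $\mathcal{S}$ with $S_1=I$ gives the total probability that $I_1=I$, because the logs with a fixed first entry $I$ are mutually exclusive and exhaust the event $\{I_1=I\}$. On the other hand, by the first-round calculation inside the proof of Lemma~\ref{lem:log-prob} (all variables fresh, distribution $\mu$), the probability that exactly the events in $I$ occur in round~$1$ is $q_I$. Equating the two expressions yields $\sum_{\mathcal{S}:\,S_1=I}p_{\mathcal{S}}=q_I$, and dividing through is unnecessary --- but to match the stated form I would instead peel off the relation as $\sum_{\mathcal{S}:\,S_1=I}p_{\mathcal{S}}q_\emptyset = q_I q_\emptyset$?

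Let me restate this more carefully, since the $q_\emptyset$ factor in the statement needs to be accounted for. The cleanest route is: Lemma~\ref{lem:log-prob} with final state $I'=\emptyset$ says $\Pr(\text{log is }\mathcal{S}\text{ and }I_{\ell+1}=\emptyset)=q_\emptyset\,p_{\mathcal{S}}$, and the event $I_{\ell+1}=\emptyset$ is exactly the event that the algorithm halts at the end of round~$\ell$. So $\sum_{\mathcal{S}:\,S_1=I}q_\emptyset\,p_{\mathcal{S}}$ is the total probability that the algorithm halts at \emph{some} round, having had $I_1=I$ as its first occurring set --- and since the algorithm halts almost surely, this equals $\Pr(I_1=I)=q_I$, giving the claimed identity. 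The only genuine obstacle is justifying the interchange of the (infinite) sum over $\ell$ and the probability, i.e.\ that the halting-at-round-$\ell$ events for $\ell\ge 1$ genuinely partition $\{I_1=I\}$ up to a null set; this follows from almost-sure termination, which in turn needs $q_\emptyset>0$ and the finite expected running time of Theorem~\ref{thm:KS11}. I would state that dependence explicitly and otherwise treat the summation as routine.
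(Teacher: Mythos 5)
Your final restatement is correct and takes the same route as the paper: apply Lemma~\ref{lem:log-prob} with final state $\emptyset$, sum over all logs $\mathcal{S}$ with $S_1=I$, observe that these halting events partition $\{I_1=I\}$ up to the null non-termination event, and conclude that the total equals $\Pr(I_1=I)=q_I$. One small remark: the paper justifies almost-sure termination by the direct geometric bound $\Pr(\text{no halt by round }t)\le(1-q_\emptyset)^t$ rather than by invoking Theorem~\ref{thm:KS11}, and your abandoned first route rested on a false ``local'' identity $\sum_{I'\subseteq\Gamma^+(S_\ell),\,I'\in\mathcal I}q_{I'}=1$ (the $q_{I'}$'s sum to $1$ over \emph{all} of $\mathcal I$, not over subsets of $\Gamma^+(S_\ell)$) --- but since you discarded that path yourself, neither point affects the validity of your final argument.
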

\begin{proof}
  First we claim that if $q_{\emptyset}>0$, then Algorithm \ref{ALG:PRS} halts with probability $1$.
  Conditioned on any log $\mathcal{S}= S_1,\dots,S_{\ell-1}$, by Lemma \ref{lem:diamond},
  the distribution of $\val_{\ell}$ at round $\ell$ is $\mu\big( \cdot\mid B(S_{\ell-1}^e) \big)$.
  The probability of getting a desired assignment is thus $\mu\big( B([m]) \mid B(S_{\ell-1}^e) \big)=\frac{\mu(B([m]))}{\mu(B(S_{\ell-1}^e))}\ge\mu(B([m]))=q_{\emptyset}$.
  Hence the probability that the algorithm does not halt at time $t$ is at most $(1-q_{\emptyset})^t$,
  which goes to $0$ as $t$ goes to infinity.

  Then we apply Lemma \ref{lem:log-prob} when $\emptyset$ is the final independent set.
  The left hand side is the total probability of all possible halting logs 
  whose first independent set is exactly~$I$.
  This is equivalent to getting exactly $I$ in the first step, which happens with probability $q_{I}$.
\end{proof}

As a sanity check, the probability of all possible logs should sum to $1$ when $q_{\emptyset}>0$ and the algorithm halts with probability $1$.
Indeed, by Corollary \ref{cor:pi-seq},
\begin{align*}
  \sum_{\mathcal{S}}p_{\mathcal{S}}q_{\emptyset}&=\sum_{I\in\mathcal{I}}
  \;\;\sum_{\mathcal{S}\text{ s.t.\ }S_1=I}p_{\mathcal{S}}q_{\emptyset}=\sum_{I\in\mathcal{I}}q_I=1,
\end{align*}
where $\mathcal{S}$ is an independent set sequence.
In other words,
\begin{align}\label{eqn:sum-traces}
  \sum_{\mathcal{S}}p_{\mathcal{S}}=\frac{1}{q_{\emptyset}},
\end{align}
where $\mathcal{S}$ is an independent set sequence.
This fact is also observed by Knuth \cite[Page 86, Theorem F]{Knu15} and Harvey and Vondr{\'{a}}k \cite[Corollary 5.28]{HV15} in the more general settings.
Our proof here gives a combinatorial explanation of this equality.

Equation \eqref{eqn:sum-traces} holds whenever $q_{\emptyset}>0$.
Recall that $q_{\emptyset}$ is the shorthand of $q_{\emptyset}(\pvector)$, which is
\begin{align}  \label{eqn:q-empty}
  q_{\emptyset}(\pvector)=\sum_{I\in\mathcal{I}}(-1)^{|I|}\prod_{i\in I}p_i,
\end{align}
where $\mathcal{I}$ is the collection of independent sets of the dependency graph $G$.

\begin{lemma}\label{lem:monotone}
  Assume \Cref{cond:diamond} holds.
  If $q_{\emptyset}(\pvector)>0$, then $q_{\emptyset}(p_1,\dots,p_i z,\dots,p_m)>0$ for any $i\in [m]$ and $0\le z\le 1$.
\end{lemma}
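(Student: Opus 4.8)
The statement is that $q_\emptyset(\pvector)>0$ is preserved when one coordinate $p_i$ is scaled down by a factor $z\in[0,1]$. The plan is to exhibit $q_\emptyset$ as an \emph{affine} (degree-one) function of each individual variable $p_i$ and then argue monotonicity in the right direction. Concretely, from \eqref{eqn:q-empty}, group the independent sets according to whether they contain $i$:
\begin{align*}
  q_\emptyset(\pvector)=\sum_{I\in\mathcal{I},\, i\notin I}(-1)^{|I|}\prod_{j\in I}p_j \;+\; p_i\sum_{I\in\mathcal{I},\, i\in I}(-1)^{|I|}\prod_{j\in I,\, j\neq i}p_j.
\end{align*}
Writing $a=\sum_{I\ni i}(-1)^{|I|-1}\prod_{j\in I, j\ne i}p_j\ge 0$? — not obviously — this shows $q_\emptyset$ is of the form $\alpha - p_i\beta$ where $\alpha=q_\emptyset(\pvector)|_{p_i=0}$ and $\beta\ge 0$ because it is a sum over independent sets $I'$ (namely $I\setminus\{i\}$) in the graph with vertex $i$ deleted, times the appropriate sign, i.e. $\beta = q_\emptyset$ of the restricted instance on $\Gamma^+(\{i\})^c$... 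I need to be careful: the sets $I\setminus\{i\}$ with $I\ni i$ an independent set are exactly the independent sets of $G$ contained in $[m]\setminus\Gamma^+(\{i\})$, so $\beta = q^{(i)}_\emptyset$, the $q_\emptyset$ quantity of the sub-instance obtained by deleting $i$ and all its neighbours.

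The key point is then: when we replace $p_i$ by $p_i z$, the value becomes $\alpha - p_i z\,\beta$, which is a convex combination (since $z\in[0,1]$) of $\alpha - 0\cdot\beta = \alpha = q_\emptyset(p_1,\dots,0,\dots,p_m)$ and $\alpha - p_i\beta = q_\emptyset(\pvector)$. So the claim reduces to showing that \emph{both} endpoint values are positive: $q_\emptyset(\pvector)>0$ is the hypothesis, so it remains to show $q_\emptyset(\pvector)|_{p_i=0}>0$. But $q_\emptyset$ with $p_i=0$ is precisely $q_\emptyset$ of the instance with event $A_i$ removed entirely (every independent set containing $i$ contributes a term with factor $p_i=0$), and this is a sub-instance in the sense of Shearer's theory. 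The monotonicity I want — that Shearer's condition, or at least positivity of $q_\emptyset$, is inherited by removing an event — is a standard fact; the cleanest self-contained route is an induction on $m$ combined with the affine decomposition above.

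Here is the inductive structure I would use. Induct on the number of events $m$. For $m=0$, $q_\emptyset=1>0$ trivially. For the inductive step, fix $i$; by the affine decomposition $q_\emptyset(p_1,\dots,p_iz,\dots,p_m)= (1-z)\,q_\emptyset(\pvector)|_{p_i=0} + z\,q_\emptyset(\pvector)$. The second term is positive by hypothesis (for $z>0$; for $z=0$ it is absent). For the first term, observe that $q_\emptyset(\pvector)|_{p_i=0}$ equals $q_\emptyset$ of the $(m-1)$-event instance $G-i$ with probabilities $(p_j)_{j\neq i}$. I then need that this $(m-1)$-instance also satisfies $q_\emptyset>0$. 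This in turn follows because setting $p_i=0$ is the $z=0$ endpoint of scaling $p_i$, so it suffices to know $q_\emptyset(p_1,\dots,p_i z,\dots,p_m)>0$ for \emph{some} sequence of $z\to 0$ — but that is circular unless I already have the result I'm proving. The non-circular way: prove directly that deleting an event cannot make $q_\emptyset$ non-positive. Using the alternate formula for $q_I$ as the probability that exactly the events in $I$ occur (equation \eqref{eqn:qi}) together with Corollary \ref{cor:pi-seq} / equation \eqref{eqn:sum-traces}, one has $q_\emptyset = 1/\sum_{\mathcal S} p_{\mathcal S}$ when $q_\emptyset>0$; but more robustly, Shearer's own work (or Scott--Sokal) shows $q_\emptyset$ of a subgraph-and-subprobability instance is $\ge q_\emptyset$ of the full instance when the full instance satisfies Shearer's condition. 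Since I'm allowed to cite earlier material but not later, I would instead give the self-contained induction: for the deleted instance $G-i$, apply the \emph{inductive hypothesis} of the lemma itself is not available at lower $m$ in a useful form, so the honest approach is to prove the slightly stronger statement ``if $q_\emptyset(\pvector)>0$ then $q_\emptyset$ of every sub-instance (induced subgraph with the corresponding sub-vector of probabilities) is $>0$'' by induction on the number of deleted vertices, using the affine/convex-combination identity as the one-vertex step. The main obstacle is exactly this bookkeeping: making the induction non-circular by choosing the right strengthened statement so that the $p_i=0$ endpoint is handled by a smaller instance, and verifying that ``$p_i\mapsto 0$'' genuinely corresponds to deleting vertex $i$ from both the graph and the probability vector (which is immediate from \eqref{eqn:q-empty} since every surviving term has no $p_i$ factor and the surviving independent sets are exactly those of $G-i$). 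Once that reduction is in place, the convexity argument closes everything.
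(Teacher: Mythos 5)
You correctly write $q_\emptyset$ as affine in $p_i$ and even identify the slope coefficient $\beta$ as the $q_\emptyset$ of the sub-instance on $\Gamma^+(\{i\})^c$, but you never establish $\beta\ge 0$, and the workaround you sketch --- proving positivity of the $z=0$ endpoint $\alpha$ by inducting over sub-instances --- is irreparably circular: the one-vertex step of that induction needs exactly $\beta\ge 0$ (equivalently, $q_\emptyset\ge 0$ for the still-smaller instance $G-\Gamma^+(\{i\})$), which is another case of what you are proving, and there is no smaller-instance positivity to bootstrap from. Worse, the strengthened claim ``$q_\emptyset(\pvector)>0$ implies $q_\emptyset>0$ for all induced sub-instances'' is simply false in general: on the $5$-cycle with all $p_i=0.8$ one has $q_\emptyset=1-5p+5p^2=0.2>0$, yet deleting a single vertex leaves a path on four vertices with $q_\emptyset=1-4p+3p^2=-0.28<0$. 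So any route that tries to deduce the lemma from $q_\emptyset(\pvector)>0$ alone, with no further structural hypothesis, cannot succeed.

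The missing ingredient is that this lemma sits inside a section that is entirely under Condition~\ref{cond:diamond}, and equation~\eqref{eqn:qi} gives each $q_I(\pvector)$ a probabilistic interpretation --- the probability that exactly the events in $I$ occur --- so $q_I\ge 0$ for \emph{every} $I$, unconditionally on the sign of $q_\emptyset$. (You cite~\eqref{eqn:qi} but only to recall the formula $q_\emptyset=1/\sum_{\mathcal S}p_{\mathcal S}$, not to extract this nonnegativity, which is the whole point.) With that in hand, the coefficient of $z$ in $q_\emptyset(p_1,\dots,p_iz,\dots,p_m)$ is
\begin{align*}
\sum_{I\in\mathcal{I},\, i\in I}(-1)^{|I|}\prod_{j\in I}p_j=-q_i(\pvector)\le 0,
\end{align*}
so $q_\emptyset$ is non-increasing in $z$ and hence $q_\emptyset(p_1,\dots,p_iz,\dots,p_m)\ge q_\emptyset(\pvector)>0$ for all $z\in[0,1]$. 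That is the paper's two-line argument; no induction and no examination of the $z=0$ endpoint are needed.
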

\begin{proof}
  By \eqref{eqn:q-empty},
  \begin{align*}
    q_{\emptyset}(p_1,\dots,p_i z,\dots,p_m) = \sum_{I\in\mathcal{I},\; i\notin I}(-1)^{|I|}\prod_{j\in I}p_j + z \sum_{I\in\mathcal{I},\; i\in I }(-1)^{|I|}\prod_{j\in I}p_j.
  \end{align*}
  Notice that $\sum_{I\in\mathcal{I},\; i\in I }(-1)^{|I|}\prod_{j\in I}p_j=-q_i(\pvector)<0$
  ($q_i(\pvector)$ is the probability of exactly event $A_i$ occurring).
  Hence $q_{\emptyset}(p_1,\dots,p_i z,\dots,p_m)\ge q_{\emptyset}(\pvector)>0$.
\end{proof}

Let $T_i$ be the number of resamplings of event $A_i$ and $T$ be the total number of resampling events.
Then $T=\sum_{i=1}^m T_i$.

\begin{lemma}  \label{lem:expected-i}
   Assume \Cref{cond:diamond} holds.
   If $q_{\emptyset}(\pvector)>0$, then $\Ex T_i=q_{\emptyset}(\pvector)\left( \frac{1}{q_{\emptyset}(p_1,\dots,p_i z,\dots,p_m)} \right)'\bigg \vert_{z=1}$.
\end{lemma}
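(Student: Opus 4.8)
The plan is to extract $\Ex T_i$ by summing, over all independent set sequences $\mathcal{S}$, the number of times event $A_i$ is resampled along $\mathcal{S}$, each weighted by the probability that the run has log $\mathcal{S}$. First I would recall the meaning of a resampling of $A_i$ in terms of the log: event $A_i$ is resampled at round $t$ precisely when $i \in S_t$. Hence $T_i = \sum_{t\ge 1} \mathbf{1}[i \in I_t]$, and by linearity $\Ex T_i = \sum_{t\ge 1} \Pr(i \in I_t)$. I want to convert this into a sum over finite logs. Since (by the argument in the proof of \Cref{cor:pi-seq}) the algorithm halts with probability $1$ when $q_{\emptyset}(\pvector)>0$, every run produces a finite halting log $\mathcal{S}=S_1,\dots,S_\ell$ (with $S_{\ell}$ the last nonempty set, the empty ``state'' $I_{\ell+1}=\emptyset$ signalling termination). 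By \Cref{lem:log-prob} applied with final state $I=\emptyset$, the probability that the halting log is exactly $\mathcal{S}$ equals $q_{\emptyset}\, p_{\mathcal{S}}$. Therefore
\[
  \Ex T_i = \sum_{\mathcal{S}} \Big(\#\{t : i \in S_t\}\Big)\, q_{\emptyset}\, p_{\mathcal{S}},
\]
the sum ranging over all nonempty independent set sequences $\mathcal{S}$ (each such $\mathcal{S}$ being a possible halting log by \Cref{cor:pi-seq}, since we can prepend it to any continuation that reaches $\emptyset$ — or rather, each finite log that actually arises; the point is the probability formula holds for all $\mathcal{S}$ and sequences that cannot be halting logs contribute $0$ mass anyway once we sum over completions, but cleanest is to sum over all $\mathcal{S}$ weighted by $q_\emptyset p_{\mathcal{S}}$ using \eqref{eqn:sum-traces} to see the weights sum to $1$).

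Next I would recognize the combinatorial sum as a derivative. The key observation is that $p_{\mathcal{S}} = \prod_{t=1}^{\ell} p_{S_t} = \prod_{t=1}^{\ell}\prod_{j\in S_t} p_j$, so $p_{\mathcal{S}}$ is a monomial in $p_1,\dots,p_m$ in which the exponent of $p_i$ is exactly $\#\{t : i\in S_t\}$. Consequently, if we scale $p_i \mapsto p_i z$ and write $p_{\mathcal{S}}(z)$ for the resulting value, then $\frac{d}{dz} p_{\mathcal{S}}(z)\big|_{z=1} = \#\{t : i \in S_t\}\cdot p_{\mathcal{S}}$. Summing over $\mathcal{S}$ and using \eqref{eqn:sum-traces} in the scaled model (valid because $q_{\emptyset}(p_1,\dots,p_i z,\dots,p_m)>0$ for all $z\in[0,1]$ by \Cref{lem:monotone}, so the identity $\sum_{\mathcal{S}} p_{\mathcal{S}}(z) = 1/q_{\emptyset}(p_1,\dots,p_i z,\dots,p_m)$ applies), I get
\[
  \Ex T_i = q_{\emptyset}(\pvector) \sum_{\mathcal{S}} \frac{d}{dz}p_{\mathcal{S}}(z)\Big|_{z=1}
  = q_{\emptyset}(\pvector)\, \frac{d}{dz}\!\left(\frac{1}{q_{\emptyset}(p_1,\dots,p_i z,\dots,p_m)}\right)\!\bigg|_{z=1},
\]
which is exactly the claimed formula.

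The main obstacle is justifying the interchange of the infinite sum over $\mathcal{S}$ with differentiation in $z$ at $z=1$. I would handle this by noting that each $p_{\mathcal{S}}(z)$ is a monomial with nonnegative coefficients, so $F(z):=\sum_{\mathcal{S}} p_{\mathcal{S}}(z)$ is a power series with nonnegative coefficients that converges for all $z$ in a neighborhood of $1$ — indeed, by \Cref{lem:monotone} it converges (to $1/q_{\emptyset}(p_1,\dots,p_i z,\dots,p_m)$, which is finite and positive) for $z$ slightly larger than $1$ as well, as long as the scaled probability vector still satisfies $q_\emptyset>0$; here one should be slightly careful since $p_i z$ could exceed the range, but we only need $q_\emptyset > 0$ which holds on an open interval around $z=1$ because $q_\emptyset$ is continuous (a polynomial) and positive at $z=1$. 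A power series with nonnegative coefficients that converges on an open interval can be differentiated term by term throughout that interval, which legitimizes the computation. A secondary point to check is that $T_i$ has finite expectation at all; this follows from the geometric tail bound $\Pr(T > t) \le (1-q_{\emptyset})^t$ already established in the proof of \Cref{cor:pi-seq}, which bounds $T_i \le T$ and gives $\Ex T_i \le \Ex T < \infty$, so all the manipulations are with absolutely convergent quantities.
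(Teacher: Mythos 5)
Your proposal is correct and follows essentially the same route as the paper: express $\Ex T_i$ as $q_\emptyset(\pvector)\sum_{\mathcal{S}} T_i(\mathcal{S}) p_{\mathcal{S}}$ via \Cref{lem:log-prob}, recognize $\sum_{\mathcal{S}} p_{\mathcal{S}} z^{T_i(\mathcal{S})}$ as $1/q_\emptyset(p_1,\dots,p_iz,\dots,p_m)$ using \Cref{lem:monotone} and \eqref{eqn:sum-traces}, and differentiate at $z=1$. The only addition is your discussion of why term-by-term differentiation is legitimate, which the paper silently omits; note that the geometric tail bound giving $\Ex T_i<\infty$ already suffices for the one-sided derivative at $z=1$, so the side remark about extending the identity to $z$ slightly above $1$ (which would need a separate analytic-continuation step, since \eqref{eqn:sum-traces} is proved probabilistically and $p_iz$ may exceed $1$) is not actually needed.
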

\begin{proof}
  By Lemma \ref{lem:monotone}, Equation \eqref{eqn:sum-traces} holds with $p_i$ replaced by $p_i z$ where $z\in[0,1]$.
  For a given independent set sequence $\mathcal{S}$,
  let $T_i(\mathcal{S})$ be the total number of occurences of $A_i$ in $\mathcal{S}$.
  Then we have that
  \begin{align}\label{eqn:sum-traces-z}
    \sum_{\mathcal{S}}p_{\mathcal{S}}z^{T_i(\mathcal{S})}=\frac{1}{q_{\emptyset}(p_1,\dots,p_i z,\dots,p_m)}.
  \end{align}
  Take derivative with respect to $z$ of \eqref{eqn:sum-traces-z}:
  \begin{align*}
    \sum_{\mathcal{S}}T_i(\mathcal{S})p_{\mathcal{S}}z^{T_i(\mathcal{S})-1}=\left( \frac{1}{q_{\emptyset}(p_1,\dots,p_i z,\dots,p_m)} \right)'.
  \end{align*}
  Evaluate the equation above at $z=1$:
  \begin{align} \label{eqn:deriv-sum-traces}
    \sum_{\mathcal{S}}T_i(\mathcal{S})p_{\mathcal{S}}=\left( \frac{1}{q_{\emptyset}(p_1,\dots,p_i z,\dots,p_m)} \right)'\bigg \vert_{z=1}.
  \end{align}

  On the other hand, we have that
  \begin{align*} 
    \Ex T_i & =\sum_{\mathcal{S}} \textstyle \PrPRS\left( \text{the log is $\mathcal{S}$} \right) T_i(\mathcal{S})\\
    & = \sum_{\mathcal{S}} p_{\mathcal{S}} q_{\emptyset}(\pvector) T_i(\mathcal{S}) \tag{by Lemma \ref{lem:log-prob}}\\
    & = q_{\emptyset}(\pvector)\left( \frac{1}{q_{\emptyset}(p_1,\dots,p_i z,\dots,p_m)} \right)'\bigg \vert_{z=1}. \tag{by \eqref{eqn:deriv-sum-traces}}
  \end{align*}
  This completes the proof.
\end{proof}

\begin{theorem}
  Assume \Cref{cond:diamond} holds.
  If $q_{\emptyset}>0$,
  then $\Ex T=\sum_{i=1}^{m}\frac{q_{i}}{q_{\emptyset}}$.
  \label{thm:exp-time}
\end{theorem}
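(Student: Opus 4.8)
The plan is to combine linearity of expectation with \Cref{lem:expected-i}, and then exploit the fact that $q_{\emptyset}$ is an affine function of each individual probability $p_i$, a structure already laid bare in the proof of \Cref{lem:monotone}.

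First I would write $\Ex T = \sum_{i=1}^{m}\Ex T_i$ by linearity of expectation, so that it suffices to prove $\Ex T_i = q_i/q_{\emptyset}$ for each $i$. \Cref{lem:expected-i} already gives $\Ex T_i = q_{\emptyset}(\pvector)\bigl(1/q_{\emptyset}(p_1,\dots,p_iz,\dots,p_m)\bigr)'\big\vert_{z=1}$, so the whole theorem reduces to evaluating this one derivative.

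Next, I would reuse the expansion of $q_{\emptyset}$ from the proof of \Cref{lem:monotone}: setting $f(z):=q_{\emptyset}(p_1,\dots,p_iz,\dots,p_m)$, we have $f(z)=c-z\,q_i(\pvector)$, where $c=\sum_{I\in\mathcal{I},\,i\notin I}(-1)^{|I|}\prod_{j\in I}p_j$ does not depend on $z$; since $f(1)=q_{\emptyset}$ this forces $c=q_{\emptyset}+q_i$. Because $f$ is affine in $z$ with slope $-q_i$, we get $\bigl(1/f(z)\bigr)' = q_i/f(z)^2$, which at $z=1$ equals $q_i/q_{\emptyset}^2$. Multiplying by $q_{\emptyset}(\pvector)=q_{\emptyset}$ yields $\Ex T_i = q_i/q_{\emptyset}$, and summing over $i$ finishes the proof.

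There is essentially no obstacle here beyond bookkeeping: all the substance is already contained in \Cref{lem:expected-i} and \Cref{lem:monotone}, and the remaining step is just differentiating $1/f(z)$ for an affine $f$ and reading off the sign of the slope correctly. The one point worth a remark is that each $\Ex T_i$, and hence $\Ex T$, is finite precisely because $q_{\emptyset}>0$ — this is what makes the term-by-term differentiation of the power series $\sum_{\mathcal{S}}p_{\mathcal{S}}z^{T_i(\mathcal{S})}$ legitimate inside its disc of convergence — but this justification is already packaged inside \Cref{lem:expected-i}.
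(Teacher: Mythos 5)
Your proof is correct and follows essentially the same route as the paper's: apply linearity of expectation, invoke \Cref{lem:expected-i}, and then compute the derivative of $1/q_{\emptyset}(p_1,\dots,p_iz,\dots,p_m)$ at $z=1$ using the fact that $q_{\emptyset}$ is affine in $p_iz$ with slope $-q_i(\pvector)$. The paper performs the same differentiation directly without naming $f$ or computing the constant $c$, but the substance is identical.
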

\begin{proof}
  Clearly $\Ex T=\sum_{i=1}^m \Ex T_i$.
  By Lemma \ref{lem:expected-i}, all we need to show is that
  \begin{align}\label{eqn:deriv-ratio}
    q_{\emptyset}(\pvector)\left( \frac{1}{q_{\emptyset}(p_1,\dots,p_i z,\dots,p_m)} \right)'\bigg \vert_{z=1} = \frac{q_{i}(\pvector)}{q_{\emptyset}(\pvector)}.
  \end{align}
  This is because 
  \begin{align*}
    q_{\emptyset}'(p_1,\dots,p_i z,\dots,p_m) & = \sum_{i\in J,\; J\in\mathcal{I}}(-1)^{|J|}\prod_{j\in J}p_j\\
    & = - q_{i}(\pvector),
  \end{align*}
  and thus
  \begin{align*}
    \left( \frac{1}{q_{\emptyset}(p_1,\dots,p_i z,\dots,p_m)} \right)' & = \frac{-q_{\emptyset}'(p_1,\dots,p_i z,\dots,p_m)}{q_{\emptyset}(p_1,\dots,p_i z,\dots,p_m)^2}\\
    & = \frac{q_{i}(\pvector)}{q_{\emptyset}(p_1,\dots,p_i z,\dots,p_m)^2}.
  \end{align*}
  It is easy to see that \eqref{eqn:deriv-ratio} follows as we set $z=1$ and the theorem is shown.
\end{proof}

The quantity $\sum_{i=1}^{m}\frac{q_{i}}{q_{\emptyset}}$ is not always easy to bound.
Kolipaka and Szegedy \cite{KS11} have shown that when the probability vector $\pvector$ satisfies Shearer's condition with a constant ``slack'',
the running time is in fact linear in the number of events in the more general setting.
We rewrite it in our notations.


\begin{theorem}[\protect{\cite[Theorem 5]{KS11}}]\label{thm:linear-bound}
  Let $d\ge 2$ be a positive integer and $p_c=\frac{(d-1)^{(d-1)}}{d^d}$.
  Let $p=\max_{i\in[m]}\{p_i\}$.
  If $G$ has maximum degree $d$ and $p < p_c$,
  then $\Ex T \le \frac{p}{p_c-p}\cdot m$.
\end{theorem}

\section{Applications of Algorithm~\ref{ALG:PRS}}

\subsection{Sink-free Orientations}

The goal of this application is to sample a sink-free orientation.
Given a graph $G=(V,E)$, an orientation of edges is a mapping $\sigma$ so that $\sigma(e)=(u,v)$ or $(v,u)$ where $e=(u,v)\in E$.
A \emph{sink} under orientation $\sigma$ is a vertex $v$ so that for any adjacent edge $e=(u,v)$, $\sigma(e)=(u,v)$.
A sink-free orientation is an orientation so that no vertex is a sink.

\prob{Sampling Sink-free Orientations}{A Graph $G$.}{A uniform sink-free orientation.}

The first algorithm for this problem is given by Bubley and Dyer \cite{BD97}, using Markov chains and path coupling techniques.

In this application, we associate with each edge a random variable,
whose possible values are $(u,v)$ or $(v,u)$.
For each vertex $v$, we associate it with a bad event $A_v$,
which happens when $v$ is a sink.
Thus the graph $G$ itself is also the dependency graph.
Condition \ref{cond:diamond} is satisfied, because if a vertex is a sink, then none of its neighbours can be a sink.
Thus we may apply Algorithm \ref{ALG:PRS}, which yields Algorithm \ref{alg:sink-free}.
This is the ``sink-popping'' algorithm given by Cohn, Pemantle, and Propp~\cite{CPP02}.

\begin{algorithm}
  \caption{Sample Sink-free Orientations}
  \label{alg:sink-free}
  \begin{enumerate}[itemsep=0.5em, topsep=0.5em]
    \item Orient each edge independently and uniformly at random.
    \item While there is at least one sink, re-orient all edges that are adjacent to a sink.
    \item Output the current assignment.
  \end{enumerate}
\end{algorithm}

Let $Z_{\mathrm{sink},0}$ be the number of sink-free orientations,
and let $Z_{\mathrm{sink},1}$ be the number of orientations having exactly one sink.
Then Theorem \ref{thm:exp-time} specializes into the following.

\begin{theorem}\label{thm:sink-free}
  The expected number of resampled sinks in Algorithm \ref{alg:sink-free} 
  is $\frac{Z_{\mathrm{sink},1}}{Z_{\mathrm{sink},0}}$.
\end{theorem}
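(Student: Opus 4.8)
The plan is to read off \Cref{thm:sink-free} as a direct specialization of \Cref{thm:exp-time}. First I would fix the dictionary between the concrete and abstract settings: the variables are the $|E|$ edge orientations, each uniform over its two possible values, so the product distribution $\mu$ is uniform over the $2^{|E|}$ orientations of~$G$; the bad event $A_i$ is ``$v_i$ is a sink''; and, as already noted just before the statement, the dependency graph coincides with $G$ and \Cref{cond:diamond} holds, because the edge $(v_i,v_j)$ cannot be oriented toward both of its endpoints, so $\Pr(A_i\wedge A_j)=0$ whenever $(v_i,v_j)\in E$. Thus \Cref{thm:exp-time} applies, under the (implicit) hypothesis $q_{\emptyset}>0$, i.e.\ $Z_{\mathrm{sink},0}>0$; this is also exactly the condition under which \Cref{alg:sink-free} halts with probability~$1$, by the argument in the proof of \Cref{cor:pi-seq}.

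Next I would identify the two quantities appearing on the right-hand side of \Cref{thm:exp-time}. By \eqref{eqn:qi}, $q_I$ is the $\mu$-probability that exactly the events in $I$ occur. Taking $I=\emptyset$ gives $q_{\emptyset}=Z_{\mathrm{sink},0}/2^{|E|}$, the probability of a sink-free orientation. Taking $I=\{i\}$, the quantity $q_i$ is the probability that $v_i$ is a sink and no other vertex is; hence $2^{|E|}q_i$ is the number of orientations whose \emph{unique} sink is $v_i$. Summing over $i$ and noting that an orientation with exactly one sink is counted once (at the index of its sink) while an orientation with zero or at least two sinks is counted zero times, we get $\sum_{i=1}^{m} 2^{|E|} q_i = Z_{\mathrm{sink},1}$.

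Finally I would combine these: $T=\sum_i T_i$ is by definition the total number of resampling steps of \Cref{alg:prs}, which in the guise of \Cref{alg:sink-free} is precisely the total number of sinks that ever get re-oriented over the run, i.e.\ the ``number of resampled sinks''. Therefore \Cref{thm:exp-time} yields $\Ex T=\sum_{i=1}^{m} q_i/q_{\emptyset} = (Z_{\mathrm{sink},1}/2^{|E|})\big/(Z_{\mathrm{sink},0}/2^{|E|}) = Z_{\mathrm{sink},1}/Z_{\mathrm{sink},0}$. I do not anticipate a genuine obstacle; the only points needing a little care are the bookkeeping that collapses $\sum_i q_i$ to $Z_{\mathrm{sink},1}$ (which rests on the trivial disjointness of the events ``$v_i$ is the unique sink'') and the verification that one ``resampling step'' of the abstract algorithm corresponds to one ``resampled sink'' of the concrete one.
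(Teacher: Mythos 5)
Your proposal is exactly the paper's (implicit) argument: the paper gives no separate proof, simply stating that \Cref{thm:exp-time} ``specializes into'' \Cref{thm:sink-free}, and your dictionary ($q_\emptyset=Z_{\mathrm{sink},0}/2^{|E|}$, $\sum_i q_i=Z_{\mathrm{sink},1}/2^{|E|}$, one resampling step per popped sink) is precisely what makes that specialization go through. The details you supply are all correct.
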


It is easy to see that a graph has a sink-free orientation if and only if the graph is \emph{not} a tree.
The next theorem gives an explicit bound on $\frac{Z_{\mathrm{sink},1}}{Z_{\mathrm{sink},0}}$ when sink-free orientations exist.

\begin{theorem}\label{thm:sink-free:bound}
  Let $G$ be a connected graph on $n$ vertices. 
  If $G$ is not a tree, then $\frac{Z_{\mathrm{sink},1}}{Z_{\mathrm{sink},0}}\le n(n-1)$,
  where $n=\abs{V(G)}$.
\end{theorem}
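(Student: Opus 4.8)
The plan is to bound the ratio $Z_{\mathrm{sink},1}/Z_{\mathrm{sink},0}$ by exhibiting, for each orientation $\tau$ with exactly one sink, a set of at most $n(n-1)$ sink-free orientations, in such a way that no sink-free orientation is charged by too many $\tau$'s; more precisely, I would set up an injection from (one-sink orientation, auxiliary data) pairs into (sink-free orientation, auxiliary data) pairs where the auxiliary data ranges over a set of size $n(n-1)$. Concretely, suppose $\tau$ has its unique sink at vertex $v$. Since $G$ is connected and not a tree, $G$ contains a cycle; more usefully, from $v$ we can find a path in $G$ to some vertex lying on a cycle, or simply: because $G$ is not a tree, $v$ is not a cut-leaf in a forest sense, and there is a walk starting at $v$ that eventually revisits a vertex. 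I would use such a structure to ``push'' the sink out of the graph: reversing the orientation of a carefully chosen edge incident to $v$ destroys the sink at $v$ but may create a sink at the neighbour $u$; repeating, we trace a path $v=u_0,u_1,u_2,\dots$ reversing one edge at a time, and each step moves the (unique) sink along an edge. The key point is that this walk cannot go on forever inside a graph with a cycle without closing up, and once it reaches a vertex already visited — or a vertex on a cycle where two ``out'' options are available — we can terminate with a genuinely sink-free orientation.

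Making this precise, I would define the following deterministic procedure on a one-sink orientation $\tau$ with sink $v$: let $u_1$ be the neighbour of $v$ obtained by some fixed rule (say, smallest index), reverse edge $\{v,u_1\}$; if $u_1$ is now a sink, recurse from $u_1$ but forbid returning along $\{v,u_1\}$; continue. Because at each stage at most one vertex is a sink and each reversal strictly changes which vertex (if any) is the sink, the sequence of sinks $v=w_0,w_1,\dots$ never repeats a vertex (a repeat would require the orientation to have returned to an earlier sink-configuration, impossible since each edge on the traced path has been reversed an odd number of times while we never revisit). Hence the process stops after at most $n-1$ reversals, at which point no vertex is a sink — this uses that $G$ is not a tree: on a tree the process would get stuck at a leaf with no forbidden-free escape, but a non-tree connected graph always furnishes, somewhere along a maximal such path, a vertex with an alternative outgoing edge (a cycle chord), so the path terminates by achieving sink-freeness rather than by exhausting options. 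The output is a sink-free orientation $\sigma$, and the pair $\bigl(w_0, \text{(the vertex where the process terminated, i.e.\ } w_t)\bigr)$ — an ordered pair of distinct vertices, hence one of at most $n(n-1)$ values — together with $\sigma$ determines $\tau$ uniquely, because knowing $\sigma$, $w_0$ and $w_t$ lets us recover the path (it is the unique path of reversed edges) and reverse it back.

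The main obstacle I anticipate is the termination argument: showing that the sink-chasing walk cannot cycle indefinitely and that, precisely when $G$ is not a tree, it halts in a sink-free state rather than getting stuck, and that the number of reversed edges is at most $n-1$ so that the recovery is unambiguous. I would handle this by arguing that the set of reversed edges always forms a simple path from $w_0$ to the current sink $w_t$ (invariant maintained by the ``forbid the edge just used'' rule), that a simple path has at most $n-1$ edges, and that the walk stops exactly when $w_t$ has no outgoing edge forced — which, since $w_t$ being a sink would force continuation and cannot be blocked in a graph with a cycle reachable appropriately, means $w_t$ is not a sink, i.e.\ we have reached a sink-free orientation. The injectivity then follows formally: given $(\sigma, w_0, w_t)$, the reversed-edge path is the unique simple $w_0$–$w_t$ path whose edges, when flipped in $\sigma$, yield an orientation whose only sink is $w_0$; flipping them back recovers $\tau$. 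Combining, $Z_{\mathrm{sink},1}\le n(n-1)\,Z_{\mathrm{sink},0}$, and dividing gives the claimed bound.
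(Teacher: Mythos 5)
Your high-level idea — chase the unique sink along a walk, reversing one edge at a time until no sink remains, and then invert by recording the start and end of the walk — is exactly the idea in the paper's proof. The execution, however, has a genuine gap in the termination argument, which is precisely the "main obstacle" you flag but do not resolve.

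The paper fixes, for each start vertex $v$, a \emph{canonical} walk $\Pi = v_0,\ldots,v_{\ell-1},v_\ell$ that depends only on $G$ and $v$ (not on the orientation): a simple path from $v$ followed by one edge closing back to a previously visited vertex $v_k$. This ensures termination for a structural reason: if the procedure gets as far as $v_{\ell-1}$, then when $v_k=v_\ell$ was earlier processed and found to be a sink, the edge $\{v_{\ell-1},v_\ell\}$ had to be oriented $(v_{\ell-1},v_\ell)$; that edge is never touched afterwards, so $v_{\ell-1}$ still has an out-edge and cannot be a sink. Hence the set of reversed edges is a prefix of the \emph{simple-path} part of $\Pi$, and the map is invertible from the pair $(v, v_i)$.

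Your walk is instead a greedy non-backtracking walk (smallest-index neighbour, forbidding the edge just used). This is not the same object, and it can run into a dead end even when $G$ is not a tree. Concrete counterexample: take $V=\{a,w,x,y,z\}$ with edges $\{a,w\},\{w,x\},\{w,y\},\{w,z\},\{y,z\}$ (so $x$ is a leaf and $w,y,z$ form a triangle — $G$ is not a tree), and the one-sink orientation $w\to a,\ x\to w,\ y\to w,\ z\to w,\ y\to z$. Your procedure reverses $\{a,w\}$ (now $w$ is a sink), then goes to $w$'s smallest non-forbidden neighbour $x$ and reverses $\{w,x\}$; now $x$ is a sink with no non-forbidden neighbour, and the procedure is stuck with a sink still present. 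The assertion ``a non-tree connected graph always furnishes, somewhere along a maximal such path, a vertex with an alternative outgoing edge'' is where the proof breaks: the greedy walk need not ever reach the cycle, so the claimed escape is not available. Relatedly, your claim that ``the set of reversed edges always forms a simple path from $w_0$ to $w_t$'' is not proved and is needed for the inversion step: the walk can halt by returning to an already-visited vertex, giving a lollipop rather than a simple path, in which case the pair $(w_0,w_t)$ alone does not identify which edges to flip back. The fix is exactly the paper's: replace the orientation-agnostic greedy walk by a fixed simple-path-plus-closing-edge walk $\Pi$ chosen in advance to end on a cycle, and then prove the termination lemma about the closing edge.
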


\begin{proof}
Consider an orientation of the edges of~$G$
with a unique sink at vertex~$v$. 
We give a systematic procedure for transforming this orientation to a sink-free 
orientation.  Since $G$ is connected and not a tree, there exists an
(undirected) path~$\Pi$ in~$G$ of the form $v=v_0,v_1,\ldots,v_{\ell-1},v_\ell=v_k$,
where the vertices $v_0,v_1,\ldots,v_{\ell-1}$ are all distinct and $0\leq k\leq\ell-2$.
In other words, $\Pi$ is a simple path of length $\ell-1$ followed by a single edge back 
to some previously visited vertex.
We will choose a canonical path of this form (depending only on~$G$ and not on the 
current orientation) for each start vertex~$v$.

We now proceed as follows.  Since $v$ is a sink, the first edge on $\Pi$ is directed
$(v_1,v_0)$.  Reverse the orientation of this edge so that it is now oriented 
$(v_0,v_1)$.  This operation destroys the sink at $v=v_0$ but may create a new sink at~$v_1$. 
If $v_1$ is not a sink then halt.  Otherwise, reverse the orientation of the second edge of~$\Pi$
from $(v_2,v_1)$ to $(v_1,v_2)$.  Continue in this fashion:  if we reach $v_i$ and it is not a sink
then halt;  otherwise reverse the orientation of the $(i+1)$th edge from $(v_{i+1},v_i)$
to $(v_i,v_{i+1})$.  This procedure must terminate with a sink-free graph before we reach $v_\ell$.
To see this, note that if we reach the vertex $v_{\ell-1}$ then the final edge of~$\Pi$ must 
be oriented $(v_{\ell-1},v_\ell)$, otherwise the procedure would have terminated already at vertex
$v_k (=v_\ell)$.

The effect of the above procedure is to reverse the orientation of edges on some initial segment $v_0,\ldots,v_i$ of~$\Pi$.  
To put the procedure into reverse, we just need to know the identity of the vertex~$v_i$.  
So our procedure associates at most $n$~orientations having a single sink at vertex~$v$ with each sink-free orientation.
There are $n(n-1)$ choices for the pair $(v,v_i)$, 
and hence $n(n-1)$ single-sink orientations associated with each sink-free orientation.  
This establishes the result.
\end{proof}

\begin{remark}
  The bound in~\Cref{thm:sink-free:bound} is optimal up to a factor of $2$.
  Consider a cycle of length $n$.
  Then there are $2$ sink-free orientations, and $n(n-1)$ single-sink orientations.

 \Cref{thm:sink-free:bound} and~\Cref{thm:sink-free} together yield an $n^2$ bound on the expected number of resamplings that occur during a run of~\Cref{alg:sink-free}.
  A cycle of length $n$ is an interesting special case.
  Consider the number of clockwise oriented edges during a run of the algorithm.
  It is easy to check that this number evolves as an unbiased lazy simple random walk on $[0,n]$.
  Since the walk starts close to $n/2$ with high probability, 
  we know that it will take $\Omega(n^2)$ steps to reach one of the sink-free states, i.e., 0 or~$n$.

  On the other hand, if $G$ is a regular graph of degree $\Delta\geq3$, then we get 
  a much better linear bound from~\Cref{thm:linear-bound}.
  In the case $\Delta=3$, we have $p_c=4/27$, $p=1/8$ and $p/(p_c-p)=27/5$.
  So the expected number of resamplings is bounded by $27n/5$.  
  The constant in the bound improves as $\Delta$ increases.
  Conversely, since the expected running time is exact,
  we can also apply~\Cref{thm:linear-bound} to give an upper bound of $\frac{Z_{\mathrm{sink},1}}{Z_{\mathrm{sink},0}}$ when $G$ is a regular graph.
\end{remark}

\subsection{Rooted Spanning Trees}

Given a graph $G=(V,E)$ with a special vertex $r$,
we want to sample a uniform spanning tree with $r$ as the root.

\prob{Sampling Rooted Spanning Trees}{A Graph $G$ with a vertex $r$.}{A uniform spanning tree rooted at $r$.}

Of course, any given spanning tree may be rooted at any vertex~$r$, so there is no real
difference between rooted and unrooted spanning trees.  However, since this approach 
to sampling generates an oriented tree, it is easier to think 
of the trees as being rooted at a particular vertex~$r$.

For all vertices other than $r$, we randomly assign it to point to one of its neighbours.
This is the random variable associated with~$v$.  We will think of this random 
variable as an arrow $v\to s(v)$ pointing from $v$ to its successor $s(v)$.  The arrows 
point out an oriented subgraph of~$G$ with $n-1$ edges $\{\{v,s(v)\}: v\in V\setminus\{r\}\}$
directed as specified by the arrows.   
The constraint for this subgraph to be a tree rooted at $r$ is that it contains
no directed cycles.  
Note that there are $2^{|E|-|V|+\kappa(G)}$ (undirected) cycles in $G$, 
where $\kappa(G)$ is the number of connected components of $G$.
Hence, we have possibly exponentially many constraints.

Two cycles are dependent if they share at least one vertex.
We claim that Condition \ref{cond:diamond} is satisfied.
Suppose a cycle $C$ is present, and $C'\neq C$ is another cycle that shares at least one vertex with $C$.
If $C'$ is also present,
then we may start from any vertex $v\in C\cap C'$, and then follow the arrows $v\to v'$.
Since both $C$ and $C'$ are present, it must be that $v'\in C\cap C'$ as well.
Continuing this argument we see that $C=C'$.
Contradiction!

As Condition \ref{cond:diamond} is met, we may apply Algorithm \ref{ALG:PRS}, yielding Algorithm \ref{alg:spanning-tree}.
This is exactly the ``cycle-popping'' algorithm by Wilson \cite{Wilson96}, as described in \cite{PW98}.

\begin{algorithm}
  \caption{Sample Rooted Spanning Trees}
  \label{alg:spanning-tree}
  \begin{enumerate}[itemsep=0.5em, topsep=0.5em]
    \item Let $T$ be an empty set. For each vertex $v\neq r$, randomly choose a neighbour $u\in\Gamma(v)$ and add an edge $(v,u)$ to $T$.
    \item While there is at least one cycle in $T$,
      remove all edges in all cycles, and 
      for all vertices whose edges are removed, redo step (1).
    \item Output the current set of edges.
  \end{enumerate}
\end{algorithm}

Let $Z_{\mathrm{tree},0}$ be the number of possible assignments of arrows to the 
vertices of~$G$, that yield a (directed) tree with root~$r$, and 
$Z_{\mathrm{tree},1}$ be the number of assignments that yield 
a unicyclic subgraph.  
The next theorem gives an explicit bound on $\frac{Z_{\mathrm{tree},1}}{Z_{\mathrm{tree},0}}$.

\begin{theorem}\label{thm:spanning-tree:bound}
Suppose $G$ is a connected graph on $n$ vertices, with $m$ edges.
Then $\frac{Z_{\mathrm{tree},1}}{Z_{\mathrm{tree},0}}\le mn$.
\end{theorem}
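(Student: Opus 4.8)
The plan is to mimic the bijective/injective argument used in the proof of \Cref{thm:sink-free:bound}: exhibit a map from unicyclic subgraphs to rooted spanning trees and bound the number of preimages of each tree. So let $\sigma$ be an arrow assignment giving a unicyclic subgraph, with its unique (directed) cycle~$C$. I would fix, for each edge $e$ of $G$ and each potential ``entry vertex'' on $C$, a canonical way to break the cycle: since $G$ is connected, there is a canonical path (depending only on~$G$) from the cycle out to the rest of the graph; more simply, since $\sigma$ restricted to $V\setminus C$ together with the root already forms a directed forest pointing towards $\{r\}\cup C$, it suffices to re-point a single vertex on the cycle so that the cycle is destroyed and the whole thing becomes a tree rooted at~$r$. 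Concretely, pick the vertex $v\in C$ that is ``closest to $r$'' along a canonical path in $G$, and redirect its arrow along the first edge of that canonical path toward~$r$; this kills the cycle and, because everything off the cycle already flows into $C\cup\{r\}$, produces a directed tree rooted at~$r$.

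The second step is the counting. To reverse the map we must recover the unicyclic subgraph from the tree, and for that it is enough to know (i) which vertex $v$ had its arrow changed and (ii) what its old target was — equivalently, which edge $\{v,s(v)\}$ of $G$ was deleted when we redirected $v$. There are at most $n$ choices for~$v$ and, since $v$'s old successor is a neighbour of~$v$, at most... well, to keep the bound clean I would instead say: the pair (changed vertex, deleted edge) ranges over at most $n\cdot m$ possibilities (vertex $v$ among $n$ choices, old arrow $\{v,s(v)\}$ among the $m$ edges of $G$). Hence at most $mn$ unicyclic subgraphs map to any given rooted spanning tree, and since the map is well-defined (every unicyclic subgraph has a unique cycle, and the canonical redirection is forced by~$G$ alone), we get $Z_{\mathrm{tree},1}\le mn\cdot Z_{\mathrm{tree},0}$, which is the claim.

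I would present the canonical-path setup exactly as in \Cref{thm:sink-free:bound}: for each vertex $w$ of $G$ designate once and for all a canonical edge $\epsilon(w)$ pointing ``towards $r$'' (e.g. the first edge of a fixed shortest path from $w$ to $r$; this is legitimate because $G$ is connected). Given a unicyclic $\sigma$ with cycle $C$, let $v$ be any fixed vertex of $C$ (say the one of smallest label, to make the choice canonical) and replace its arrow by $v\to$ (other endpoint of $\epsilon(v)$). One checks: (a) the new configuration has $n-1$ arrows, one per non-root vertex, so it is again a valid arrow assignment once we also note $v\neq r$ since $r$ is not on any cycle; (b) it is acyclic — the only cycle was $C$, every vertex off $C$ already leads into $C\cup\{r\}$ under $\sigma$, and breaking $C$ at $v$ and sending $v$ toward $r$ leaves no cycle; hence it is a rooted spanning tree.

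The main obstacle I anticipate is verifying acyclicity of the modified configuration cleanly — specifically ruling out that redirecting $v$ along $\epsilon(v)$ creates a \emph{new} cycle through the trees hanging off $C$. The safe way to handle this is to argue structurally rather than by path-chasing: in a unicyclic subgraph the arrows define, on $V\setminus C$, a forest whose every maximal directed path terminates in $C\cup\{r\}$; after deleting $v$'s old arrow, $v$ becomes a ``sink'' of this forest structure, and attaching one new arrow from $v$ toward the component of $r$ (which is possible since $\epsilon(v)$ leads to a vertex from which, following canonical choices, one reaches $r$) cannot close a cycle because the target side has no arrow path back to $v$. Making this last sentence airtight — that following $\epsilon$-edges from $v$'s new target reaches $r$ without looping — is exactly where one must be careful; choosing $\epsilon(w)$ to be the first edge on a fixed shortest-path tree to $r$ makes the ``$\epsilon$-walk reaches $r$'' claim immediate, since shortest-path distance to $r$ strictly decreases along $\epsilon$-edges.
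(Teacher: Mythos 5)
Your plan differs from the paper's proof in an essential way that, unfortunately, breaks the argument. The paper finds an edge $\{v_0,v_1\}$ of $G$ joining the two components of the unicyclic configuration (tree component with~$r$, and unicyclic component), extends it along the existing arrows $v_1\to v_2\to\cdots\to v_\ell$ until hitting the cycle, and reverses that whole path so it drains into the tree component; the invertibility uses one vertex ($v_0$) and one edge ($\{v_\ell,v_{\ell+1}\}$). You instead propose to redirect only the single vertex $v\in C$, sending $v$ along a $G$-canonical edge $\epsilon(v)$ on a shortest path to~$r$. That single redirection does not in general yield a tree.

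The flaw is that $\epsilon(v)$ is determined by $G$ alone, whereas the component containing $r$ is determined by $\sigma$, and the two need not agree. After deleting $v$'s old arrow, the unicyclic component becomes a directed tree with sink $v$: every vertex in it, in particular the other endpoint $u$ of $\epsilon(v)$ if it lies in that component, has an arrow path to~$v$. Adding $v\to u$ then closes a cycle. Your closing sentence — ``following $\epsilon$-edges from $v$'s new target reaches $r$ without looping'' — is about the $\epsilon$-walk, but after your modification the arrows emanating from $u$ are $\sigma$'s arrows, not $\epsilon$-edges, so the shortest-path monotonicity is irrelevant. A concrete counterexample: $V=\{r,u,v_1,v_2,v_3\}$ with edges $r{-}u$, $u{-}v_1$, $v_1{-}v_2$, $v_2{-}v_3$, $v_3{-}v_1$, and arrows $u\to v_1$, $v_1\to v_2\to v_3\to v_1$. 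The tree component is just $\{r\}$; the cycle $C=\{v_1,v_2,v_3\}$ and $v_1$ is its vertex closest to~$r$, with $\epsilon(v_1)=\{v_1,u\}$. Redirecting $v_1\to u$ produces the $2$-cycle $v_1\to u\to v_1$, not a tree. To repair this one must follow arrows from some vertex of the tree component into the unicyclic component and reverse an entire path, which is precisely what the paper does.
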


\begin{proof}
Consider an assignment of arrows to the vertices of $G$ that forms a unicyclic graph. 
This unicyclic subgraph has two components: a directed tree with root~$r$, and a directed cycle with a number of directed subtrees rooted on the cycle.
This is because if we remove the unicyclic component, the rest of the graph has one less edge than vertices and no cycles,
which must be a spanning tree.

As $G$ is connected, there
must be an edge in~$G$ joining the two components;  let this edge be $\{v_0,v_1\}$,
where $v_0$ is in the tree component and $v_1$ in the unicyclic component.  
Now extend this edge to a path $v_0,v_1,\ldots,v_\ell$, by following arrows 
until we reach the cycle.  Thus, $v_1\to v_2,\, v_2\to v_3,\, \ldots,\,v_{\ell-1}\to v_\ell$
are all arrows, and $v_\ell$ is the first vertex that lies on the cycle.  
(It may happen that $\ell=1$.)  Let $v_\ell\to v_{\ell+1}$ be the arrow out 
of $v_\ell$.
Now reassign the arrows from vertices $v_1,\ldots,v_\ell$ thus:  $v_\ell\to v_{\ell-1},\,
\ldots, v_2\to v_1,\, v_1\to v_0$.  Notice that the result is a directed tree rooted at~$r$.

As before, we would like to bound the number of unicyclic subgraphs associated with 
a given tree by this procedure.  We claim that the procedure can be reversed given just
two pieces of information, namely, the edge $\{v_\ell,v_{\ell+1}\}$ and the vertex $v_0$.
Note that, even though the edge $\{v_\ell,v_{\ell+1}\}$ is undirected, we can disambiguate 
the endpoints, as $v_\ell$ is the vertex closer to the root~$r$.  The vertices 
$v_{\ell-1},\ldots,v_1$ are easy to recover, as they are the vertices on the unique 
path in the tree from $v_\ell$ to~$v_0$.  To recover the unicyclic subgraph, we just need to
reassign the arrows for vertices $v_1,\ldots,v_\ell$ as follows:
$v_1\to v_2,\,\ldots,\, v_{\ell}\to v_{\ell+1}$.  

As the procedure can be reversed knowing one edge and one vertex, the number 
of unicyclic graphs associated with each tree can be at most~$mn$.
\end{proof}

Theorem~\ref{thm:spanning-tree:bound} combined with~\Cref{thm:exp-time}
yields an $mn$ upper bound on the expected number of 
``popped cycles'' during a run of~\Cref{alg:spanning-tree}. 

On the other hand, take a cycle of length $n$.
There are $n$ spanning trees with a particular root $v$,
and there are $\Omega(n^3)$ unicyclic graphs (here a cycle has to be of length $2$).
Thus the ratio is $\Omega(n^2)=\Omega(mn)$ since $m=n$,
matching the bound of~\Cref{thm:spanning-tree:bound}.
Moreover, it is known that the cycle-popping algorithm may take $\Omega(n^3)$ time, for example on a dumbbell graph \cite{PW98}.

\subsection{Extremal CNF formulas}\label{sec:ext-CNF}

A classical setting in the study of algorithmic Lov\'asz Local Lemma is to find satisfying assignments in $k$-CNF formulas\footnote{As usual in the study of Lov\'asz Local Lemma, by ``$k$-CNF'' we mean that every clause has exactly size $k$.},
when the number of appearances of every variable is bounded by $d$.
Theorem \ref{thm:LLL} guarantees the existence of a satisfying assignment as long as $d\le \frac{2^k}{ek}+1$.
On the other hand, sampling is apparently harder than searching in this setting.
As shown in \cite[Corollary 30]{BGGGS16}, it is \NP-hard to approximately sample satisfying assignments when $d\ge 5\cdot 2^{k/2}$,
even restricted to the special case of monotone formulas.

Meanwhile, sink-free orientations can be recast in terms of CNF formulas.
Every vertex in the graph is mapped to a clause,
and every edge is a variable.
Thus every variable appears exactly twice,
and we require that the two literals of the same variable are always opposite.
Interpreting an orientation from $u$ to~$v$ as making the literal in the clause corresponding to~$v$ false,
the ``sink-free'' requirement is thus ``not all literals in a clause are false''.
Hence a ``sink-free'' orientation is just a satisfying assignment for the corresponding CNF formula.

To apply Algorithm \ref{ALG:PRS}, we need to require that the CNF formula satisfies Condition \ref{cond:diamond}.
Such formulas are defined as follows.

\begin{definition}\label{def:extremal-CNF}
  We call a CNF formula \emph{extremal} if for every two clauses $C_i$ and $C_j$,
  if there is a common variable shared by $C_i$ and $C_j$,
  then there exists some variable $x$ such that $x$ appears in both $C_i$ and $C_j$ and 
  the two literals are one positive and one negative.
\end{definition}

Let $C_1,\dots,C_m$ be the clauses of a formula $\varphi$.
Then define the bad event $A_i$ as the set of unsatisfying assignments of clause $C_i$.
For an extremal CNF formula, these bad events satisfy Condition \ref{cond:diamond}.
This is because if $A_i\sim A_j$, then by Definition \ref{def:extremal-CNF},
there exists a variable $x\in\vbl(A_i)\cap \vbl(A_j)$ such that 
the unsatisfying assignment of $C_i$ and $C_j$ differ on $x$.
Hence $A_i\cap A_j=\emptyset$.

In this formulation, if the size of $C_i$ is $k$, 
then the corresponding event $A_i$ happens with probability $p_i=\Pr(A_i)=2^{-k}$, 
where variables are sampled uniformly and independently.\footnote{We note that to find a satisfying assignment it is sometimes beneficial to consider non-uniform distributions. See \cite{GST16}.}
Suppose each variable appears at most $d$ times.
Then the maximum degree in the dependency graph is at most $\Delta=(d-1)k$.
Note that in \Cref{thm:linear-bound}, $p_c=\frac{(\Delta-1)^{(\Delta-1)}}{\Delta^\Delta}\ge \frac{1}{e\Delta}$.
Thus if $d\le \frac{2^k}{ek}+1$, then $p_i=2^{-k} < p_c$ 
and we may apply \Cref{thm:linear-bound} to obtain a polynomial time sampling algorithm.

\begin{corollary}\label{cor:ext-CNF}
  For extremal $k$-CNF formulas where each variable appears in at most $d$ clauses,
  if $d\le \frac{2^k}{ek}+1$,
  then Algorithm \ref{ALG:PRS} samples satisfying assignments uniformly at random,
  with $O(m)$ expected resamplings where $m$ is the number of clauses.
\end{corollary}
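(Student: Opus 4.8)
The plan is to combine the exact running-time formula from \Cref{thm:exp-time} with the general running-time bound from \Cref{cor:linear-bound}, after verifying that the hypotheses of the latter are met in the CNF setting. The uniformity half of the statement is essentially immediate: since an extremal $k$-CNF formula satisfies \Cref{cond:diamond} (as already argued just before the statement, using the opposite-literal condition of \Cref{def:extremal-CNF}), \Cref{thm:prs} applies and guarantees that the output of \Cref{alg:prs}, upon halting, is the product distribution conditioned on all bad events being avoided --- which here is exactly the uniform distribution on satisfying assignments, since every variable is sampled uniformly from $\{0,1\}$. One should also note the algorithm halts with probability $1$, which follows from $q_\emptyset>0$ (the argument in the proof of \Cref{cor:pi-seq}); this in turn follows from the running-time bound below, or directly from \Cref{thm:Scott-Sokal}.

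For the bound on the expected number of resamplings, the steps are as follows. First, observe that each bad event $A_i$ (the unsatisfying assignments of clause $C_i$) has probability $p_i = 2^{-|C_i|} \le 2^{-k}$, since every clause has size exactly $k$; set $p = \max_i p_i = 2^{-k}$. Second, bound the maximum degree $\Delta$ of the dependency graph $G$: a clause $C_i$ shares a variable with $C_j$ only if some variable of $C_i$ also occurs in $C_j$, and since each of the $k$ variables of $C_i$ occurs in at most $d-1$ other clauses, $\Delta \le (d-1)k$. Third, invoke \Cref{thm:Scott-Sokal} with $d$ replaced by $\Delta$: this gives $q_\emptyset(p_c\mathbf{1})>0$ where $p_c = \frac{(\Delta-1)^{\Delta-1}}{\Delta^\Delta} \ge \frac{1}{e\Delta}$. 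Fourth, check the strict inequality $p < p_c$: since $d \le \frac{2^k}{ek}+1$, we have $(d-1)k \le \frac{2^k}{e}$, hence $\Delta \le \frac{2^k}{e}$ and so $p_c \ge \frac{1}{e\Delta} \ge 2^{-k} = p$; one needs a slightly more careful accounting to make this strict (e.g.\ using that $p_c > \frac{1}{e\Delta}$ strictly, or that $\Delta$ is an integer strictly less than $\frac{2^k}{e}$ unless $d$ attains the boundary, in which case the inequality $\frac{(\Delta-1)^{\Delta-1}}{\Delta^\Delta} > e^{-1}\Delta^{-1} \cdot (\text{something} > 1)$ still closes the gap). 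Fifth, apply \Cref{cor:linear-bound} to conclude $\Ex T \le \frac{p}{p_c - p}\cdot m = O(m)$, and finally note that \Cref{thm:exp-time} identifies this $\Ex T$ with the \emph{exact} expected number of resamplings of \Cref{alg:prs}.

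The main obstacle --- really the only non-routine point --- is making the comparison $p < p_c$ genuinely strict at the boundary case $d = \frac{2^k}{ek}+1$, so that \Cref{cor:linear-bound} (which requires strict inequality to have $\delta = p_c/p > 1$) actually applies. The cleanest route is probably to use the strict bound $p_c = \frac{(\Delta-1)^{\Delta-1}}{\Delta^\Delta} > \frac{1}{e\Delta}$ (strict because $(1-1/\Delta)^{\Delta-1} > 1/e$ for all finite $\Delta \ge 2$), which gives $p_c > \frac{1}{e\Delta} \ge \frac{1}{e \cdot (d-1)k} \ge 2^{-k} = p$ whenever $(d-1)k \le 2^k/e$, i.e.\ $d \le \frac{2^k}{ek}+1$. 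Everything else is bookkeeping: degree counting, substitution into \Cref{cor:linear-bound}, and citing \Cref{thm:prs} and \Cref{thm:exp-time}.
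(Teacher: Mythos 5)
Your proposal is correct and follows essentially the same route as the paper: bound the event probabilities by $p=2^{-k}$, bound the dependency-graph degree by $\Delta\le(d-1)k$, invoke \Cref{thm:Scott-Sokal} to get $p_c\ge\frac{1}{e\Delta}$, check $p<p_c$, and then apply \Cref{cor:linear-bound} together with \Cref{thm:prs} (for uniformity). You are slightly more careful than the paper's terse preamble in making the inequality $p<p_c$ strict via $(1-1/\Delta)^{\Delta-1}>1/e$ for all finite $\Delta\geq2$, which is a legitimate gap in the paper's wording and your fix is the right one. One small caveat worth noting in passing, applicable to both your argument and the paper's: near the boundary $d=\lfloor 2^k/(ek)\rfloor+1$ the ratio $p/(p_c-p)$ can be as large as $\Theta(2^k)$, so the constant hidden in the stated $O(m)$ depends on $k$; this is the standard convention in this literature (clause size $k$ and degree $d$ are fixed parameters, and the asymptotics are in the instance size), but it is worth being aware of.
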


The condition in Corollary \ref{cor:ext-CNF} essentially matches the condition of Theorem \ref{thm:LLL}.
On the other hand, if we only require Shearer's condition as in Theorem~\ref{thm:Shearer},
\Cref{ALG:PRS} is not necessarily efficient.
More precisely,
let $Z_{\mathrm{CNF},0}$ be the number of satisfying assignments,
and $Z_{\mathrm{CNF},1}$ be the number of assignments satisfying all but one clause.
If we only require Shearer's condition in Theorem \ref{thm:Shearer},
then the expected number of resamplings $\frac{Z_{\mathrm{CNF},1}}{Z_{\mathrm{CNF},0}}$ can be exponential, 
as shown in the next example.

\begin{example}
  Construct an extremal CNF formula $\varphi=C_1\wedge C_2\wedge\dots\wedge C_{4m}$ as follows.
  Let $C_1:=x_1$. Then the variable $x_1$ is pinned to $1$ to satisfy $C_1$.
  Let $C_2:=\overline{x}_1\vee y_1\vee y_2$, $C_3:=\overline{x}_1\vee y_1\vee\overline{y}_2$, and $C_4:=\overline{x}_1\vee\overline{y}_1\vee y_2$.
  Then $y_1$ and $y_2$ are also pinned to $1$ to satisfy all $C_1-C_4$.

  We continue this construction by letting 
  \begin{align*}
    C_{4k+1} & :=\overline{y}_{2k-1}\vee\overline{y}_{2k}\vee x_{k+1},\\
    C_{4k+2} & :=\overline{x}_{k+1}\vee y_{2k+1}\vee y_{2k+2},\\
    C_{4k+3} & :=\overline{x}_{k+1}\vee y_{2k+1}\vee \overline{y}_{2k+2},\\
    C_{4k+4} & :=\overline{x}_{k+1}\vee \overline{y}_{2k+1}\vee y_{2k+2},
  \end{align*}
  for all $1\le k\le m-1$.
  It is easy to see by induction that to satisfy all of them, all $x_i$'s and $y_i$'s have to be $1$.
  Moreover, one can verify that this is indeed an extremal formula.
  Thus $Z_{\mathrm{CNF},0}=1$.
  Moreover, since $\varphi$ has a satisfying assignment and is extremal,
  Shearer's condition is satisfied.
  Note also that $\varphi$ is not a $3$-CNF formula as $C_1$ contains a single variable.

  On the other hand, if we are allowed to ignore $C_1$, then $x_1$ can be $0$.
  In that case, there are $3$ choices of $y_1$ and $y_2$ so that $x_2$ to be $0$ as well.
  Thus, there are at least $3^m$ assignments that only violate $C_1$, where $x_1=x_2=\dots =x_m=0$.
  It implies that $Z_{\mathrm{CNF},1}\ge 3^m$.
  Hence we see that $\frac{Z_{\mathrm{CNF},1}}{Z_{\mathrm{CNF},0}}\ge 3^m$.
  Due to Theorem \ref{thm:exp-time}, the expected running time of Algorithm \ref{ALG:PRS} on this formula $\varphi$ is exponential in $m$.
\end{example}

We will discuss more on sampling satisfying assignments of a $k$-CNF formula in \Cref{sec:k-cnf}.

\section{General Partial Rejection Sampling}\label{sec:G-prs}

In this section we give a general version of Algorithm \ref{ALG:PRS} which can be applied to arbitrary instances in the variable framework,
even without Condition \ref{cond:diamond}.

Recall the notation introduced at the beginning of Section~\ref{sec:PRS}.
So, $\{X_1,\dots,X_n\}$ is a set of random variables,
each with its own distribution and range~$D_i$, and 
$\{A_1,\dots,A_m\}$ is a set of bad events that depend on $X_i$'s.
The dependencies between events are encoded in the dependency graph $G=(V,E)$.
As before, we will use the idea of a resampling table.
Recall that $\val=\val_t=\{\rt{X}{t}\mid 1\le i\le n\}$ denotes the current assignment 
of variables at round $t$, i.e., the elements of the resampling table that 
are active at time~$t$.  Given $\sigma$, 
let $\Bad(\sigma)$ be the set of occurring bad events;
that is, $\Bad(\sigma)=\{i\mid \sigma\in A_i\}$.
For a subset $S\subset V$, let $\partial S$ be the boundary of $S$;
that is, $\partial S=\{i\mid i\notin S\text{ and } \exists j\in S,\; (i,j)\in E\}$.
Moreover, let
\begin{align*}
  \vbl(S):=\bigcup_{i\in S} \vbl(A_i).
\end{align*}
Let $\sigma\vert_S$ be the partial assignment of $\sigma$ restricted to $\vbl(S)$.
For an event $A_i$ and $S\subseteq V$, 
we write $A_i\perp\sigma\vert_S$ if either $\vbl(A_i)\cap\vbl(S) = \emptyset$, 
or there is no way to extend the partial assignment $\sigma\vert_S$ to all variables so that $A_i$ holds.
Otherwise $A_i\not\perp\sigma\vert_S$.

\begin{definition}
  A set $S\subseteq V$ is \emph{unblocking} under $\sigma$ if for every $i\in \partial S$, $A_i\perp\sigma\vert_S$.
\end{definition}

Given $\sigma$, our goal is to resample a set of events that is unblocking and contains $\Bad(\sigma)$.
Such a set must exist because $V$ is unblocking ($\partial V$ is empty) and $\Bad(\sigma)\subseteq V$.
However, we want to resample as few events as possible.

\begin{algorithm}
  \caption{Select the resampling set $\Res(\sigma)$ under an assignment $\sigma$}
  \label{alg:resample}
   $R\gets\Bad(\sigma)$ \tcp*{$R$ is the set of events that will be resampled.}
   $N\gets\emptyset$ \tcp*{$N$ is the set of events that will not be resampled.}
   $U\gets\partial R\setminus N$\;
   \While{$U\neq \emptyset$} 
   {\For{$i\in U$}
   {\uIf{$A_i\not\perp\sigma\vert_R$}{$R\gets R\cup\{i\}$\;}\Else{$N\gets N\cup\{i\}$\;}}
     $U\gets\partial R\setminus N$\;
   }
   \KwRet{$R$}
\end{algorithm}

Intuitively, we start by setting the resampling set $R_0$ as the set of bad events $\Bad(\sigma)$.
We mark resampling events in rounds, similar to a breadth first search.
Let $R_t$ be the resampling set of round $t\ge 0$.
In round $t + 1$, let $A_i$ be an event on the boundary of $R_{t}$ that hasn't been marked yet.
We mark it ``resampling'' if the partial assignments on the shared variables of $A_i$ and $R_{t}$ can be extended so that $A_i$ occurs.
Otherwise we mark it ``not resampling''.
We continue this process until there is no unmarked event left on the boundary of the current $R$.
An event outside of $\Gamma^+(R)$ may be left unmarked at the end of \Cref{alg:resample}.
Note that once we mark some event ``not resampling'', it will never be added into the resampling set.
This is because $R$ is only grow in size during the algorithm.

In \Cref{alg:resample}, we are dynamically updating $R$ during each iteration of going through $U$.
This is potentially beneficial as an event $A_i$ may become incompatible with $R$ after some event $A_j$ is added, where both $i,j\in U$.

We fix a priori an arbitrary ordering while choosing $i\in U$ in the ``for'' loop of Algorithm \ref{alg:resample}.
Then the output of Algorithm \ref{alg:resample} is deterministic under $\sigma$.
Call it $\Res(\sigma)$.

\begin{lemma}  \label{lem:resample-protected}  
  Let $\sigma$ be an assignment.
  For any $i\in\partial \Res(\sigma)$,
  $A_i \perp \sigma\vert_{\Res(\sigma)}$.
\end{lemma}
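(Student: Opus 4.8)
The plan is to extract the key invariant maintained by \Cref{alg:resample} and then observe that \Cref{lem:resample-protected} is essentially a restatement of its stopping condition. Concretely, I would track the pair $(R,N)$ through the course of the algorithm and prove by induction on the number of iterations of the while-loop that the following invariant holds: every index $i \in N$ satisfies $A_i \cap \sigma_R = \emptyset$, where $R$ is the \emph{current} resampling set. The base case is trivial since $N = \emptyset$ initially. For the inductive step, there are two ways $(R,N)$ changes in one iteration: either some $i \in \partial R \setminus N$ is added to $R$ (in which case $N$ is unchanged, but $R$ grows, so I must check that previously-protected indices stay protected), or some $i$ is added to $N$ (in which case I must check the newly-added index is protected, and $R$ is unchanged so the others are fine).

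The first case is the one requiring care, and I expect it to be the main obstacle: when we enlarge $R$ to $R' = R \cup \{i\}$, an index $j \in N$ that satisfied $A_j \cap \sigma_R = \emptyset$ must be shown to still satisfy $A_j \cap \sigma_{R'} = \emptyset$. The point is that enlarging the partial assignment can only make it \emph{harder} for $A_j$ to be compatible, not easier: $A_j \cap \sigma_R = \emptyset$ means that already on the variables $\vbl(A_j) \cap \vbl(R)$ the assignment $\sigma$ forbids $A_j$, and since $\vbl(A_j) \cap \vbl(R) \subseteq \vbl(A_j) \cap \vbl(R')$, the same obstruction persists. (If $\vbl(A_j)\cap\vbl(R)=\emptyset$ but $\vbl(A_j)\cap\vbl(R')\neq\emptyset$, then $j\in\partial R'$ and we would simply process $j$ again later; but in fact once $j\in N$ the algorithm never revisits it, so I should phrase the invariant to cover exactly the indices that have been placed in $N$, and note that such $j$ genuinely has $A_j$ incompatible with $\sigma_{R'}$ — here I may need to be slightly more careful, perhaps strengthening the invariant to: for all $j\in N$ and all $t$ with $R\subseteq R_t\subseteq\Res(\sigma)$, $A_j\cap\sigma_{R_t}=\emptyset$, which follows from the monotonicity just described since $\vbl$ is monotone in the set argument.)

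Once the invariant is established, the conclusion is immediate. The algorithm terminates when $\partial R \setminus N = \emptyset$, i.e. when $\partial R \subseteq N$, at which point $R = \Res(\sigma)$. By the invariant applied at termination, every $j \in N \supseteq \partial \Res(\sigma)$ satisfies $A_j \cap \sigma_{\Res(\sigma)} = \emptyset$, which is exactly the claim. I would also record the small observation (already noted in the text) that the while-loop does terminate, since $R$ only grows, $R\cup N$ only grows, and both are bounded by $V$; and that the output is well-defined (deterministic) because at each iteration the set of eligible indices $\partial R\setminus N$ and the test $A_i\cap\sigma_R\ne\emptyset$ depend only on the current $(R,N)$ and on $\sigma$, so the final $R$ does not depend on the order in which boundary indices are processed — though strictly this last point is not needed for \Cref{lem:resample-protected} itself and can be cited from the preceding remark.
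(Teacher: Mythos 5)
Your proposal is correct and uses the same key idea as the paper, namely the monotonicity of the restriction in the set argument: once $A_i\cap\sigma_R=\emptyset$ for some intermediate $R\subseteq\Res(\sigma)$, the inclusion $\vbl(A_i)\cap\vbl(R)\subseteq\vbl(A_i)\cap\vbl(\Res(\sigma))$ forces $A_i\cap\sigma_{\Res(\sigma)}=\emptyset$. The paper simply picks the intermediate state at which $i$ was marked ``not resampling'' and applies this one observation directly, whereas you package it as an explicit loop invariant; the content is the same. One small remark: your parenthetical worry about $\vbl(A_j)\cap\vbl(R)=\emptyset$ never arises, because any $j\in N$ was at some point in $\partial R_0$ for an intermediate $R_0$, so $\vbl(A_j)\cap\vbl(R_0)\ne\emptyset$, and this intersection only grows thereafter.
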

\begin{proof}
  Since $i\in\partial\Res(\sigma)$, it must have been marked.
  Moreover, $i\not\in \Res(\sigma)$, so it must be marked as ``not resampling''.
  Thus, there exists an intermediate set $R\subseteq\Res(\sigma)$ during the execution of Algorithm \ref{alg:resample} 
  such that $A_i\perp\sigma\vert_R$ and $i\in\partial R$.
  It implies that $A_i$ is disjoint from the partial assignment of $\sigma$ restricted to $\vbl(A_i)\cap\vbl(R)$.
  However, 
  \begin{align*}
    \vbl(A_i)\cap\vbl(R)\subseteq \vbl(A_i)\cap\vbl(\Res(\sigma))
  \end{align*}
  as $R\subseteq\Res(\sigma)$.
  We have that $A_i \perp \sigma\vert_{\Res(\sigma)}$.
\end{proof}

If Condition \ref{cond:diamond} is met,
then $\Res(\sigma)=\Bad(\sigma)$.
This is because at the first step, $R=\Bad(\sigma)$.
By Condition \ref{cond:diamond}, for any $i\in \partial \Bad(\sigma)$,
$A_i$ is disjoint from all $A_j$'s where $j\in \Bad(\sigma)$ and $A_i\sim A_j$.
Since $A_j$ occurs under $\sigma$, $A_i\perp\sigma\vert_{R}$.
Algorithm \ref{alg:resample} halts in the first iteration.
In this case, since the resampling set is just the (independent) set of occurring bad events,
the later Algorithm \ref{ALG:G-PRS} coincides with Algorithm \ref{ALG:PRS}.

The key property of $\Res(\sigma)$ is that if we change the assignment outside of $\Res(\sigma)$,
then $\Res(\sigma)$ does not change, unless the new assignment introduces a new bad event outside of $\Res(\sigma)$.
More formally, we have the following lemma.

\begin{lemma}  \label{lem:resample-select}
  Let $\sigma$ be an assignment.
  Let $\sigma'$ be another assignment such that $\Bad(\sigma') \subseteq \Res(\sigma)$ and 
  such that $\sigma$ and $\sigma'$ agree on all variables in $\vbl(\Res(\sigma))=\bigcup_{i\in \Res(\sigma)}\vbl(A_i)$.
  Then, $\Res(\sigma')=\Res(\sigma)$.
\end{lemma}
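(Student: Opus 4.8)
The plan is to show that Algorithm~\ref{alg:resample} follows exactly the same trajectory when run on $\sigma'$ as when run on $\sigma$, so that it produces the same output set. The crucial observation is that the execution of Algorithm~\ref{alg:resample} under any assignment $\tau$ depends on $\tau$ only through two things: the initial set $\Bad(\tau)$, and, at each subsequent step, the test ``$A_i\cap\tau_R\neq\emptyset$?'' for boundary events $i$, which by definition depends only on the restriction of $\tau$ to $\vbl(A_i)\cap\vbl(R)$. Under the hypotheses, $\Bad(\sigma')\subseteq\Res(\sigma)$, and I first want to argue that in fact $\Bad(\sigma')=\Bad(\sigma)$: since $\sigma$ and $\sigma'$ agree on $\vbl(\Res(\sigma))$ and $\Bad(\sigma)\subseteq\Res(\sigma)$, every event in $\Bad(\sigma)$ sees the same variable values under both assignments and hence still occurs, so $\Bad(\sigma)\subseteq\Bad(\sigma')\subseteq\Res(\sigma)$; conversely any event in $\Bad(\sigma')$ lies in $\Res(\sigma)$, hence depends only on variables in $\vbl(\Res(\sigma))$ on which the two assignments agree, so it also occurs under $\sigma$. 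Thus $\Bad(\sigma')=\Bad(\sigma)=:R_0$, giving the two runs the same starting point.

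Next I would run the two executions in lockstep and prove by induction on the round $t$ that the intermediate pair $(R_t,N_t)$ produced under $\sigma'$ equals the pair produced under $\sigma$, \emph{and} that $R_t\subseteq\Res(\sigma)$. The base case is $R_0=\Bad(\sigma)\subseteq\Res(\sigma)$, $N_0=\emptyset$. For the inductive step, suppose the runs have agreed up to set $R=R_t\subseteq\Res(\sigma)$ with the same $N=N_t$. For a boundary event $i\in\partial R\setminus N$, the decision whether to add $i$ to $R$ or to $N$ is governed by whether $A_i\cap\sigma_R=\emptyset$ versus $A_i\cap\sigma'_R=\emptyset$; but this test only inspects the values of variables in $\vbl(A_i)\cap\vbl(R)\subseteq\vbl(\Res(\sigma))$, and there $\sigma=\sigma'$, so the two runs make the same choice. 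Hence $R_{t+1}$ and $N_{t+1}$ agree across the two runs. To close the induction I also need $R_{t+1}\subseteq\Res(\sigma)$: any event newly added to $R$ in the $\sigma$-run is added to $\Res(\sigma)$ too (the $\sigma$-run \emph{is} the computation of $\Res(\sigma)$), and since the $\sigma'$-run makes the identical additions, $R_{t+1}\subseteq\Res(\sigma)$ still holds. Since both runs are finite and agree at every step, they terminate at the same round with the same output, i.e. $\Res(\sigma')=\Res(\sigma)$.

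The one subtlety worth being careful about is that the order in which Algorithm~\ref{alg:resample} ``goes through $i\in\partial R\setminus N$'' within a round is some fixed but unspecified iteration order, and this order could in principle depend on the current set $R$ rather than on the assignment. As long as this order is a deterministic function of $(R,N)$ alone (which it is, since $\partial R$ and $N$ are determined by the dependency graph and the sets, not by $\sigma$), the lockstep induction goes through verbatim: at each micro-step the two runs are looking at the same event $i$, with the same set $R$, and make the same decision. I would state this explicitly to avoid any worry that the ``arbitrary'' processing order inside the while loop breaks synchronization. This is really the only place where one has to be slightly delicate; everything else is the straightforward observation that each test in the algorithm reads only variables in $\vbl(\Res(\sigma))$, on which $\sigma$ and $\sigma'$ coincide by assumption.
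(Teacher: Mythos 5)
Your proof is correct and takes essentially the same approach as the paper: you establish $\Bad(\sigma')=\Bad(\sigma)$ from the agreement on $\vbl(\Res(\sigma))$, then run the two executions of Algorithm~\ref{alg:resample} in lockstep, proving by induction that the intermediate pairs $(R_t,N_t)$ coincide and stay inside $\Res(\sigma)$ (so every test only inspects variables where $\sigma$ and $\sigma'$ agree). Your explicit remark about the processing order inside the while loop being a deterministic function of $(R,N)$ is a sensible clarification the paper leaves implicit.
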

\begin{proof}
  Let $R_t(\sigma),N_t(\sigma)$ be the intermediate sets $R,N$, respectively, at time $t$ of the execution of Algorithm \ref{alg:resample} under $\sigma$.
  Thus $R_0(\sigma)=\Bad(\sigma)$ and $R_0(\sigma)\subseteq R_1(\sigma) \subseteq \dots \subseteq \Res(\sigma)$.
  Moreover, $N_0(\sigma)\subseteq N_1(\sigma)\subseteq \cdots$.
  We will show by induction that $R_t(\sigma)=R_t(\sigma')$ and $N_t(\sigma)=N_t(\sigma')$ for any $t\ge 0$.

  For the base case of $t=0$,
  by the condition of the lemma, 
  for every $i\in \Bad(\sigma)\subseteq \Res(\sigma)$, 
  the assignments $\sigma$ and $\sigma'$ agree on $\vbl(A_i)$; or equivalently $\sigma\vert_{\Res(\sigma)}=\sigma'\vert_{\Res(\sigma)}$.
  Together with $\Bad(\sigma')\subseteq \Res(\sigma)$, 
  it implies that $\Bad(\sigma) = \Bad(\sigma')$ and $R_0(\sigma) = R_0(\sigma')$.
  Moreover, $N_0(\sigma)=N_0(\sigma')=\emptyset$.

  For the induction step $t>0$, we have that $R_{t-1}(\sigma)=R_{t-1}(\sigma')\subseteq \Res(\sigma)$ and $N_{t-1}(\sigma)=N_{t-1}(\sigma')$.
  Let $R=R_{t-1}(\sigma)=R_{t-1}(\sigma')$ and $N=N_{t-1}(\sigma)=N_{t-1}(\sigma')$.
  Then we will go through $U=\partial R\setminus N$, which is the same for both $\sigma$ and $\sigma'$.
  Moreover, while marking individual events ``resampling'' or not,
  it is sufficient to look at only $\sigma\vert_R=\sigma'\vert_R$ since $R\subseteq \Res(\sigma)$.
  Thus the markings are exactly the same,
  implying that $R_{t}(\sigma)=R_{t}(\sigma')\subseteq \Res(\sigma)$ and $N_{t}(\sigma)=N_{t}(\sigma')$.
\end{proof}

\begin{algorithm}
  \caption{\gprs}
  \label{ALG:G-PRS}
  \begin{enumerate}[itemsep=0.5em, topsep=0.5em]
    \item Draw independent samples of all variables $X_1,\dots,X_n$ from their respective distributions.
    \item While at least one bad event occurs under the current assignment $\sigma$, 
      use Algorithm \ref{alg:resample} to find $\Res(\sigma)$.
      Resample all variables in $\bigcup_{i\in \Res(\sigma)}\vbl(A_i)$.
    \item When none of the bad events holds, output the current assignment.
  \end{enumerate}
\end{algorithm}

To prove the correctness of Algorithm \ref{ALG:G-PRS}, we will only use three properties of $\Res(\sigma)$,
which are intuitively summarized as follows:
\begin{enumerate}
  \item $\Bad(\sigma)\subseteq\Res(\sigma)$;
  \item For any $i\in\partial \Res(\sigma)$, 
    $A_i$ is disjoint from the partial assignment of $\sigma$ projected on $\vbl(A_i)\cap\vbl(\Res(\sigma))$ (Lemma~\ref{lem:resample-protected});
  \item If we fix the partial assignment of $\sigma$ projected on $\vbl(\Res(\sigma))$, 
    then the output of \Cref{alg:resample} is fixed, unless there are new bad events occurring outside of $\Res(\sigma)$ (Lemma~\ref{lem:resample-select}).
\end{enumerate}

Similarly to the analysis of Algorithm \ref{ALG:PRS},
we call $\mathcal{S}=S_1,\dots,S_{\ell}$ the log, if $S_i$ is the set of resampling events in step $i$ of Algorithm \ref{ALG:G-PRS}.
Note that for Algorithm \ref{ALG:G-PRS},
the log is not necessarily an independent set sequence.
Also, recall that $\val_i$ is the assignment of variables in step $i$,
and $\val_{t}=\val_{T}$ if $T$ is when \Cref{ALG:G-PRS} terminates and $t>T$.
The following lemma is an analogue of \Cref{lem:diamond}.

\begin{lemma}
  Given any log $\mathcal{S}$ of length $\ell\ge 1$,
  $\val_{\ell+1}$ has the product distribution conditioned on none of $A_i$'s occurring where $i\notin \Gamma^{+}(S_\ell)$,
  namely from $\mu\big(\cdot \mid \bigwedge_{i \in [m] \setminus  \Gamma^{+} (S_\ell)} \overline{A_i}\big)$.
  \label{lem:g-prs}
\end{lemma}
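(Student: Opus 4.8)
The plan is to mirror the proof of \Cref{lem:diamond}, replacing the single crucial use of \Cref{cond:diamond} by the three abstract properties of $\Res(\sigma)$ listed just before the statement (that is, $\Bad(\sigma)\subseteq\Res(\sigma)$, \Cref{lem:resample-protected}, and \Cref{lem:resample-select}). Fix a log $\mathcal{S}=S_1,\dots,S_\ell$. As in the extremal case, call an assignment $\val$ \emph{valid} (with respect to this log) if none of the $A_i$ with $i\notin\Gamma^+(S_\ell)$ occurs under $\val$; note that $\val_{\ell+1}$ is automatically valid, since by property~(1) any bad event under $\val_{\ell+1}$ lies in $\Bad(\val_{\ell+1})\subseteq\Res(\val_{\ell+1})=S_{\ell+1}\subseteq\Gamma^+(S_\ell)$ (the last containment because $S_{\ell+1}$ is obtained by growing outward from $\Bad(\val_{\ell+1})\subseteq\Res(\val_\ell)=S_\ell$ along boundary edges, so it is contained in $\Gamma^+(S_\ell)$ — actually I should just say: every event resampled at step $\ell+1$ shares a variable with $S_\ell$ or is itself in $\Bad(\val_{\ell+1})$; I'll want to be careful here, but the containment $S_{t+1}\subseteq\Gamma^+(S_t)$ is exactly what was needed for the log to behave). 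The goal is then to show that, conditioned on the log being $\mathcal{S}$, the probability of producing any particular valid $\val$ is proportional to its weight under the product distribution $\mu$.

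The mechanism is the same table-swapping bijection. Given a resampling table $M$ on which the algorithm produces log $\mathcal{S}$ and final assignment $\val_{\ell+1}=\val$, truncate $M$ to the columns actually used through round $\ell+1$, and form $M'$ by changing only the last-used entry of each variable so that the final assignment becomes an arbitrary other valid assignment $\val'$ (leaving $\val'$ equal to $\val$ on any variable not resampled at round $\ell+1$, i.e.\ on $\vbl(S_1\cup\dots\cup S_{\ell-1})\setminus\vbl(S_\ell)$ — more simply, all entries strictly before the last used column are unchanged). I would then argue by induction on $t=1,\dots,\ell$ that the algorithm run on $M'$ produces the same resampling set $S_t$ and the same assignment $\val_t$ as on $M$ for $t\le\ell$. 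The induction step is where the three properties come in: at step $t$, the assignments $\val_t$ under $M$ and $M'$ agree on all of $\vbl(\Res(\val_t))=\vbl(S_t)$ (since those entries are either common to $M$ and $M'$, or — at the very last round a variable of $S_t$ is touched — are still in a column $<$ its final column, hence unchanged), and no \emph{new} bad event appears outside $S_t$ because any valid assignment's bad events lie in $\Gamma^+(S_\ell)$ and, by the boundary-growing structure, cannot escape $S_t$ without having been reachable — this is precisely the hypothesis ``$\Bad(\sigma')\subseteq\Res(\sigma)$'' of \Cref{lem:resample-select}. Then \Cref{lem:resample-select} gives $\Res(\val_t')=\Res(\val_t)=S_t$, so the same variables are resampled and (the entries being identical) $\val_{t+1}'=\val_{t+1}$.

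The delicate point — and what I expect to be the main obstacle — is verifying the ``no new bad event outside $\Res(\val_t)$'' hypothesis of \Cref{lem:resample-select} at each intermediate round, and relatedly, ruling out the possibility that the $M'$-run tries to read an entry beyond the truncated table (the boundary-access contradiction that appears twice in the proof of \Cref{lem:diamond}). Here \Cref{lem:resample-protected} is the replacement for \Cref{cond:diamond}: if some event $A_i$ with $i\in\partial S_t$ tried to occur under $\val_{t}'$, then because $\val_t'$ agrees with $\val_t$ on $\vbl(\Res(\val_t))\supseteq\vbl(A_i)\cap\vbl(S_t)$, it would occur ``on the shared variables'' under $\val_t$ too, contradicting $A_i\cap(\val_t)_{S_t}=\emptyset$; so the only way $A_i$ can newly occur is by using a variable of $A_i$ that is \emph{not} in $\vbl(S_t)$, i.e.\ one whose value was changed in $M'$, meaning that variable has reached its final column. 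But a variable in $\vbl(\partial S_t)\setminus\vbl(S_t)$ is, by the update rule, never resampled at any round $\le\ell$ once it lies outside $\Gamma^+$ of the current resampling set — I'd need to trace through that, analogously to the ``$Y=\vbl(A)$ impossible'' and ``$Y=\vbl(A)$, claim $Y=\vbl(A)$'' steps of \Cref{lem:diamond}, now phrased in terms of boundaries rather than single shared variables. Once the induction through round $\ell$ is established, the map $M\mapsto M'$ is a weight-respecting bijection between tables yielding $(\mathcal{S},\val)$ and tables yielding $(\mathcal{S},\val')$, so conditioned on the log the output is distributed as $\mu$ restricted to valid assignments, which is the claim.
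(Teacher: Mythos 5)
Your overall plan matches the paper's proof: truncate the resampling table, swap the final-column entries to get $M'$, and argue the $M'$-run produces the same log, using \Cref{lem:resample-select} for the ``same-$\Res$'' step and \Cref{lem:resample-protected} in place of \Cref{cond:diamond}. But the point you flag as delicate is a genuine gap, and the hint you give for closing it is not correct. First, a minor error: the containment $S_{t+1}\subseteq\Gamma^+(S_t)$ that you use for validity of $\val_{\ell+1}$ is false in the general setting --- $\Res(\sigma_{t+1})$ is grown outward from $\Bad(\sigma_{t+1})$ by \Cref{alg:resample} and can escape $\Gamma^+(S_t)$. What is true, and all you need, is $\Bad(\val_{\ell+1})\subseteq\Gamma^+(S_\ell)$: only variables in $\vbl(S_\ell)$ change at round~$\ell$, so any newly-occurring $A_i$ must share a variable with~$S_\ell$, and any $A_i$ with $i\notin\Gamma^+(S_\ell)$ already failed to occur under~$\val_\ell$.

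The real gap is in verifying $\Bad(\val_{t_0}')\subseteq\Res(\val_{t_0})$ at the first round $t_0$ where the two runs might diverge. You correctly reduce to: some $j\notin\Res(\val_{t_0})$ occurs under $\val_{t_0}'$ but not $\val_{t_0}$, so some $X_i\in\vbl(A_j)$ is at its final column at~$t_0$. Your next claim --- that such a variable ``is never resampled at any round $\le\ell$ once it lies outside $\Gamma^+$ of the current resampling set'' --- is both vague and not what is needed; $X_i$ itself is by assumption never resampled again (that is what ``final column'' means), but \emph{other} variables of $A_j$ may well be resampled at later rounds, and that is exactly the hard case. The correct split is on whether \emph{all} of $\vbl(A_j)$ is at its final column at~$t_0$. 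If yes, then $j\notin\Gamma^+(S_t)$ for every $t\ge t_0$, in particular $j\notin\Gamma^+(S_\ell)$, and since $\val_{t_0}'$ and $\val'$ agree on $\vbl(A_j)$ we get $j\in\Bad(\val')$, contradicting validity of~$\val'$. If no, take the first round $t_1>t_0$ at which some $Y\subseteq\vbl(A_j)$ is resampled in the $M$-run; since $X_i$ is never resampled, $j\notin\Res(\val_{t_1})$, so $j\in\partial\Res(\val_{t_1})$, and \Cref{lem:resample-protected} gives $A_j\cap(\val_{t_1})_Y=\emptyset$. But $(\val_{t_1})_Y=(\val_{t_0})_Y=(\val_{t_0}')_Y$ (no resampling of $Y$ in $[t_0,t_1)$, and $Y$ has not reached final columns), so $A_j\cap(\val_{t_0}')_Y=\emptyset$, contradicting $j\in\Bad(\val_{t_0}')$. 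Without this two-case argument --- and in particular the appeal to validity of $\val'$ in the first case, which your sketch never invokes --- the induction does not close.
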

\begin{proof}
  Suppose $i\notin \Gamma^+(S_{\ell})$.
  By construction, $S_{\ell}$ contains all occurring bad events of $\sigma_{\ell}$,
  and hence $A_i$ does not occur under $\sigma_{\ell}$.
  In step~$\ell$, we only resample variables that are involved in $S_\ell$,
  so $\sigma_{\ell+1}$ and $\sigma_{\ell}$ agree on $\vbl(A_i)$.
  Hence $A_i$ cannot occur under $\sigma_{\ell+1}$.
  Call an assignment~$\val$ \emph{valid} if none of $A_i$ occurs where $i\notin \Gamma^{+}(S_\ell)$.
  To show that $\val_{\ell+1}$ has the desired conditional product distribution,
  we will show that the probabilities of getting any two valid assignments $\val$ and $\val'$ are 
  proportional to their probabilities of occurrence under the product distrbution $\mu(\cdot)$.

  Let $M$ be the resampling table so that the log of Algorithm \ref{ALG:G-PRS} is $\mathcal{S}$ up to round $\ell$, and $\val_{\ell+1}=\val$.
  Indeed, since we only care about events up to round $\ell+1$, we may truncate the table so that $M=\{X_{i,j} \mid 1\le i\le n,\;\; 1\le j\le j_{i,\ell+1}\}$.
  Let $M'=\{X_{i,j}'\;\vert\; 1\le i\le n,\;\; 1\le j\le j_{i,\ell+1}\}$ be another table 
  where $X_{i,j}'=X_{i,j}$ if $j<j_{i,\ell+1}$ for any $i\in[n]$, and $\val'=(\rt{X'}{\ell+1}:1\leq i\leq n)$
  is a valid assignment.  
  In other words, we only change the last assignment $(\rt{X}{\ell+1}:1\leq i\leq n)$ to another valid assignment.
  We will use $\val'_t=(\rt{X'}{t})$ to denote the active elements of the second resampling table at time~$t$;
  thus $\val'=\val'_{\ell+1}$.

  The lemma follows if Algorithm \ref{ALG:G-PRS} running on $M'$ generates the same log $\mathcal{S}$ up to round $\ell$,
  since, if this is the case, then conditioned on the log $\mathcal{S}$, 
  every possible table $M$ where $\val_{\ell+1}=\val$ is one-to-one correspondence with another table~$M'$ where 
  $\val_{\ell+1}'=\val'$.
  Hence the probability of getting $\val$ is proportional to its weight under $\mu(\cdot)$.

  Suppose otherwise and the log of running Algorithm \ref{ALG:G-PRS} on $M$ and $M'$ differ.
  Let $t_0\le \ell$ be the first round where resampling changes, by which we mean 
  that $\Res(\sigma_{t_0})\neq \Res(\sigma_{t_0}')$.
  By Lemma~\ref{lem:resample-select}, either $\Bad(\sigma_{t_0}')\not\subseteq\Res(\sigma_{t_0})$, 
  or $\sigma_{t_0}\vert_{\Res(\sigma_{t_0})}\neq\sigma_{t_0}'\vert_{\Res(\sigma_{t_0})}$.
  In the latter case, there must be a variable $X_i$ with $i\in\vbl(\Res(\sigma_{t_0}))$ and index $j_{i,\ell+1}$.
  However, $i\in\vbl(\Res(\sigma_{t_0}))$ means that $X_i$ is resampled at least once more in the original run on $M$,
  and its index goes up to at least $j_{i,\ell+1}+1$ at round $\ell+1$.
  A contradiction.
  Thus, $\sigma_{t_0}\vert_{\Res(\sigma_{t_0})}=\sigma_{t_0}'\vert_{\Res(\sigma_{t_0})}$ and $\Bad(\sigma_{t_0}')\not\subseteq\Res(\sigma_{t_0})$.

%
 
  As $\Bad(\sigma_{t_0}')\not\subseteq\Res(\sigma_{t_0})$, 
  there must be a variable $X_{i_0}$ such that $j_{i_0,t_0}=j_{i_0,\ell+1}$ (otherwise $X_{i_0,j_{i_0,t_0}}=X_{i_0,j_{i_0,t_0}}'$)
  and an event $A_k$ such that $i_0\in \vbl(A_k)$, $k\in \Bad(\sigma_{t_0}')$ but $k\not \in \Res(\sigma_{t_0})$.
  Suppose first that $\forall i\in\vbl(A_k)$, $j_{i,t_0}=j_{i,\ell+1}$, 
  which means that all variables of $A_k$ have reached their final values in the $M$ run at time $t_0$.
  This implies that $k\notin \Gamma^+(S_{t})$ for any $t\ge t_0$ as otherwise some of the variables in $\vbl(A_k)$ would be resampled at least once after round $t_0$.
  In particular, $k\notin \Gamma^+(S_{\ell})$.
  This contradicts with $\sigma'$ being valid.

  Otherwise there are some variables in $\vbl(A_k)$ that get resampled after time $t_0$ in the $M$ run.
  Let $t_1$ be the first such time and $Y\subset \vbl(A_k)$ be the set of variables resampled at round $t_1$;
  namely, $Y=\vbl(A_k)\cap\vbl(\Res(\sigma_{t_1}))$.
  We have that $\sigma_{t_1}\vert_{Y}=\sigma_{t_0}\vert_{Y}$ because $t_1$ is the first time of resampling variables in $Y$.      
  Moreover, as variables of $Y$ have not reached their final values yet in the $M$ run,
  $\sigma_{t_0}\vert_{Y}=\sigma_{t_0}'\vert_{Y}$.
  Thus, $\sigma_{t_1}\vert_{Y}=\sigma_{t_0}'\vert_{Y}$.

  Assuming $k\in \Res(\sigma_{t_1})$ 
  would contradict the fact that $X_{i_0}$ has reached its final value in the $M$ run.
  Hence $k\notin \Res(\sigma_{t_1})$, but nevertheless variables in $Y\subset\vbl(A_k)$ are resampled.
  This implies that $k\in\partial \Res(\sigma_{t_1})$.
  By Lemma \ref{lem:resample-protected}, $A_k\perp \sigma_{t_1}\vert_{\Res(\sigma_{t_1})}$.
  As $\vbl(A_k)$ cannot be disjoint from $\vbl(\Res(\sigma_{t_1}))$,
  this means that $A_k$ is imcompartible with the partial assignment of $\sigma_{t_1}$ restricted to $\vbl(A_k)\cap\vbl(\Res(\sigma_{t_1}))=Y$.
  Equivalently, $A_k\perp \sigma_{t_1}\vert_{Y}$.
  However we know that $\sigma_{t_1}\vert_{Y}=\sigma_{t_0}'\vert_{Y}$,
  so $A_k\perp \sigma_{t_0}'\vert_{Y}$, contradicting $k\in \Bad(\sigma_{t_0}')$.
\end{proof}

\begin{theorem}
  If Algorithm \ref{ALG:G-PRS} halts, 
  then its output has the product distribution conditioned on none of $A_i$'s occurring.
  \label{thm:g-prs}
\end{theorem}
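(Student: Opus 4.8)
The plan is to follow the template already established in the proof of \Cref{thm:prs}, replacing \Cref{lem:diamond} with its general counterpart \Cref{lem:g-prs}. First I would observe that the output of \Cref{alg:G-prs} trivially avoids all bad events, since the algorithm only halts when $\Bad(\sigma)=\emptyset$. So the content of the theorem is entirely about the \emph{distribution} of the output, conditioned on being a valid (bad-event-free) assignment.

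Next I would condition on the log. Fix any finite independent-set-free sequence $\mathcal{S}=S_1,\dots,S_\ell$ that arises as the log of some successful (halting) run of \Cref{alg:G-prs}. By \Cref{lem:g-prs}, conditioned on the event ``the log equals $\mathcal{S}$ up to round $\ell$'', the assignment $\sigma_{\ell+1}$ has the product distribution conditioned on $\bigwedge_{i\notin \Gamma^+(S_\ell)}\overline{A_i}$. Since $\mathcal{S}$ is the log of a \emph{successful} run, the algorithm halts at round $\ell+1$, meaning $\sigma_{\ell+1}$ additionally avoids every $A_i$ with $i\in\Gamma^+(S_\ell)$. Conditioning the distribution of $\sigma_{\ell+1}$ further on this extra event preserves the key property: among valid assignments (those avoiding \emph{all} bad events), the conditional probability of each is still proportional to its weight under the product distribution $\mu$. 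Concretely, if $\sigma$ and $\sigma'$ are both valid, then
\begin{align*}
  \frac{\Pr(\sigma_{\ell+1}=\sigma \mid \text{log}=\mathcal{S})}{\Pr(\sigma_{\ell+1}=\sigma' \mid \text{log}=\mathcal{S})}=\frac{\mu(\sigma)}{\mu(\sigma')}.
\end{align*}

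Finally I would remove the conditioning on the log. The event ``the algorithm halts and outputs $\sigma$'' is the disjoint union, over all possible halting logs $\mathcal{S}$, of the events ``log $=\mathcal{S}$ and $\sigma_{\ell+1}=\sigma$''. For each such $\mathcal{S}$ the ratio of probabilities of outputting two valid assignments $\sigma,\sigma'$ is $\mu(\sigma)/\mu(\sigma')$, independent of $\mathcal{S}$; summing over $\mathcal{S}$ (a countable disjoint union) preserves this ratio. Hence, conditioned on the algorithm halting, the output distribution assigns to each valid $\sigma$ a probability proportional to $\mu(\sigma)$, i.e.\ it is exactly $\mu$ conditioned on $\bigwedge_{i=1}^m \overline{A_i}$, as claimed.

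The main subtlety — already dispatched by \Cref{lem:g-prs} — is establishing that every valid assignment is reachable with the \emph{correct} weight under a given log; that is, the resampling set $\Res(\sigma)$ is chosen so that no valid assignment gets ``blocked'' by the choices recorded in the log. Here the only remaining care is a convergence/measurability point: the sum over halting logs is over a countable set, and on the complement (non-halting runs) there is nothing to output; so the argument is a clean conditional-probability manipulation once \Cref{lem:g-prs} is in hand. Unlike \Cref{thm:exp-time}, we make no claim here about whether the algorithm halts with probability $1$ — the statement is conditional on halting — so no extra termination argument is needed.
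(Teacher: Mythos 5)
Your proof is correct and follows essentially the same route as the paper: fix a log, invoke \Cref{lem:g-prs} to get the conditional product distribution given that log, then further condition on halting (preserving the pairwise weight ratios among valid assignments), and finally aggregate over all possible logs since the ratio $\mu(\sigma)/\mu(\sigma')$ is log-independent. One small wording slip: you describe the log as an ``independent-set-free sequence'' — for \Cref{alg:G-prs} the log is just a sequence of subsets of $[m]$, not necessarily an independent-set sequence, but your argument does not actually rely on this property, so nothing breaks.
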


\begin{proof}
%
%
  Let a sequence $\mathcal{S}$ of sets of events be the log of any successful run. 
  Then $S_{\ell}=\emptyset$.
  By Lemma \ref{lem:g-prs}, conditioned on the log $\mathcal{S}$, the output assignment $\sigma$ 
  is $\mu\big(\cdot \mid \bigwedge_{i \in [m] \setminus  \Gamma^{+} (S_\ell)} \overline{A_i}\big)
  = \mu\big(\cdot \mid \bigwedge_{i \in [m]} \overline{A_i}\big)$.
  This is valid for any possible log,
  and the theorem follows.
\end{proof}

\section{Running Time Analysis of Algorithm~\ref{ALG:G-PRS}}


Obviously when there is no assignment avoiding all bad events,
then Algorithm \ref{ALG:G-PRS} will never halt.
Thus we want to assume some conditions to guarantee a desired assignment.
However, the optimal condition of Theorem \ref{thm:Shearer} is quite difficult to work under.
Instead, in this section we will be working under the assumption that the asymmetric LLL condition \eqref{eqn:LLL} holds.
In fact, to make the presentation clean, we will mostly work with the simpler symmetric case.

However, as mentioned in Section \ref{sec:ext-CNF},
\cite[Corollary 30]{BGGGS16} showed that even under the asymmetric LLL condition~\eqref{eqn:LLL},
sampling can still be \NP-hard.
We thus in turn look for further conditions to make Algorithm \ref{ALG:G-PRS} efficient.

Recall that $\mu(\cdot)$ is the product distribution of sampling all variables independently.
For two distinct events $A_i\sim A_j$, let $R_{ij}$ be the event that the partial assignments on $\vbl(A_i)\cap\vbl(A_j)$ can be extended to an assignment making $A_j$ true.
Thus, if $A_i\not\perp\sigma\vert_S$ for some event set $S$,
then $R_{ji}$ must hold for all $A_j\in S$ and $A_j\sim A_i$.
Conversely, it is possible that each individual $R_{ji}$ is true for all $A_j\in R$ and $A_j\sim A_i$,
and yet $A_i\perp\sigma\vert_S$.
Also note that $R_{ij}$ is not necessarily the same as $R_{ji}$.
Let $r_{ij}:=\mu(R_{ij})$.

Define $\displaystyle p:=\max_{i\in[m]}p_i$ and $\displaystyle r:=\max_{A_i\sim A_j,\;i\neq j} r_{ij}$.
Let $\Delta$ be the maximum degree of the dependency graph $G$.
The main result of the section is the following theorem.

\begin{theorem}\label{thm:exp-time:G-prs}
  Let $m$ be the number of events and $n$ be the number of variables.
  For any $\Delta\ge 2$, if $\Cone ep\Delta^2 \le 1$ and $\Ctwo er\Delta \le 1$, 
  then the expected number of resampled events of Algorithm \ref{ALG:G-PRS} is $O(m)$.

  Moreover, when these conditions hold, the number of rounds is $O(\log m)$
  and the number of variable resamples is $O(n\log m)$, both in expectation and with high probability.  
\end{theorem}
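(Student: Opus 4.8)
The plan is to run a branching-process / Moser-Tardos-style witness-tree argument, adapted to the fact that in each round we resample a whole set $\Res(\sigma_t)$ rather than a single bad event. First I would control the \emph{size} of a single resampling set $\Res(\sigma)$ in terms of the size of $\Bad(\sigma)$ together with the ``expansion'' of the marking process in Algorithm~\ref{alg:resample}. The key observation is that for an event $A_i$ to enter $\Res(\sigma)$ at marking-round $t\ge 1$, some neighbour $A_j$ with $A_i\sim A_j$ must have entered at round $t-1$ \emph{and} the event $R_{ji}$ must hold under $\mu$; so the growth of $\Res$ outward from $\Bad(\sigma)$ is dominated by a subcritical branching process with branching ratio at most $\Delta r$ (each of the $\le\Delta$ neighbours survives only if $R_{ji}$ holds, which has probability $\le r$). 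Under $\Ctwo er\Delta\le 1$ this ratio is $\le 1/(3e)<1$, so in expectation $|\Res(\sigma)|=O(|\Bad(\sigma)|)$, and moreover the probability of generating a ``large'' connected piece of $\Res$ decays geometrically. I would make this precise by a union bound over rooted subtrees of the dependency graph, using the standard fact that the number of such trees of size $s$ rooted at a given vertex is at most $(e\Delta)^s$, matched against the probability $\le r^{s}$ that all the required $R_{ji}$ hold.

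Next I would bound the expected number of resampled events by the usual witness-tree / charging scheme. Build a ``resampling forest'' recording, for each event resampled in round $t$, the event(s) in round $t-1$ (or in the initial sampling) that caused it; a node of the forest corresponding to event $A_i$ at round $t$ is valid only if all the relevant bad events and all the relevant $R_{ji}$ events held on \emph{independent fresh randomness} (this is exactly where the resampling-table formalism of Section~\ref{sec:PRS} is used: each node reads a fresh column block). The probability that a fixed abstract witness tree $T$ occurs is then at most $\prod_{\text{nodes}}(\text{something}\le p)$ for the ``bad-event'' layers times $\prod r$ for the intervening ``expansion'' layers. Summing over all witness trees rooted at event $i$: the number of trees of size $s$ is at most $(e\Delta^2)^s$-ish (a factor $\Delta$ for the choice of neighbour and a further factor $\Delta$ because the expansion sub-layer can fan out before a bad event re-occurs), and each contributes at most $(p r^{?})^{s}$; the two hypotheses $\Cone ep\Delta^2\le1$ and $\Ctwo er\Delta\le1$ (the constants $\Cone=6$, $\Ctwo=3$ are chosen precisely to absorb the $e$, the fan-out, and a convergent geometric tail) guarantee $\sum_{T\ni i}\Pr(T\text{ occurs})=O(1)$. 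Summing over $i\in[m]$ gives $\Ex(\#\text{resampled events})=O(m)$.

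For the ``moreover'' part — $O(\log m)$ rounds, hence $O(n\log m)$ variable resamples — I would argue that a witness tree that is still alive at depth $t$ must contain a root-to-leaf path of length $\ge t$, and by the subcriticality just established the probability of any such surviving path is at most $c^{t}$ for some fixed $c<1$ (one factor per layer, from either the $p\Delta^2$ bound or the $r\Delta$ bound). A union bound over the $\le m$ possible roots shows $\Pr(\text{more than }t\text{ rounds})\le m\,c^{t}$, so the expected number of rounds is $\sum_{t\ge0}\min(1,mc^{t})=O(\log m)$. Since each round resamples at most all $n$ variables, the expected total number of variable resamples is $O(n\log m)$; one can also note more carefully that the expected total, $\sum_t\Ex|\vbl(\Res(\sigma_t))|$, is bounded by $n$ times the expected number of rounds, which suffices.

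The main obstacle I expect is the \emph{bookkeeping of the witness structure}, precisely because $\Res$ is built by a two-phase process: the $\Bad$ layer (controlled by $p$) and the marking/expansion layer (controlled by $r$), and these alternate and interleave across rounds in a way that an ordinary Moser-Tardos witness tree does not capture. Getting the right ``alphabet'' for tree nodes — so that (a) every actual run injects into a valid witness tree, (b) each valid tree has occurrence probability bounded by the advertised product of $p$'s and $r$'s on independent randomness, and (c) the number of trees of a given size is correctly counted, giving the $\Delta^2$ in the first hypothesis and the $\Delta$ in the second — is the delicate step; the three abstract properties of $\Res$ itemised just before Algorithm~\ref{alg:G-prs} (namely $\Bad(\sigma)\subseteq\Res(\sigma)$, \Cref{lem:resample-protected}, and \Cref{lem:resample-select}) are exactly what is needed to verify (a), and Lemma~\ref{lem:resample-protected} in particular is what forces the $R_{ji}$ events into the witness and thus justifies using $r$ rather than $1$ on the expansion layers.
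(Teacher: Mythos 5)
Your proposal takes a genuinely different route from the paper. The paper does not build witness trees at all: it defines $Z_t = C^{-t}\abs{\Res_t}$ with $C=1-p$ and shows (Lemma~\ref{lem:g-prs-Res-shrink}) that $(Z_t)$ is a supermartingale, by directly bounding $\Ex\bigl(\abs{\Res_{t+1}} \bigm| \Res_t\bigr)$ via a union bound over \emph{induced paths} in the dependency graph starting from $\Bad_{t+1}$, combined with the Haeupler--Saha--Srinivasan conditional-probability bound (Lemma~\ref{lem:event-bound} / Corollary~\ref{cor:symmetric-bound}). The $O(m)$ and $O(\log m)$ conclusions then drop out of $\Ex\abs{\Res_t}\le C^t m$. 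Your witness-tree/branching-process plan is a reasonable alternative direction, but as sketched it has a concrete gap that I think would block a direct write-up.

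The gap is the assertion that ``each node reads a fresh column block'' of the resampling table, so that the probability of a fixed witness structure is a clean product of $p$'s and $r$'s over product measure. That is not the situation here. After resampling $\Res_t$, a bad event $A_{i_0}\in\Bad_{t+1}$ with $i_0\in\partial\Res_t$ depends on variables that were \emph{not} resampled at step $t$; Lemma~\ref{lem:g-prs} tells you only that $\sigma_{t+1}$ is product measure \emph{conditioned on} $B(\Res_t^e)$, and those carried-over variables are exactly the ones affected by this conditioning, which can inflate $\Pr(A_{i_0})$. This is precisely why the paper invokes the LLL-type bound of Lemma~\ref{lem:event-bound}, paying an extra factor of roughly $e^\ell$ (hence the constant $\Cone=6$); your sketch has no mechanism to absorb this, and the ``something $\le p$'' placeholders elide exactly the step that is nontrivial. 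A second, more local issue: $A_{i_0}$ and the first expansion event $R_{i_0 i_1}$ share the variables in $\vbl(A_{i_0})\cap\vbl(A_{i_1})$ and can be positively correlated, so you cannot multiply $p$ by $r$ for the first edge; the paper explicitly drops $R_{i_0 i_1}$ from the product (bounding $\Pr_\mu(D_P)\le p\,r^{\ell-1}$, see the remark just after \eqref{eqn:D_P-Pr-bound}). Relatedly, if you sum over rooted \emph{trees} rather than induced paths, two sibling expansion edges incident to the same node can share the same variable set, so the corresponding $R$-events are not independent, and the simple $r^s$ product fails; the paper's restriction to induced paths (with the structural claim that every $i\in\Res_{t+1}\setminus\Bad_{t+1}$ admits a bad induced path back to $\Bad_{t+1}$) is what makes the pairwise independences go through. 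None of these are unfixable in principle, but they are the heart of the argument, not bookkeeping, and the supermartingale-plus-paths route in the paper is designed to sidestep them cleanly.
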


The first condition $\Cone ep\Delta^2 \le 1$ is stronger than the condition of the symmetric Lov\'asz Local Lemma,
but this seems necessary since \cite[Corollary 30]{BGGGS16} implies that if $p\Delta^2\ge C$ for some constant $C$ then the sampling problem is \NP-hard.
Intuitively, the second condition $\Ctwo er\Delta \le 1$ bounds the expansion from bad events to resampling events at every step of Algorithm \ref{ALG:G-PRS}.
We will prove Theorem \ref{thm:exp-time:G-prs} in the rest of the section.

Let $S\subseteq [m]$ be a subset of vertices of the dependency graph $G$.
Recall that $A(S)$ is the event $\bigwedge_{i\in S}A_i$ and $B(S)$ is the event $\bigwedge_{i\in S}\overline{A_i}$.
Moreover, $S^c$ is the complement of $S$, namely $S^c=[m]\setminus S$,
and $S^e$ is the ``exterior'' of $S$, namely $S^e=[m] \setminus \Gamma^+(S)$.

Lemma \ref{lem:g-prs} implies that if we resample $S$ at some step $t$ of Algorithm \ref{ALG:G-PRS},
then at step $t+1$ the distribution is the product measure $\mu$ conditioned on none of the events in the exterior of $S$ holds; namely $\Pr_\mu(\cdot\mid B(S^e))$.

Let $E$ be an event (not necessarily one of $A_i$) depending on a set $\vbl(E)$ of variables.
Let $\Gamma(E):=\{i\mid i\in[m],\;\vbl(A_i)\cap\vbl(E)\neq\emptyset\}$ if $E$ is not one of $A_i$,
and $\Gamma(A_i):=\{j\mid j\in[m],\;j\not=i\text{ and }\vbl(A_j)\cap\vbl(A_i)\neq\emptyset\}$ is defined as usual.
Let $S\subseteq [m]$ be a subset of vertices of $G$.
The next lemma bounds the probability of $E$ conditioned on none of the events in $S$ happening.
It was first observed in \cite{HSS11}.
We include a proof for completeness (which is a simple adaption of the ordinary local lemma proof).

\begin{lemma}[\protect{\cite[Theorem 2.1]{HSS11}}]\label{lem:event-bound}
  Suppose \eqref{eqn:LLL} holds.
  For an event $E$ and any set $S\subseteq [m]$,
  \begin{align*}
    \Pr_\mu(E\mid B(S))\le \Pr_\mu(E)\prod_{i\in \Gamma(E)\cap S}(1-x_i)^{-1},
  \end{align*}
  where $x_i$'s are from \eqref{eqn:LLL}.
\end{lemma}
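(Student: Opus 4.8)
The plan is to prove the inequality by induction on $|S|$, following the standard argument for the Moser--Tardos / LLL-distribution bound. Before starting, note that every conditional probability appearing below is well defined: applying \Cref{thm:LLL} to the sub-instance induced by an arbitrary $S\subseteq[m]$ (whose events still satisfy \eqref{eqn:LLL} with the same $x_i$, since passing to the induced subgraph only drops factors $1-x_j\le 1$ from the right-hand side) gives $\Pr_\mu(B(S))\ge\prod_{i\in S}(1-x_i)>0$. For the base case $S=\emptyset$ — and, more generally, whenever $\Gamma(E)\cap S=\emptyset$ — the event $E$ is independent of $B(S)$ under $\mu$, so $\Pr_\mu(E\mid B(S))=\Pr_\mu(E)$, which is exactly the asserted bound since the product on the right is empty.

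For the inductive step I would assume $S_1:=\Gamma(E)\cap S\neq\emptyset$ and put $S_2:=S\setminus S_1$. Since $E$ shares no variable with any event in $S_2$, I would condition first on $B(S_2)$ and write
\[
  \Pr_\mu(E\mid B(S))=\frac{\Pr_\mu\big(E\wedge B(S_1)\mid B(S_2)\big)}{\Pr_\mu\big(B(S_1)\mid B(S_2)\big)}\le\frac{\Pr_\mu(E)}{\Pr_\mu\big(B(S_1)\mid B(S_2)\big)},
\]
where the numerator is bounded using independence of $E$ from $S_2$. Hence it suffices to prove the lower bound $\Pr_\mu\big(B(S_1)\mid B(S_2)\big)\ge\prod_{i\in S_1}(1-x_i)$, which then yields exactly $\Pr_\mu(E\mid B(S))\le\Pr_\mu(E)\prod_{i\in\Gamma(E)\cap S}(1-x_i)^{-1}$.

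To get that lower bound I would enumerate $S_1=\{i_1,\dots,i_k\}$ in an arbitrary order and telescope:
\[
  \Pr_\mu\big(B(S_1)\mid B(S_2)\big)=\prod_{j=1}^{k}\Big(1-\Pr_\mu\big(A_{i_j}\mid B(S'_j)\big)\Big),\qquad S'_j:=\{i_1,\dots,i_{j-1}\}\cup S_2 .
\]
For each $j$ one has $|S'_j|<|S|$ (because $S_1\neq\emptyset$) and $i_j\notin S'_j$, so the induction hypothesis applies to the event $A_{i_j}$ with the set $S'_j$ and gives $\Pr_\mu(A_{i_j}\mid B(S'_j))\le p_{i_j}\prod_{i\in\Gamma(A_{i_j})\cap S'_j}(1-x_i)^{-1}$. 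Plugging in \eqref{eqn:LLL}, i.e.\ $p_{i_j}\le x_{i_j}\prod_{(i_j,i)\in E}(1-x_i)$, and using $\Gamma(A_{i_j})\cap S'_j\subseteq\Gamma(A_{i_j})$ together with $0<1-x_i\le 1$, the two products partially cancel and leave $\Pr_\mu(A_{i_j}\mid B(S'_j))\le x_{i_j}$. Thus each telescoped factor is at least $1-x_{i_j}$, which proves the lower bound and closes the induction.

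I expect the only delicate point to be the bookkeeping inside the telescoping step: one must fix an order on $S_1$ and check at each stage that $S'_j$ is \emph{strictly} smaller than $S$ and avoids $i_j$, so that the induction hypothesis is legitimately available for $(A_{i_j},S'_j)$, and then verify the cancellation $\prod_{(i_j,i)\in E}(1-x_i)\cdot\prod_{i\in\Gamma(A_{i_j})\cap S'_j}(1-x_i)^{-1}\le 1$. The rest is routine manipulation of conditional probabilities and of the inequality \eqref{eqn:LLL}.
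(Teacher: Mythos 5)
Your proof is correct and follows essentially the same route as the paper: induction on $|S|$, the split $S=S_1\sqcup S_2$ with $S_1=\Gamma(E)\cap S$, conditioning on $B(S_2)$ to isolate $\Pr_\mu(B(S_1)\mid B(S_2))$ in the denominator, and telescoping that probability so that the induction hypothesis plus \eqref{eqn:LLL} bounds each factor below by $1-x_{j_t}$. Your added remarks on well-definedness of the conditional probabilities and on the cancellation bookkeeping are harmless extras; the paper handles the latter by silently replacing $\prod_{i\in\Gamma(j_t)\cap S'_t}(1-x_i)^{-1}$ with the larger $\prod_{i\in\Gamma(j_t)}(1-x_i)^{-1}$ before invoking \eqref{eqn:LLL}.
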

\begin{proof}
  We prove the inequality by induction on the size of $S$.
  The base case is when $S$ is empty and the lemma holds trivially.
  
  For the induction step, let $S_1=S\cap\Gamma(E)$ and $S_2=S\setminus S_1$.
  If $S_1=\emptyset$, then the lemma holds trivially as $E$ is independent from $S$ in this case.
  Otherwise $S_2$ is a proper subset of $S$.
  We have that
  \begin{align*}
    \Pr_\mu(E\mid B(S)) & = \frac{\Pr_\mu(E\wedge B(S_1) \mid B(S_2))}{\Pr_\mu(B(S_1)\mid B(S_2))} \\
    & \le \frac{\Pr_\mu(E \mid B(S_2))}{\Pr_\mu(B(S_1)\mid B(S_2))}\\
    & = \frac{\Pr_\mu(E)}{\Pr_\mu(B(S_1)\mid B(S_2))},
  \end{align*}
  where the last line is because $E$ is independent from $B(S_2)$.
  We then use the induction hypothesis to bound the denominator.
  Suppose $S_1=\{j_1,j_2,\dots,j_r\}$ for some $r>0$.
  Then,
  \begin{align*}
    \Pr_\mu(B(S_1)\mid B(S_2)) & = \Pr_\mu\left(\bigwedge_{i\in S_1}\overline{A_i} \; \Bigg| \bigwedge_{i\in S_2}\overline{A_i}\right)\\
    & = \prod_{t=1}^r\Pr_\mu\left(\overline{A_{j_t}}\; \Bigg| \bigwedge_{s=1}^{t-1}\overline{A_{j_s}}\wedge\bigwedge_{i\in S_2}\overline{A_i}\right) \\
    & = \prod_{t=1}^r\left( 1- \Pr_\mu\left(A_{j_t}\; \Bigg| \bigwedge_{s=1}^{t-1}\overline{A_{j_s}}\wedge\bigwedge_{i\in S_2}\overline{A_i}\right) \right).  
  \end{align*}
  By the induction hypothesis and \eqref{eqn:LLL},
  we have that for any $1\le t\le r$,
  \begin{align*}
    \Pr_\mu\left(A_{j_t}\; \Bigg| \bigwedge_{s=1}^{t-1}\overline{A_{j_s}}\wedge\bigwedge_{i\in S_2}\overline{A_i}\right) 
    & \le \Pr_\mu(A_{j_t}) \prod_{i\in \Gamma(j_t)}(1-x_i)^{-1} \\
    & \le x_{j_t} \prod_{i\in \Gamma(j_t)}(1-x_i) \prod_{i\in \Gamma(j_t)}(1-x_i)^{-1} \\
    & = x_{j_t}.
  \end{align*}
  Thus, 
  \begin{align*}
    \Pr_\mu(B(S_1)\mid B(S_2)) & \ge \prod_{i\in S_1}\left( 1- x_i \right).
  \end{align*}
  The lemma follows.
\end{proof}

Typically we set $x_i=\frac{1}{\Delta+1}$ in the symmetric setting.
Then \eqref{eqn:LLL} holds if $ep(\Delta+1)\le 1$.
In this setting, Lemma~\ref{lem:event-bound} is specialized into the following.
\begin{corollary}\label{cor:symmetric-bound}
  If $ep(\Delta+1)\le 1$, then 
  \begin{align*}
    \Pr_\mu(E\mid B(S))\le \Pr_\mu(E)\left( 1+ \frac{1}{\Delta} \right)^{\abs{\Gamma(E)}}.
  \end{align*}
\end{corollary}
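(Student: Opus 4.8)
The plan is to derive Corollary~\ref{cor:symmetric-bound} as an immediate specialization of Lemma~\ref{lem:event-bound}. First I would fix $x_i=\frac{1}{\Delta+1}$ for every $i\in[m]$ and check that the asymmetric condition~\eqref{eqn:LLL} is satisfied under the hypothesis $ep(\Delta+1)\le1$. Every vertex of the dependency graph~$G$ has degree at most~$\Delta$, so the product in~\eqref{eqn:LLL} has at most~$\Delta$ factors, each equal to $1-\frac{1}{\Delta+1}=\frac{\Delta}{\Delta+1}<1$; therefore
\[
  x_i\prod_{(i,j)\in E}(1-x_j)\ \ge\ \frac{1}{\Delta+1}\left(1-\frac{1}{\Delta+1}\right)^{\Delta}\ =\ \frac{1}{\Delta+1}\left(1+\tfrac1\Delta\right)^{-\Delta}\ \ge\ \frac{1}{e(\Delta+1)}\ \ge\ p\ \ge\ \Pr(A_i),
\]
where the middle inequality uses the elementary bound $\left(1+\frac1\Delta\right)^{\Delta}\le e$ and the last two use the hypothesis and $p=\max_i p_i$. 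Hence~\eqref{eqn:LLL} holds with this choice of weights.

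Next I would invoke Lemma~\ref{lem:event-bound} directly. With $x_i=\frac{1}{\Delta+1}$ we have $(1-x_i)^{-1}=1+\frac1\Delta$, so for any event~$E$ and any $S\subseteq[m]$,
\[
  \Pr_\mu(E\mid B(S))\ \le\ \Pr_\mu(E)\prod_{i\in\Gamma(E)\cap S}(1-x_i)^{-1}\ =\ \Pr_\mu(E)\left(1+\tfrac1\Delta\right)^{\abs{\Gamma(E)\cap S}}.
\]
Since $1+\frac1\Delta\ge1$ and $\abs{\Gamma(E)\cap S}\le\abs{\Gamma(E)}$, the exponent can be enlarged, giving $\Pr_\mu(E\mid B(S))\le\Pr_\mu(E)\left(1+\frac1\Delta\right)^{\abs{\Gamma(E)}}$, which is precisely the claimed inequality.

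I do not expect any real obstacle: the entire argument is a substitution into the already-established Lemma~\ref{lem:event-bound}, and the only quantitative ingredient is the scalar inequality $\left(1+\frac1\Delta\right)^{\Delta}\le e$ (equivalently $\Delta\ln(1+\frac1\Delta)\le1$, which follows from $\ln(1+x)\le x$). The one place to be slightly careful is replacing $\Gamma(E)\cap S$ by $\Gamma(E)$ in the exponent, which is justified by the monotonicity of $t\mapsto(1+\frac1\Delta)^{t}$ together with $\Gamma(E)\cap S\subseteq\Gamma(E)$.
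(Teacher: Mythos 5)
Your proposal is correct and matches the paper's own route: the paper likewise obtains Corollary~\ref{cor:symmetric-bound} by setting $x_i=\frac{1}{\Delta+1}$, observing that $ep(\Delta+1)\le1$ makes condition~\eqref{eqn:LLL} hold, and then specializing Lemma~\ref{lem:event-bound}. Your fleshed-out verification of the intermediate bound $\frac{1}{\Delta+1}\big(\frac{\Delta}{\Delta+1}\big)^{\Delta}\ge\frac{1}{e(\Delta+1)}$ and the monotone enlargement of the exponent from $\abs{\Gamma(E)\cap S}$ to $\abs{\Gamma(E)}$ are both sound.
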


In particular, if $e p (\Delta+1)\le 1$, for any event $A_i$ where $i\not\in S$, by Corollary \ref{cor:symmetric-bound},
\begin{align}
  \Pr_\mu(A_i\mid B(S)) &\le p_i \left( \frac{\Delta+1}{\Delta} \right)^\Delta \le ep. \label{eqn:R-bound}
\end{align}
%

Let $\Res_t$ be the resampling set of Algorithm \ref{ALG:G-PRS} at round $t\ge 1$,
and let $\Bad_t$ be the set of bad events present at round $t$.
If \Cref{ALG:G-PRS} has already stopped at round $t$, then $\Res_t=\Bad_t=\emptyset$.
Furthermore, let $\Bad_0=\Res_0=[m]$ since in the first step all random variables are fresh.

Let $C:=1-p$.

\begin{lemma}\label{lem:g-prs-Res-shrink}
  For any $\Delta\ge 2$, 
  if $\Cone ep\Delta^2 \le 1$ and $\Ctwo er\Delta\le 1$,
  then,
  \begin{align*}
    \Ex\left(\abs{\Res_{t+1}} \mid \Res_0,\dots,\Res_t\right) \le C \abs{\Res_t}.
  \end{align*}  
\end{lemma}
\begin{proof}
  Clearly for any $\Delta\ge 2$,
  the condition $\Cone ep\Delta^2\le 1$ implies that $ep(\Delta+1)\le 1$.
  Therefore the prerequisite of Corollary \ref{cor:symmetric-bound} is met.
  Notice that, by Lemma~\ref{lem:g-prs}, we have that
  \begin{align*}
    \Ex\left(\abs{\Res_{t+1}} \mid \Res_0,\dots,\Res_t\right) = \Ex\left(\abs{\Res_{t+1}} \mid \Res_t\right).
  \end{align*}
  We will show in the following that 
  \begin{align*}
    \Ex\big(\abs{\Res_{t+1}}\bigm| \text{the set of resampling events at round $t$ is (exactly) $\Res_t$}\big)\le C \abs{\Res_t},
  \end{align*}
  where $C=1-p$.
  This implies the lemma.


  Call a path $i_0,i_1,\dots,i_{\ell}$ where $\ell\ge 0$ in the dependency graph $G$ \emph{bad} if the following holds:
  \begin{enumerate}
    \item $i_0\in\Bad_{t+1}$;
    \item the event $R_{i_{k-1}i_{k}}$ holds for every $1\le k\le \ell$;
    \item any $i_k$ ($k\in[\ell]$) is not adjacent to $i_{k'}$ unless $k'=k-1$ or $k+1$.
  \end{enumerate}
  Indeed, paths having the third property are induced paths in $G$.
  If $i\in\Res_{t+1}$, $A_i$ must be added by \Cref{alg:resample} during some iteration of the while loop.
  In the $0$th iteration, all of $\Bad_{t+1}$ are added.
  We claim that for any $i\in \Res_{t+1}$ added in $\ell\ge 0$ iteration by~\Cref{alg:resample}, 
  there exists at least one bad path such that $i_0,i_1,\dots,i_{\ell}=i$.
  We show the claim by an induction on $\ell$.
  \begin{itemize}
    \item The base case is that $\ell=0$, and thus $i\in\Bad_{t+1}$.
      The bad path is simply $i$ itself.
    \item For the induction step $\ell\ge 1$, due to~\Cref{alg:resample}, 
      there must exist $i_{\ell-1}$ adjacent to $i_\ell=i$ 
      such that $i_{\ell-1}$ has been marked ``resampling'' during iteration $\ell-1$, 
      and $R_{i_{\ell-1}i_\ell}$ occurs.
      By the induction hypothesis, there exists a bad path $i_0,\dots,i_{\ell-1}$.
      Since $i$ is not marked at iteration $\ell-1$, 
      $i$ is not adjacent to any vertices that has been marked up to iteration $\ell-2$.
      Thus $i_{\ell}$ is not adjacent to any $i_k$ where $k\le \ell-2$,
      and the path $i_0,\dots,i_{\ell-1},i_{\ell}$ is bad.
  \end{itemize}

  We next turn to bounding the number of bad paths.
  It is straightforward to bound the size of $\Bad_{t+1}\subseteq \Gamma^+(\Res_t)$.
  If $i\in \Bad_{t+1}$, then there are two possibilities.
  The first scenario is that $i\in\Res_t$ and then all of its random variables are fresh.
  In this case it occurs with probability $p_i\le p$.
  Otherwise $i\in\partial\Res_t$.
  Recall that by Lemma~\ref{lem:g-prs},
  the distribution at round $t+1$ is $\Pr_{\mu}(\cdot\mid B(\Res_t^e))$.
  By Corollary \ref{cor:symmetric-bound}, for any $i\in\partial\Res_t$,
  \begin{align*}
    \Pr_{\mu}\left(A_i\mid B(\Res_t^e)\right) \le p\left( 1+\frac{1}{\Delta} \right)^\Delta \le ep.
  \end{align*}
  This implies that 
  \begin{align}
    & \Ex\left(\abs{\Bad_{t+1}}\mid \text{the set of resampling events at round $t$ is (exactly) $\Res_t$}\right)\notag\\
    \le &\; p\abs{\Res_t}+ ep\abs{\partial\Res_t} \le p(1+e\Delta)\abs{\Res_t}.\label{eqn:Bad-t+1}
  \end{align}
  
  Next we bound the size of $\abs{\Res_{t+1}\setminus\Bad_{t+1}}$.
  Let $P=i_0,\dots,i_{\ell}$ be an induced path; 
  that is, for any $k\in[\ell]$, $i_k$ is not adjacent to $i_{k'}$ unless $k'=k-1$ or $k+1$.
  Only induced paths are potentially bad.
  Moreover, $P$ contributes to $\abs{\Res_{t+1}\setminus\Bad_{t+1}}$ only if its length $\ell\ge 1$.
  Let $D_P$ be the event that $P$ is bad.
  In other words, $D_P:= A_{i_0}\wedge R_{i_0i_1}\wedge \dots\wedge R_{i_{\ell-1}i_{\ell}}$.
  By Lemma~\ref{lem:g-prs}, we have that
  \begin{align}\label{eqn:D_P-rewrite}
    &\; \Pr(P\text{ is bad at round $t+1$}\mid \text{the set of resampling events at round $t$ is $\Res_t$}) \notag\\
    = &\; \Pr_\mu (D_P \mid B(\Res_t^e)),
  \end{align}
  where we recall that we denote $\Res_t^e = \inb{m} \setminus \Gamma^+(\Res_t)$.
  Applying Corollary \ref{cor:symmetric-bound} with $S=\Res_t^e$, we have that
  \begin{align}\label{eqn:D_P-condition-bound}
    \Pr_\mu (D_P \mid B(\Res_t^e)) \le \Pr_\mu(D_P) \left( 1+\frac{1}{\Delta} \right)^{\abs{\Gamma(D_P)}}.
  \end{align}
  Note that $\Gamma(R_{i_ki_{k+1}})\subseteq \Gamma^+(A_{i_k})$ for all $0\le k\le \ell-1$.
  By the definition of $D_P$,
  \begin{align*}
    \Gamma(D_P) & \subseteq \Gamma(A_{i_0})\cup \Gamma(R_{i_0i_1})\cup\dots\cup\Gamma(R_{i_{\ell-1}i_{\ell}}) \\
    & \subseteq \Gamma^+(A_{i_0})\cup \Gamma^+(A_{i_1})\cup\dots\cup\Gamma^+(A_{i_{\ell-1}}),
  \end{align*}
  implying that 
  \begin{align}\label{eqn:D_P-neighbour-bound}
    \abs{\Gamma(D_P)}  & \le \ell(\Delta+1),
  \end{align}
  as $\abs{\Gamma^+(A_k)}\le \Delta+1$ for all $0\le k\le\ell-1$.

  We claim that $A_{i_0}$ is independent from $R_{i_{k-1}i_{k}}$ for any $2\le k\le \ell$.
  This is because $i_k$ is not adjacent to $i_0$ for any $k\ge 2$, implying that
  \begin{align*}
    \vbl(R_{i_{k-1}i_k})\cap \vbl(A_{i_0}) & = \vbl(A_{i_{k-1}})\cap\vbl(A_{i_{k}})\cap\vbl(A_{i_0})\\
    & \subseteq \vbl(A_{i_{k}})\cap\vbl(A_{i_0}) = \emptyset.    
  \end{align*}
  Moreover, any two events $R_{i_{k-1}i_k}$ and $R_{i_{k'-1}i_{k'}}$ are independent of each other as long as $k < k'$.
  This is also due to the third property of bad paths.
  Since $k < k'$, we see that $\abs{k'-(k-1)}\ge 2$ and $i_{k'}$ is not adjacent to $i_{k-1}$.
  It implies that 
  \begin{align*}
    \vbl(R_{i_{k-1}i_k})\cap \vbl(R_{i_{k'-1}i_{k'}}) & = \vbl(A_{i_{k-1}})\cap\vbl(A_{i_{k}})\cap\vbl(A_{i_{k'-1}})\cap\vbl(A_{i_{k'}}) \\
    & \subseteq \vbl(A_{i_{k-1}})\cap\vbl(A_{i_{k'}}) = \emptyset.
  \end{align*}
  The consequence of these independences is 
  \begin{align}
    \Pr_\mu(D_P) & \le \Pr_\mu(A_{i_0} \wedge R_{i_1i_2}\wedge \dots\wedge R_{i_{\ell-1}i_{\ell}})\notag\\
    & = \Pr_\mu(A_{i_0})\prod_{k=2}^{\ell} \Pr_\mu(R_{i_{k-1}i_k})\notag\\
    & \le p r^{\ell-1}. \label{eqn:D_P-Pr-bound}
  \end{align}
  Note that in the calculation above we ignore $R_{i_0i_1}$ as it can be positively correlated to $A_{i_0}$.

  Combining \eqref{eqn:D_P-rewrite}, \eqref{eqn:D_P-condition-bound}, \eqref{eqn:D_P-neighbour-bound}, and \eqref{eqn:D_P-Pr-bound},
  we have that
  \begin{align} \label{eqn:D_P-bound}
    & \Pr(D_P\mid \text{the set of resampling events at round $t$ is (exactly) $\Res_t$}) \notag \\
    \le &\; pr^{\ell-1}\left( 1+\frac{1}{\Delta} \right)^{\ell(\Delta+1)} \le \frac pr \left( \left( 1+\frac{1}{\Delta} \right)e r \right)^{\ell}.
  \end{align}

  In order to apply a union bound on all bad paths, we need to bound their number.
  The first vertex $i_0$ must be in $\Bad_{t+1}$, implying that $i_0\in \Gamma^+(\Res_t)$.
  Hence there are at most $(\Delta+1)\abs{\Res_t}$ choices.
  Then there are at most $\Delta$ choices of $i_1$ and $(\Delta-1)$ choices of every subsequent $i_{k}$ where $k\ge 2$.
  Hence, there are at most $\Delta(\Delta-1)^{\ell-1}$ induced paths 
  of length $\ell\ge 1$, originating from a particular $i_0\in \Gamma^+(\Res_t)$.
  Thus, by a union bound on all potentially bad paths and \eqref{eqn:D_P-bound},
  \begin{align}
    & \Ex\big(\abs{\Res_{t+1}\setminus\Bad_{t+1}}\bigm| \text{the set of resampling events at round $t$ is (exactly) $\Res_t$}\big)\notag\\
    \le &\; \sum_{\ell=1}^{\infty} (\Delta+1)\abs{\Res_t}\Delta(\Delta-1)^{\ell-1}p/r \left( \left( 1+\frac{1}{\Delta} \right)e r \right)^{\ell}\notag\\
    = & \; \frac{(\Delta+1)\Delta p}{(\Delta-1)r}\abs{\Res_t}\sum_{\ell=1}^{\infty} \left( \left( \frac{\Delta^2-1}{\Delta} \right) e r \right)^\ell\notag\\
    \le & \; \frac{(\Delta+1)\Delta p}{(\Delta-1)r}\abs{\Res_t}\sum_{\ell=1}^{\infty} \left( e r \Delta \right)^\ell 
    = \frac{(\Delta+1)\Delta p}{(\Delta-1)r}\cdot\frac{e r \Delta}{1- e r \Delta} \abs{\Res_t}\notag\\
    \le & \; \frac{\Delta+1}{\Delta-1}\cdot \frac{3}{2}\cdot ep\Delta^2\abs{\Res_t},\label{eqn:Res-Bad-t+1}
  \end{align}
  where we use the condition that $ e r \Delta \le 1/\Ctwo$.

  Combining \eqref{eqn:Bad-t+1} and \eqref{eqn:Res-Bad-t+1}, we have that 
  \begin{align*}
    & \Ex\left(\abs{\Res_{t+1}}\mid \text{the set of resampling events at round $t$ is (exactly) $\Res_t$}\right)\notag\\
    \le &\; \frac{\Delta+1}{\Delta-1}\cdot \frac{3}{2}\cdot ep\Delta^2\abs{\Res_t}+  p(1+e\Delta)\abs{\Res_t}\\
    = & \; p\left(\frac{\Delta+1}{\Delta-1}\cdot \frac{3}{2}\cdot e\Delta^2 +  (1+e\Delta)\right)\abs{\Res_t}.
  \end{align*}
  All that is left is to verify that
  \begin{align*}
    p\left(\frac{\Delta+1}{\Delta-1}\cdot \frac{3}{2}\cdot e\Delta^2 +  (1+e\Delta)\right) \le C,
  \end{align*}
  where $C=1-p$.
  This is straightforward by the condition $\Cone ep\Delta^2\le 1$ and $\Delta\ge 2$, as 
  \begin{align*}
    C - p\left(\frac{\Delta+1}{\Delta-1}\cdot \frac{3}{2}\cdot e\Delta^2 +  (1+e\Delta)\right) 
    & \ge \Cone ep\Delta^2 -p - p\left(\frac{\Delta+1}{\Delta-1}\cdot \frac{3}{2}\cdot e\Delta^2 +  (1+e\Delta)\right)\\
    & \ge p\left(\Cone e\Delta^2 - 1 - \frac{\Delta+1}{\Delta-1}\cdot \frac{3}{2}\cdot e\Delta^2 -  (1+e\Delta) \right) \ge 0.\qedhere
  \end{align*}
\end{proof}

For $t\ge 1$, by \Cref{lem:g-prs-Res-shrink} and the law of iterated expectations,
\begin{align*}
  \Ex \abs{\Res_t}\le C \Ex \abs{R_{t-1}}.
\end{align*}
Thus, $\Ex \abs{\Res_t}\le C^t \abs{R_0} = C^t m$.
As $C<1$, the expected number of resampling events is
\begin{align*}
  \sum_{t=0}^\infty \Ex \abs{\Res_t} \le \sum_{t=0}^\infty C^t m = \frac{1}{1-C}\cdot m.
\end{align*}
This implies the first part of \Cref{thm:exp-time:G-prs}.
For the second part, just observe that after $O(\log m)$ rounds, 
the expected number of bad events is less than $m^{-c}$ for any constant $c$, and Markov inequality applies.

The first condition of \Cref{thm:exp-time:G-prs} requires $p$ to be roughly $O(\Delta^{-2})$.
This is necessary, due to the hardness result in \cite{BGGGS16} (see also Theorem \ref{thm:sharing-CNF-hard}).
Also, in the analysis, it is possible to always add all of $\partial\Bad_t$ into $\Res_t$.
Consider a monotone CNF formula.
If a clause is unsatisfied, then all of its neighbours need to be added into the resampling set.
Such behaviours would eventually lead to the $O(\Delta^{-2})$ bound.
This situation is in contrast to the resampling algorithm of Moser and Tardos \cite{MT10},
which only requires $p=O(\Delta^{-1})$ as in the symmetric Lov\'asz Local Lemma.

Also, we note that monotone CNF formulas, in which all correlations are positive, seem to be the worst instances for our algorithms.
In particular, \Cref{ALG:G-PRS} is exponentially slow when the underlying hypergraph of the monotone CNF is a (hyper-)tree.
This indicates that our condition on $r$ in \Cref{thm:exp-time:G-prs} is necessary for \Cref{ALG:G-PRS}.
In contrast, Hermon et al.\ \cite{HSZ16} show that on a linear hypergraph (including the hypertree), 
the Markov chain mixes rapidly for degrees higher than the general bound.
It is unclear how to combine the advantages from these two approaches.

\section{Applications of Algorithm~\ref{ALG:G-PRS}}

\subsection{\texorpdfstring{$k$}{k}-CNF Formulas} \label{sec:k-cnf}

Consider a $k$-CNF formula where every variable appears in at most $d$ clauses.
Then Theorem \ref{thm:LLL} says that if $d\le 2^k/(e k)+1$, then there exists a satisfying assignment.
However, as mentioned in Section \ref{sec:ext-CNF},
\cite[Corollary 30]{BGGGS16} showed that when $d\ge 5\cdot 2^{k/2}$,
then sampling satisfying assignments is \NP-hard, even restricted to monotone formulas.

To apply Algorithm \ref{ALG:G-PRS} in this setting, we need to bound the parameter $r$ in Theorem \ref{thm:exp-time:G-prs}.
A natural way is to lower bound the number of shared variables between any two \emph{dependent} clauses.
If this lower bound is $s$, then $r=2^{-s}$ since there is a unique assignment on these $s$ variables 
that can be extended in such a way as to falsify the clauses.

\newcommand{\sharingCNF}[2]{\textsc{Deg}-#1-\textsc{Sha}-#2-\textsc{CNF}}
\begin{definition}
  Let $d\ge 2$ and $s\ge 1$.
  A $k$-CNF formula is said to have \emph{degree} $d$ if every variable appears in at most $d$ clauses.
  Moreover, it has \emph{intersection} $s$ if for any two clauses $C_i$ and $C_j$ that share at least one variable,
  $\abs{\vbl(C_i)\cap\vbl(C_j)}\ge s$.
\end{definition}

Note that by the definition if $k<s$ then the formula is simply isolated clauses.
Otherwise, $k\ge s$ and we have that $p_i = p = 2^{-k}$ and $r\le 2^{-s}$.
A simple double counting argument indicates that the maximum degree $\Delta$ in the dependency graph satisfies $\Delta\le \frac{dk}{s}$.

We claim that for integers $d$ and $k$ such that $d\ge 3$ and $dk\ge 2^{3e}$,
conditions $d\le \frac{2^{k/2}}{6e}$ and $s\ge \min\{\log_2 dk,k/2\}$ imply the conditions of \Cref{thm:exp-time:G-prs}, 
namely, $6ep\Delta^2\le 1$ and $3er\Delta \le 1$.
In fact, if $s\ge \log_2 dk \ge \log_2 d$, then
\begin{align*}
  6ep\Delta^2 \le 6e2^{-k} \left( \frac{dk}{s} \right)^2 
  \le 6e2^{-k} \left( \frac{dk}{\log_2 d} \right)^2 
  \le 6e \left( \frac{k}{6e\left( k/2 - \log_2 6e \right)} \right)^2 < 1,
\end{align*}
as $\frac{d}{\log_2 d}$ is increasing for any $d\ge 3$.
Moreover,
\begin{align*}
  3er\Delta \le \frac{3e dk}{2^s s} \le \frac{3e}{\log_2 (dk)} \le 1.
\end{align*}
Otherwise $k/2\le s\le \log_2 dk$, which implies that
\begin{align*}
  6ep\Delta^2 \le 6e2^{-k} \left( \frac{dk}{s} \right)^2 
  \le 6e2^{-k} \left( \frac{dk}{k/2} \right)^2 
  \le 6e 2^{-k} \left(\frac{2^{k/2}}{3e}\right)^2 < 1,
\end{align*}
and
\begin{align*}
  3er\Delta \le \frac{3e dk}{2^s s} \le \frac{6e d k}{k 2^{k/2}} \le 1.
\end{align*}
Thus by Theorem \ref{thm:exp-time:G-prs} we have the following result.
Note that resampling a clause involves at most $k$ variables,
and for $k$-CNF formulas with degree $d$,
the number of clauses is linear in the number of variables.

\begin{corollary}  \label{cor:sharing-CNF}
  For integers $d$ and $k$ such that $d\ge 3$ and $dk\ge 2^{3e}$,
  if $ d\le \frac{1}{6e}\cdot 2^{k/2}$ and $s\ge \min\{\log_2 dk,k/2\}$,
  then Algorithm \ref{ALG:G-PRS} samples satisfying assignments of $k$-CNF formulas with degree $d$ and intersection $s$
  in $O(n)$ time in expectation and in $O(n\log n)$ time with high probability,
  where $n$ is the number of variables.
\end{corollary}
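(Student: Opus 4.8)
The plan is to obtain \Cref{cor:sharing-CNF} as an essentially immediate consequence of \Cref{thm:exp-time:G-prs}: everything reduces to checking that the $k$-CNF hypotheses imply the two conditions $\Cone ep\Delta^2\le1$ and $\Ctwo er\Delta\le1$ of that theorem, after which its conclusion (an $O(m)$ bound on resampled events) is converted into an $O(n)$ bound on running time.

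First I would fix the encoding. For each Boolean variable put a random variable $X_v\in\{0,1\}$ sampled uniformly, and for each clause $C_i$ put the bad event $A_i$ equal to the unique assignment of $\vbl(C_i)$ that falsifies all $k$ of its literals; thus $p_i=p=2^{-k}$. For dependent clauses $C_i\sim C_j$ we have $\abs{\vbl(C_i)\cap\vbl(C_j)}\ge s$ by the intersection hypothesis, and the event $R_{ij}$ of \Cref{sec:G-prs} asks for the restriction of $\sigma$ to $\vbl(C_i)\cap\vbl(C_j)$ to match the unique pattern extending to a falsifying assignment of $C_j$; this pins at least $s$ bits, so $r_{ij}\le2^{-s}$ and $r\le2^{-s}$. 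A double-counting argument over the (variable, clause)-incidences incident to a fixed clause gives $\Delta\le dk/s$. I would also dispatch the degenerate cases at the outset: if $k<s$ the formula is a disjoint union of isolated clauses and is sampled trivially, and if $\Delta\le1$ the dependency graph is a matching, so the instance decomposes into independent subproblems on at most two clauses each (on disjoint variable sets), again trivial. Hence I may assume $k\ge s$ and $\Delta\ge2$, as required by \Cref{thm:exp-time:G-prs}.

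The heart of the argument is the arithmetic check of $\Cone ep\Delta^2\le1$ and $\Ctwo er\Delta\le1$ from $p=2^{-k}$, $r\le2^{-s}$ and $\Delta\le dk/s$, split according to which term realizes $\min\{\log_2 dk,\,k/2\}$. In the branch $s\ge\log_2 dk$ I would estimate $\Cone ep\Delta^2\le 6e\,2^{-k}(dk/\log_2 d)^2$ and use that $d\mapsto d/\log_2 d$ is increasing for $d\ge3$, together with $d\le 2^{k/2}/(6e)$, to reduce this to $k^2/\bigl(6e(k/2-\log_2 6e)^2\bigr)$; and $\Ctwo er\Delta\le 3e\,dk/(2^s s)\le 3e/\log_2(dk)$. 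In the branch $k/2\le s\le\log_2 dk$ I would estimate $\Cone ep\Delta^2\le 6e\,2^{-k}(2d)^2$ and $\Ctwo er\Delta\le 6e\,dk/(k\,2^{k/2})$. The remaining task is to confirm that all four quantities are $\le1$, and this is exactly where the hypothesis $dk\ge2^{3e}$ is consumed: combined with $d\ge3$ and $d\le2^{k/2}/(6e)$ it forces $k$ to be reasonably large (in fact $k\ge17$), which makes $k^2<6e(k/2-\log_2 6e)^2$ hold and makes $\log_2(dk)\ge3e$, so each bound drops below $1$. With the two conditions verified, \Cref{thm:exp-time:G-prs} says \Cref{alg:G-prs} halts after $O(m)$ expected resampled events, and \Cref{thm:g-prs} says its output is the uniform distribution over satisfying assignments. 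Finally each resampled clause involves at most $k$ variables, and counting incidences gives $mk\le dn$, so $m\le dn/k=O(n)$; hence the expected number of variable resamples, and therefore the running time, is $O(n)$.

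The main obstacle is not conceptual --- \Cref{thm:exp-time:G-prs} carries all the genuine content --- but lies in the last step of the previous paragraph: getting the two inequality chains exactly right, in particular exploiting the monotonicity of $d/\log_2 d$ and, above all, noticing that the innocuous-looking hypothesis $dk\ge2^{3e}$ is precisely what excludes the small values of $k$ for which the $s\ge\log_2 dk$ estimate would otherwise fail. Everything else is routine bookkeeping.
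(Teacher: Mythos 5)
Your proof is correct and follows essentially the same route as the paper: bound $p=2^{-k}$, $r\le 2^{-s}$, $\Delta\le dk/s$, split on which term of $\min\{\log_2 dk,\,k/2\}$ is active, verify the two hypotheses of \Cref{thm:exp-time:G-prs} by the same inequality chains, and convert $O(m)$ resamplings into $O(n)$ time. If anything you are slightly more careful than the paper, since you dispatch the degenerate cases ($k<s$ and $\Delta\le 1$) and make explicit that $dk\ge 2^{3e}$ together with $d\ge3$, $d\le 2^{k/2}/(6e)$ forces $k\ge17$, which is the step the paper glosses over when asserting the final ``$<1$'' in the first branch.
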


We remark that the lower bound on intersection size $s$ in Corollary \ref{cor:sharing-CNF} does not make the problem trivial.
Note that the lower bound $\min\{\log_2 dk,k/2\}$ is at most $k/2$.
The ``hard'' instance in the proof of \cite[Corollary 30]{BGGGS16} has roughly $k/2$ shared variables for each pair of dependent clauses.
For completeness, we will show that if $k$ is even, and $d\ge 4\cdot 2^{k/2}$ and $s=k/2$, then the sampling problem is \NP-hard.
The proof is almost identical to that of \cite[Corollary 30]{BGGGS16}.
The case of odd $k$ can be similarly handled but with larger constants.

We will use the inapproximability result of Sly and Sun \cite{SS14} (or equivalently, of Galanis et al.\ \cite{GSV16}) for the hard-core model. 
We first remind the reader of the relevant definitions. 
Let $\lambda>0$. 
For a graph $G=(V,E)$, the hard-core model with parameter $\lambda>0$ is a probability distribution over the set of independent sets of $G$; 
each independent set $I$ of $G$ has weight proportional to $\lambda^{|I|}$. 
The normalizing factor of this distribution is the partition function $Z_G(\lambda)$, 
formally defined as $Z_G(\lambda):=\sum_{I} \lambda^{|I|}$ where the sum ranges over all independent sets $I$ of $G$.
The hardness result we are going to use is about approximating $Z_G(\lambda)$,
but it is standard to translate it into the sampling setting as the problem is self-reducible.

\begin{theorem}[\cite{SS14,GSV16}]\label{thm:slysunhardcore}
  For $d\geq 3$, 
  let $\lambda_c(d):=(d-1)^{d-1}/(d-2)^{d}$. 
  For all $\lambda>\lambda_c(d)$, it is \NP-hard to sample an independent set $I$ with probability proportional to $\lambda^{|I|}$ in a $d$-regular graph.
\end{theorem}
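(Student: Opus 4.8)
The plan is to deduce the sampling hardness from the hardness of \emph{approximate counting} of the hard-core partition function $Z_G(\lambda)$ on $d$-regular graphs, which in turn is reduced from a worst-case \NP-hard optimization problem such as \textsc{Max-Cut} on bounded-degree graphs; the translation from counting to sampling is the standard one for self-reducible problems, so I would focus on the counting statement. It in fact suffices to rule out even \emph{approximate} sampling, which a fortiori gives the stated result.

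The heart of the argument is a bimodality (phase transition) phenomenon for the hard-core model on random $d$-regular \emph{bipartite} graphs. Fix $\lambda>\lambda_c(d)$ and let $G$ be a uniformly random $d$-regular bipartite graph with parts $L,R$ of size $n$. Using a second-moment computation, sharpened by small-subgraph conditioning to obtain concentration, one shows that $Z_G(\lambda)$ is, up to a $1+o(1)$ factor, the sum of two nearly equal contributions: one from independent sets whose occupied vertices lie predominantly in $L$ (the ``$L$-phase''), and a symmetric one for $R$. The parameter $\lambda_c(d)$ enters precisely as the point at which the dominant occupancy fraction of the hard-core model on the infinite $d$-regular tree bifurcates, so this step reduces to analyzing the fixed points of the associated tree recursion. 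The consequence is a \emph{gadget}: a typical hard-core sample on $G$ reveals a single bit --- ``which side is heavily occupied'' --- the two values of the bit are essentially equiprobable, and ``near-balanced'' independent sets carry exponentially small weight.

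Next I would run a gadget reduction. Given a bounded-degree \textsc{Max-Cut} instance $H$, build a $d$-regular graph by substituting a fresh copy of $G$ for each vertex of $H$ and joining the copies of adjacent vertices of $H$ by a controlled number of edges between their phases, with multiplicities chosen so that an edge of $H$ is ``satisfied'' exactly when its two gadgets sit in opposite phases. The dominant terms of the global partition function are then indexed by phase assignments to the gadgets, and the logarithm of the dominant term is, up to lower-order terms, an affine function of the cut value of the corresponding vertex-bipartition of $H$. Hence a $2^{o(n)}$-factor approximation of $Z$ on $d$-regular graphs --- or even the ability to reliably identify the global phase of a single gadget from a sample --- would yield a good approximation of the maximum cut of $H$, contradicting \NP-hardness.

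The main obstacle is the gadget analysis: matching the second moment to the square of the first up to the right constant, so that the two phases, and only those, dominate; bounding the total weight of near-balanced configurations; and controlling the cross-gadget interaction terms so that the encoding of the cut survives the $2^{o(n)}$-scale fluctuations of the partition function. Pinning the threshold to exactly $\lambda_c(d)$, rather than up to a constant factor, is what forces the tree-recursion fixed-point analysis to be carried out tightly. These ingredients are precisely what \cite{SS14} and \cite{GSV16} supply: Sly's original argument handled a range of $\lambda$ near threshold, and Sly--Sun extended it to all $\lambda>\lambda_c(d)$ and all $d\ge 3$.
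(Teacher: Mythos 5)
The paper does not prove this statement; it is quoted verbatim as a black-box citation to Sly--Sun and Galanis--\v{S}tefankovi\'c--Vigoda, and the only remark the paper adds is that the hardness in those sources is phrased for approximate counting and is transferred to sampling via self-reducibility. Your sketch is a faithful high-level reconstruction of the actual argument in the cited works: random $d$-regular bipartite gadgets, second-moment method sharpened by small-subgraph conditioning to exhibit phase coexistence for all $\lambda>\lambda_c(d)$, and a gadget reduction from bounded-degree \textsc{Max-Cut} in which the dominant log-partition-function term encodes the cut value. So there is nothing in the paper's ``proof'' to compare against, and your summary of the literature's approach is essentially accurate (including your correct identification of where the tightness of $\lambda_c(d)$ enters, via the tree-recursion fixed-point analysis, and of why Sly--Sun/GSV were needed to extend Sly's original near-threshold window to all $\lambda>\lambda_c(d)$ and all $d\ge 3$).
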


\begin{theorem}\label{thm:sharing-CNF-hard}
  Let $k$ be an even integer.
  If $d\ge 4\cdot 2^{k/2}$ and $s=k/2$,
  then it is \NP-hard to sample satisfying assignments of $k$-CNF formulas with degree $d$ and intersection $s$ uniformly at random.
\end{theorem}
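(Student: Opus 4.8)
The plan is to reduce from sampling in the hard-core model on $d$-regular graphs, which is \NP-hard above the uniqueness threshold by \Cref{thm:slysunhardcore}. Given a simple $d$-regular graph $G=(V,E)$, I would build a $k$-CNF formula $\varphi_G$ by attaching to each vertex $v$ a block of $k/2$ fresh Boolean variables $x_{v,1},\dots,x_{v,k/2}$, and to each edge $\{u,v\}\in E$ the all-negative clause
\[
  C_{uv}=\overline{x_{u,1}}\vee\cdots\vee\overline{x_{u,k/2}}\vee\overline{x_{v,1}}\vee\cdots\vee\overline{x_{v,k/2}},
\]
which has exactly $k$ literals. Call a vertex $v$ \emph{occupied} under an assignment if all of $x_{v,1},\dots,x_{v,k/2}$ equal~$1$. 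Then $C_{uv}$ is falsified exactly when both $u$ and $v$ are occupied, so the satisfying assignments of $\varphi_G$ are precisely those assignments whose occupied set is an independent set of $G$ (and $\varphi_G$ is monotone, so hardness even holds in that restricted class).

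Next I would check the parameters and the weight-preserving correspondence. Each variable $x_{v,i}$ appears only in the $\deg_G(v)=d$ clauses $C_{uv}$, so $\varphi_G$ has degree $d$. Two clauses depend on a common variable iff the corresponding edges share an endpoint; since $G$ is simple, distinct edges share at most one vertex, so any two dependent clauses share exactly the $k/2$ variables of that vertex, i.e.\ the intersection is $s=k/2$. For a fixed independent set $I$, an occupied vertex contributes exactly one assignment of its block and an unoccupied vertex contributes $2^{k/2}-1$, so $\varphi_G$ has exactly $(2^{k/2}-1)^{|V|-|I|}$ satisfying assignments with occupied set~$I$, which is proportional to $\lambda^{|I|}$ for $\lambda=1/(2^{k/2}-1)$. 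Hence sampling a uniformly random satisfying assignment of $\varphi_G$ and reading off its occupied set yields a sample from the hard-core distribution on $G$ at fugacity $\lambda$, and an $\varepsilon$-approximate sampler for $\varphi_G$ yields an $\varepsilon$-approximate one for the hard-core model, since projection does not increase total variation distance.

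It remains to verify $\lambda>\lambda_c(d)$ so that \Cref{thm:slysunhardcore} applies. Writing $\lambda_c(d)=\frac{1}{d-2}\bigl(1+\frac{1}{d-2}\bigr)^{d-1}$ and using that $n\mapsto(1+1/n)^{n+1}$ is decreasing (so the second factor is at most its value $4$ at $d=3$), we get $\lambda_c(d)\le\frac{4}{d-2}$; together with $d\ge 4\cdot 2^{k/2}$ this gives
\[
  \lambda_c(d)\le\frac{4}{d-2}\le\frac{4}{4\cdot 2^{k/2}-2}<\frac{1}{2^{k/2}-1}=\lambda,
\]
where the last inequality reduces to $4(2^{k/2}-1)<4\cdot 2^{k/2}-2$, i.e.\ $-4<-2$. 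Since the construction is polynomial time and $d\ge 8\ge 3$, \Cref{thm:slysunhardcore} (together with the standard self-reducibility that turns its approximate-counting statement into a sampling statement) shows that sampling a satisfying assignment of $\varphi_G$ is \NP-hard, which is the claim.

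I do not expect a genuine obstacle: this is essentially the construction of \cite[Corollary~30]{BGGGS16} with vertex-blocks of size $k/2$. The only points requiring care are confirming that the reduction is exactly distribution-preserving on the occupied set (so both exact and approximate hardness transfer cleanly) and the elementary estimate $\lambda>\lambda_c(d)$ above, which is where the constant $4$ enters rather than the $5$ of \cite{BGGGS16}; for odd $k$ one would instead use $\lceil k/2\rceil$ variables per vertex, obtaining the same statement with a somewhat larger constant.
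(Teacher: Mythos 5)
Your construction matches the paper's proof essentially exactly: both reductions replace each vertex of a $d$-regular graph by a block of $k/2$ variables, put one size-$k$ clause per edge that is violated precisely when both endpoints' blocks take a designated "occupied" value, and observe that the induced distribution on occupied sets is hard-core at fugacity $\lambda=(2^{k/2}-1)^{-1}$ (your use of all-negative clauses with occupied $=$ all-ones is just the complemented version of the paper's all-positive clauses with occupied $=$ all-zeros). The only cosmetic difference is the closing estimate $\lambda>\lambda_c(d)$, where you bound $\bigl(1+\tfrac{1}{d-2}\bigr)^{d-1}\le 4$ via monotonicity at $d=3$, whereas the paper uses $(1-\tfrac{1}{d-1})^{d-1}\ge(4/5)^5$; both are correct given $d\ge 8$.
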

\begin{proof}
  Given a $d$-regular graph $G=(V,E)$,
  we will construct a monotone $k$-CNF formula $C$ with degree $d$ and intersection $k/2$
  such that satisfying assignments of $C$ can be mapped to independent sets of $G$.
  Replace each vertex $v\in V$ by $s$ variables, say $v_1,\dots,v_s$.
  If $(u,v)\in E$, then create a monotone clause $v_1\vee\dots\vee v_s\vee u_1\vee\dots\vee u_s$.
  It is easy to see that every variable appears exactly $d$ times since $G$ is $d$-regular.
  Moreover, the number of shared variables is always $s$ and the clause size is $2s=k$.

  For each satisfying assignment, we map it to a subset of vertices of $G$.
  If all of $v_1,\dots,v_s$ are \emph{false}, then make $v$ occupied.
  Otherwise $v$ is unoccupied.
  Thus a satisfying assignment is mapped to an independent set of $G$.
  Moreover, there are $(2^{k/2}-1)^{n-\abs{I}}$ satisfying assignments corresponding to an independent set~$I$,
  where $n$ is the number of vertices in $G$.
  Thus the weight of $I$ is proportional to $(2^{k/2}-1)^{-\abs{I}}$; namely $\lambda=(2^{k/2}-1)^{-1}$ in the hard-core model.

  In order to apply Theorem \ref{thm:slysunhardcore}, all we need to do is to verify that $\lambda>\lambda_c$, or equivalently
  \begin{align*}
    2^{k/2}-1<\frac{(d-2)^{d}}{(d-1)^{d-1}}.
  \end{align*}
  This can be done as follows,
  \begin{align*}
    \frac{(d-2)^{d}}{(d-1)^{d-1}}& =(d-2)\Big(1-\frac{1}{d-1}\Big)^{d-1}\geq \bigg(\frac{4}{5}\bigg)^5(d-2)> 2^{k/2}-1.\qedhere
  \end{align*}
\end{proof}

Due to Theorem \ref{thm:sharing-CNF-hard},
we see that the dependence between $k$ and $d$ in Corollary \ref{cor:sharing-CNF} is tight in the exponent,
even with the further assumption on intersection $s$.

\subsection{Independent Sets}

We may also apply Algorithm~\ref{ALG:G-PRS} to sample hard-core configurations with parameter $\lambda$.
Every vertex is associated with a random variable which is occupied with probability $\frac{\lambda}{1+\lambda}$.
In this case, each edge defines a bad event which holds if both endpoints are occupied.
Thus $p=\left( \frac{\lambda}{1+\lambda} \right)^2$.
Algorithm \ref{ALG:G-PRS} is specialized to Algorithm \ref{alg:Hardcore}.

\begin{algorithm}[htbp]
  \caption{Sample Hard-core Configurations}
  \label{alg:Hardcore}
  \begin{enumerate}[itemsep=0.5em, topsep=0.5em]
    \item Mark each vertex occupied with probability $\frac{\lambda}{1+\lambda}$ independently.
    \item While there is at least one edge with both end points occupied,
      resample all occupied components of sizes at least $2$ and their boundaries.
    \item Output the set of vertices.
  \end{enumerate}
\end{algorithm}

\newcommand\BadVtx{ {\sf BadVtx}}
\newcommand\ResVtx{ {\sf ResVtx}}

To see this, consider a graph $G=(V,E)$ with maximum degree $d$.
Given a configuration $\sigma:V\rightarrow\{0,1\}$, consider the subgraph $G[\sigma]$
of $G$ induced by the vertex subset $\{v\in V:\sigma(v)=1\}$.  Then 
we denote by $\BadVtx(\sigma)$ the set of vertices in any component of $G[\sigma]$ of size at least $2$.
Then the output of Algorithm \ref{alg:resample} is
\begin{align*}
  \ResVtx(\sigma) := \BadVtx(\sigma) \cup \partial\BadVtx(\sigma).
\end{align*}
This is because first, all of $\partial\BadVtx(\sigma)$ will be resampled,
since any of them has at least one occupied neighbour in $\BadVtx(\sigma)$.
Secondly, $v\in \partial\BadVtx(\sigma)$ is unoccupied (otherwise $v\in\BadVtx(\sigma)$), 
and \Cref{alg:resample} stops after adding all of $\partial\BadVtx(\sigma)$.
This explains \Cref{alg:Hardcore}.

Moreover, let $\Bad(\sigma)$ be the set of edges whose both endpoints are occupied under $\sigma$.
Let $\Res(\sigma)$ be the set of edges whose both endpoints are in $\ResVtx(\sigma)$.
Let $\sigma_t$ be the random configuration of Algorithm \ref{alg:Hardcore} at round $t$ if it has not halted,
and $\Bad_t=\Bad(\sigma_t)$, $\Res_t=\Res(\sigma_t)$.

\begin{lemma}  \label{lem:hardcore-shrink}
  If $ep(2d-1)<1$,
  then $\Ex\abs{\Bad_{t+1}} \le (4ed^2-1)p \Ex\abs{\Bad_{t}}$.
\end{lemma}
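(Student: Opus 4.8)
The strategy is to bound $\Ex\abs{\Bad_{t+1}}$ in terms of $\Ex\abs{\Res_t}$, and then bound $\Ex\abs{\Res_t}$ in terms of $\Ex\abs{\Bad_t}$, chaining the two estimates. The first step follows the template of \eqref{eqn:Bad-t+1} in the proof of \Cref{lem:g-prs-Res-shrink}: by \Cref{lem:g-prs}, conditioned on $\Res_t$, the configuration at round $t+1$ is distributed as $\Pr_\mu(\cdot\mid B(\Res_t^e))$. An edge of $G$ can be in $\Bad_{t+1}$ only if it lies within $\Gamma^+(\Res_t)$ (its variables must have been resampled or be adjacent to resampled variables); for edges with both endpoints' variables freshly resampled the bad probability is exactly $p$, and for edges on the boundary we apply \Cref{cor:symmetric-bound} (valid since $ep(2d-1)<1$ gives $ep(\Delta+1)\le 1$ in this setting, as $\Delta=2d-2$ for the line-graph-like dependency structure) to get a bound of $ep$. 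Since each vertex has degree at most $d$, $\abs{\Gamma^+(\Res_t)}$ is at most roughly $2d\abs{\Res_t}$ as a set of edges, yielding $\Ex(\abs{\Bad_{t+1}}\mid\Res_t)\le c\,ep\,d\abs{\Res_t}$ for an explicit constant $c$.

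The second step is to bound $\Ex\abs{\Res_t}$ by $O(\Ex\abs{\Bad_t})$. Here the special structure of the independent-set application helps enormously: \Cref{alg:resample} terminates immediately after adding $\partial\BadVtx(\sigma)$, because boundary vertices are unoccupied and hence cannot be extended to make a monochromatic-edge event occur. So $\Res_t$ (as a set of edges) is contained in the set of edges incident to $\ResVtx(\sigma_t)=\BadVtx(\sigma_t)\cup\partial\BadVtx(\sigma_t)$. Each bad edge contributes its two endpoints to $\BadVtx$, each such endpoint has at most $d$ neighbours, and each of those boundary vertices has at most $d$ incident edges; tracking this carefully gives $\abs{\Res_t}\le c'd^2\abs{\Bad_t}$ deterministically for an explicit $c'$. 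Combining the two steps, $\Ex\abs{\Bad_{t+1}}\le (c\,ep\,d)(c'd^2)\Ex\abs{\Bad_t}$, and after optimizing the constants one arrives at the stated bound $\Ex\abs{\Bad_{t+1}}\le(4ed^2-1)p\,\Ex\abs{\Bad_t}$; the $-1$ presumably comes from separating out the "fresh" edges in $\Res_t$ (probability exactly $p$, not $ep$) in the first estimate.

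The main obstacle, I expect, is getting the combinatorial counting tight enough to land exactly on the constant $4ed^2-1$ rather than some looser multiple. The two crude bounds above — "at most $2d\abs{\Res_t}$ boundary edges" and "at most $d^2\abs{\Bad_t}$ edges in $\Res_t$" — over-count substantially (e.g.\ an edge incident to $\BadVtx$ is counted once per bad endpoint, and vertices in a large occupied component are shared among many bad edges). The careful version must instead bound $\abs{\Res_t}$ in terms of $\abs{\Bad_t}$ by noting that $\BadVtx(\sigma_t)$ has at most $2\abs{\Bad_t}$ vertices, $\ResVtx(\sigma_t)$ then has at most $2\abs{\Bad_t}(1+d)$ vertices, and the number of edges among/incident to these is at most $\tfrac12 d$ times the vertex count — then combine with the fact that only edges \emph{inside} $\ResVtx$ get resampled and that the $\Pr_\mu(A_i\mid B(\Res_t^e))\le ep$ bound applies to boundary-type bad edges while fresh edges get exactly $p$. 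I would write the two conditional-expectation inequalities as displayed \texttt{align*} blocks mirroring \eqref{eqn:Bad-t+1}, take expectations using the tower property as in the paragraph after \Cref{lem:g-prs-Res-shrink}, and finish with a one-line verification that the product of constants is at most $4ed^2-1$.
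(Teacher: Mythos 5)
Your overall strategy (chain a conditional-expectation bound of $\Ex\abs{\Bad_{t+1}}$ in terms of $\abs{\Res_t}$ with a deterministic bound of $\abs{\Res_t}$ in terms of $\abs{\Bad_t}$) is the same as the paper's, and your first step is essentially \eqref{eqn:Bad-t+1}. However, your proposed chaining loses a factor of $d$ and does not close: you bound $\Ex(\abs{\Bad_{t+1}}\mid\Res_t)\le c\,ep\,d\,\abs{\Res_t}$ (the extra $d$ coming from $\abs{\partial\Res_t}\le(2d-2)\abs{\Res_t}$), and separately $\abs{\Res_t}\le c'd^2\abs{\Bad_t}$, so the product is $\Theta(epd^3)\abs{\Bad_t}$, not $\Theta(epd^2)\abs{\Bad_t}$. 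This is not a matter of ``optimizing constants'' --- the exponent of $d$ is wrong, so no tuning of $c,c'$ can reach $(4ed^2-1)p$.

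The source of the loss is your estimate $\abs{\Res_t}\lesssim d^2\abs{\Bad_t}$. You are identifying $\Res_t$ with ``edges inside $\ResVtx$'' (as the paper itself somewhat misleadingly writes), but this overcounts the actual output of \Cref{alg:resample}. Trace \Cref{alg:resample} on the hard-core instance: starting from $R_0=\Bad(\sigma)$, every boundary edge $e=(u,v)$ with $u\in\BadVtx$ and $v\notin\BadVtx$ gets added to $R$ (since $u$ is occupied, so $A_e\cap\sigma_R\neq\emptyset$), but every edge with \emph{both} endpoints in $\partial\BadVtx$ is placed into $N$, because both its endpoints are unoccupied under $\sigma_R$, so $A_e\cap\sigma_R=\emptyset$. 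Thus the output of \Cref{alg:resample} is exactly the set of edges with at least one endpoint in $\BadVtx(\sigma)$. Every such edge shares a vertex with some bad edge, so $\Res_t\subseteq\Gamma^+(\Bad_t)$ in the line graph, and since the line graph has maximum degree $\Delta=2d-2$, this gives the much tighter $\abs{\Res_t}\le(2d-1)\abs{\Bad_t}$. (The edges between two $\partial\BadVtx$ vertices do have their variables resampled and can become bad at round $t+1$, but they are counted in $\partial\Res_t$, not $\Res_t$.) With this, the paper's chain reads
\[
\Ex\abs{\Bad_{t+1}}\;\le\; p\abs{\Res_t}+ep\abs{\partial\Res_t}\;\le\;\bigl(p(2d-1)+ep(2d-2)(2d-1)\bigr)\abs{\Bad_t}\;<\;(4ed^2-1)p\,\abs{\Bad_t},
\]
and the quadratic dependence on $d$ comes out correctly. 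An alternative repair, closer to what you sketch at the end, would be to bound $\abs{\partial\Res_t}$ \emph{directly} in terms of $\abs{\Bad_t}$ (there are at most $2d^2\abs{\Bad_t}$ edges meeting $\ResVtx$) rather than via the lossy inequality $\abs{\partial\Res_t}\le(2d-2)\abs{\Res_t}$; that also yields an $O(d^2)p$ bound, but with a worse constant than the paper's.
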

\begin{proof}
  First note that the dependency graph is the line graph of $G$ and $\Delta=2d-2$ is the maximum degree of the line graph of $G$.
  Thus $ep(2d-1)<1$ guarantees the prerequisite of Corollary \ref{cor:symmetric-bound} is met.
  It also implies that for any $\sigma$, $\abs{\Res(\sigma)}\le (2d-1)\Bad(\sigma)$,
  and $\partial\abs{\Res(\sigma)}\le (2d-2)\abs{\Res(\sigma)}$.
  Similarly to the analysis in Lemma \ref{lem:g-prs-Res-shrink}, conditioned on a fixed $\Bad_t$, 
  by Corollary \ref{cor:symmetric-bound} (or \eqref{eqn:R-bound} in particular), we have that
  \begin{align*}
    \Ex \abs{\Bad_{t+1}} &\le p\abs{\Res(\sigma)}+ep\abs{\partial\Res(\sigma)}\\
    &\le \left(p(2d-1)+ep(2d-2)(2d-1)\right)\abs{\Bad_t}\\
    &< (4ed^2-1)p \abs{\Bad_t}.
  \end{align*}
  Since the inequality above holds for any $\Bad_t$,
  the lemma follows.
\end{proof}

Lemma~\ref{lem:hardcore-shrink} implies that, if $4epd^2\le 1$, then the number of bad edges shrinks with a constant factor,
and~\Cref{alg:Hardcore} resamples $O(m)$ edges in expectation and $O(m\log m)$ edges with high probability,
where $m=\abs{E}$.
A bounded degree graph is sparse and thus $m=O(n)$, where $n$ is the number of vertices.
Since $p=\left( \frac{\lambda}{1+\lambda} \right)^2$, the condition $4epd^2\le 1$ is equivalent to
\begin{align*}
  \lambda \le \frac{1}{2\sqrt{e}d-1}.
\end{align*}
Thus we have the following theorem, where the constants are slightly better than directly applying Theorem \ref{thm:exp-time:G-prs}.

\begin{theorem}\label{thm:hard-core}
  If $\lambda \le \frac{1}{2\sqrt{e}d-1}$, 
  then Algorithm \ref{alg:Hardcore} draws a uniform hard-core configuration with parameter $\lambda$ from a graph with maximum degree $d$ 
  in $O(n)$ time in expectation and in $O(n\log n)$ time with high probability,
  where $n$ is the number of vertices.
\end{theorem}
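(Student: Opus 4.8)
The plan is to separate the two assertions of the theorem---correctness of the output distribution and the $O(n)$ bound on the expected running time---and reduce each to a result already in hand. For correctness, I would first observe that Algorithm~\ref{alg:Hardcore} is literally the specialization of Algorithm~\ref{alg:G-prs} to the instance with bad events indexed by edges and variables indexed by vertices: the discussion preceding the theorem shows that the set $\Res(\sigma)$ produced by Algorithm~\ref{alg:resample} consists exactly of the edges lying inside $\ResVtx(\sigma)=\BadVtx(\sigma)\cup\partial\BadVtx(\sigma)$. Hence, whenever the algorithm halts, Theorem~\ref{thm:g-prs} guarantees that its output is distributed as the product measure $\mu$ conditioned on no bad event occurring. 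It then remains to identify this conditional measure with the hard-core model: under $\mu$ each vertex is occupied independently with probability $\lambda/(1+\lambda)$, so a configuration whose occupied set is an independent set $I$ gets probability $\bigl(\tfrac{\lambda}{1+\lambda}\bigr)^{\abs{I}}\bigl(\tfrac{1}{1+\lambda}\bigr)^{n-\abs{I}}\propto\lambda^{\abs{I}}$, while configurations that are not independent sets get probability $0$ after conditioning. Normalizing, this is exactly $Z_G(\lambda)^{-1}\lambda^{\abs{I}}$.

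For the running time, I would first record the arithmetic already noted in the text: $p=\bigl(\tfrac{\lambda}{1+\lambda}\bigr)^2$, and $4epd^2\le1$ is equivalent to $\lambda\le\tfrac{1}{2\sqrt e\,d-1}$. Since $2d-1\le 4d^2$ for every $d\ge1$, the condition $4epd^2\le1$ implies $ep(2d-1)<1$, so the hypothesis of Lemma~\ref{lem:hardcore-shrink} is met and $\Ex\abs{\Bad_{t+1}}\le(4ed^2-1)p\,\Ex\abs{\Bad_t}$. Because $4ed^2p\le1$, the contraction factor obeys $(4ed^2-1)p=4ed^2p-p\le1-p<1$; together with $\Ex\abs{\Bad_0}\le m$ this yields $\Ex\abs{\Bad_t}\le(1-p)^t m$, and in particular $\Pr(\text{not halted by round }t)\to0$, so the algorithm terminates with probability~$1$ (which is what licenses the appeal to Theorem~\ref{thm:g-prs} above).

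To pass from bad edges to work done, I would use the bound $\abs{\Res_t}\le(2d-1)\abs{\Bad_t}$ noted in the proof of Lemma~\ref{lem:hardcore-shrink}: the expected total number of resampled edges over the entire run is at most $\sum_{t\ge0}(2d-1)\Ex\abs{\Bad_t}\le(2d-1)\sum_{t\ge0}(1-p)^t m=\tfrac{2d-1}{p}\,m$. Each resampled edge accounts for only a bounded amount of work (re-marking its two endpoints, re-examining incident edges), and for a graph of maximum degree $d$ we have $m\le dn/2$; hence the total expected running time is $O(n)$ (with $d$ treated as constant).

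The main obstacle here is bookkeeping rather than a genuine difficulty: one must check that the per-round cost of Algorithm~\ref{alg:Hardcore} (computing $\BadVtx$, its boundary, and resampling) is, up to the constant degree factor, proportional to the number of resampled edges, so that the $O(m)$ bound on resampled edges really does translate into an $O(n)$ time bound. A minor point is the claim of slightly better constants than a direct invocation of Theorem~\ref{thm:exp-time:G-prs} would give; these come from working with $\Delta=2d-2$ (the maximum degree of the line graph of $G$) and with the explicit factor $(4ed^2-1)p$ from Lemma~\ref{lem:hardcore-shrink} rather than the generic supermartingale estimate.
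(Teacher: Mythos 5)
Your proposal is correct and follows essentially the same route as the paper: correctness is delegated to Theorem~\ref{thm:g-prs} via the observation that \Cref{alg:Hardcore} is the specialization of \Cref{alg:G-prs}, while the $O(n)$ running-time bound is obtained from Lemma~\ref{lem:hardcore-shrink} by noting that $4epd^2\le 1$ (equivalent to $\lambda\le\frac{1}{2\sqrt e\,d-1}$) yields geometric decay of $\Ex\abs{\Bad_t}$, hence $O(m)=O(n)$ total resampled edges on a bounded-degree graph. You supply somewhat more detail than the paper (verifying that $4epd^2\le1$ implies the hypothesis $ep(2d-1)<1$ of Lemma~\ref{lem:hardcore-shrink}, explicitly identifying the conditional product measure with the hard-core distribution, and noting that the geometric decay of $\Ex\abs{\Bad_t}$ gives almost-sure termination, which licenses the appeal to Theorem~\ref{thm:g-prs}), but the argument is the same.
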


The optimal bound of sampling hard-core configurations is $\lambda<\lambda_c\approx\frac{e}{d}$ where $\lambda_c$ is defined in~\Cref{thm:slysunhardcore}.
The algorithm is due to Weitz \cite{Wei06} and the hardness is shown in \cite{SS14,GSV16}.
The condition of our~\Cref{thm:hard-core} is more restricted than correlation decay based algorithms \cite{Wei06} or traditional Markov chain based algorithms.
Nevertheless, our algorithm matches the correct order of magnitude $\lambda=O(d^{-1})$.
Moreover, our algorithm has the advantage of being simple, exact, and running in linear time in expectation.

\newcommand{\sigmahat}{\hat\sigma}
\newcommand{\what}{\widehat w}
\newcommand{\What}{\widehat W}
\newcommand{\Zhat}{\widehat Z}
\newcommand{\muhat}{\hat\mu}

\section{Distributed algorithms for sampling}
An interesting feature of Algorithm~\ref{ALG:G-PRS} is that it is distributed.\footnote{See \cite{FSY17} for a very recent work by Feng, Sun, and Yin on distributed sampling algorithms.
In particular, they show a similar lower bound in \cite[Section 5]{FSY17}.}
For concreteness, 
consider the algorithm applied to sampling hard-core configurations on a graph~$G$ (i.e.\ \Cref{alg:Hardcore}), 
assumed to be of bounded maximum degree.  
Imagine that each vertex is assigned a processor that has access to a source of random bits.  
Communication is possible between adjacent processors and is assumed to take constant time. 
This is essentially Linial's LOCAL model \cite{Lin87}.
Then, in each parallel round of the algorithm, the processor at vertex~$v$ can update the value $\sigma(v)$ in constant time, 
as this requires access only to the values of $\sigma(u)$ for vertices $u\in V(G)$ within a bounded distance~$r$ of $v$.  
In the case of the hard-core model, we have $r=2$, since the value~$\sigma(v)$ at vertex~$v$ should be updated precisely 
if there are vertices $u$ and $u'$ such that $v\sim u$ 
and $u\sim u'$ and $\sigma(u)=\sigma(u')=1$.  Note that we allow $u'=v$ here.

In certain applications, including the hard-core model, Algorithm~\ref{ALG:G-PRS} runs in 
a number of rounds that is bounded by a logarithmic function of the input size with high probability.
(Recall \Cref{thm:exp-time:G-prs}.)
We show that this is optimal.  (Although the argument is presented in the
context of the hard-core model, it ought to generalise to many other applications.) 

Set $L=\lceil c\log n\rceil$ for some constant $c>0$ to be chosen later.
The instance that establishes the lower bound is a graph~$G$ consisting of 
a collection of $n/L$ disjoint paths $\Pi_1,\ldots,\Pi_{n/L}$ with $L$~vertices each.  
(Assume that $n$ is an exact multiple of $L$; this is not a significant restriction.) 
The high-level idea behind the lower bound is simple, and consists of two observations. 
We assume first that the distributed algorithm we are considering always 
produces an output, say $\sigmahat:V(G)\to\{0,1\}$, within $t$~rounds.  
It will be easy at the end to extend
the argument to the situation where the running time is a possibly unbounded random 
variable with bounded expectation.  

Focus attention on a particular path~$\Pi$ with endpoints $u$ and~$v$.  
The first observation 
is that if $rt<L/2$ then $\sigma(u)$ (respectively, $\sigma(v)$) 
depends only on the computations performed by processors in 
the half of~$\Pi$ containing~$u$ (respectively~$v$).  Therefore, in the algorithm's 
output, $\sigmahat(u)$ and $\sigmahat(v)$ are probabilistically independent.
The second observation is that if the constant~$c$ is sufficiently small then, in the
hard-core distribution, $\sigma(u)$ and $\sigma(v)$ are significantly correlated.
Since the algorithm operates independently on each of the $n/L$ paths, these 
small but significant correlations combine to force to a large variation
distance between the hard-core distribution and the output distribution 
of the algorithm.

We now quantify the second observation.
Let $\sigma:V(G)\to\{0,1\}$ be a sample from the hard-core distribution on a path~$\Pi$ 
on $k$~vertices with endpoints $u$ and~$v$, 
and let $I_k=Z_\Pi(\lambda)$ denote the corresponding hard-core partition function 
(weighted sum over independent sets). 
Define the matrix $W_k=\big(\begin{smallmatrix}w_{00}&w_{01}\\w_{10}&w_{11}\end{smallmatrix}\big)$, 
where $w_{ij}=\Pr(\sigma(u)=i\wedge\sigma(v)=j)$.
Then
$$
W_k=\frac1{I_k}\begin{pmatrix}I_{k-2}&\lambda I_{k-3}\\\lambda I_{k-3}&\lambda^2I_{k-4}\end{pmatrix},
$$
since $I_k$ is the total weight of independent sets in~$\Pi$, 
$I_{k-2}$ is the total weight of independent sets
with $\sigma(u)=\sigma(v)=0$, $I_{k-3}$ is the total weight of independent sets
with $\sigma(u)=0$ and $\sigma(v)=1$, and so on.
Also note that $I_k$ satisfies the recurrence
\begin{equation}\label{eq:Ikrecurrence}
I_0=1,\quad I_1=\lambda+1,\quad\text{and}\quad 
I_k=I_{k-1}+\lambda I_{k-2},\>\text{for $k\geq2$}.
\end{equation}

We will use $\det W_k$ to measure the deviation of  
the distribution of $(\sigma(u),\sigma(v))$ from a product distribution.
Write 
$$W_k'=\begin{pmatrix}I_{k-2}& I_{k-3}\\ I_{k-3}&I_{k-4}\end{pmatrix},$$
and note that $\det W_k=\lambda^2I_k^{-2}\det W_k'$.
Applying recurrence (\ref{eq:Ikrecurrence}) once to each of the four entries of $W_k'$, we have
\begin{align*}
\det W_k'&=I_{k-2}I_{k-4}-I_{k-3}^2\\
&=(I_{k-3}+\lambda I_{k-4})(I_{k-5}+\lambda I_{k-6})-(I_{k-4}+\lambda I_{k-5})^2\\
&=I_{k-3}(I_{k-5}+\lambda I_{k-6})-I_{k-4}(I_{k-4}+\lambda I_{k-5})+
\lambda^2(I_{k-4}I_{k-6}-I_{k-5}^2)\\
&=I_{k-3}I_{k-4}-I_{k-4}I_{k-3}+\lambda^2\det W_{k-2}'\\
&=\lambda^2\det W_{k-2}',
\end{align*}
for all $k\geq 6$.  By direct calculation, $\det W_4'=-\lambda^2$ and $\det W_5'=\lambda^3$.  Hence, by induction, 
$\det W_k'=(-1)^{k-1}\lambda^{k-2}$, and 
\begin{equation}\label{eq:Wk}
\det W_k=\frac{(-1)^{k-1}\lambda^k}{I_k^2},
\end{equation}
for all $k\geq4$.

Solving the recurrence (\ref{eq:Ikrecurrence}) gives the following formula for~$I_k$: 
$$
I_k=A_\lambda\left(\frac{1+\sqrt{4\lambda+1}}2\,\right)^k + B_\lambda\left(\frac{1-\sqrt{4\lambda+1}}2\,\right)^k,
$$
where 
$$
A_\lambda=\left(\frac12+\frac{2\lambda+1}{2\sqrt{4\lambda+1}}\right)\quad \text{and}\quad
B_\lambda=\left(\frac12-\frac{2\lambda+1}{2\sqrt{4\lambda+1}}\right)
$$
Asymptotically,
$$
I_k=(1+o(1))A_\lambda\left(\frac{1+\sqrt{4\lambda+1}}2\,\right)^k.
$$
Substituting this estimate into (\ref{eq:Wk}) yields $|\det W_k|=(1+o(1))A_\lambda^{-2}\alpha^k$ where 
$$\alpha=\frac{2\lambda}{2\lambda+\sqrt{4\lambda+1}+1}.$$
Note that $0<\alpha<1$ and $\alpha$ depends only on~$\lambda$.

Now let the matrix
$\What_k=\big(\begin{smallmatrix}\what_{00}&\what_{01}\\ \what_{10}&\what_{11}\end{smallmatrix}\big)$
be defined as for $W_k$, but with respect to the output distribution of the distributed 
sampling algorithm rather than the true hard-core distribution.  
Recall that we choose $L=\lceil c\log n\rceil>2rt$, which implies that $\sigmahat(u)$ and $\sigmahat(v)$ are independent and $\det\What_L=0$.
It is easy to check that if $\|\What_k-W_k\|_\infty\leq\varepsilon$, where the matrix norm is 
entrywise, then $|\det W_k|\leq\varepsilon$.
Thus, for $c$ sufficiently small (and $L=\lceil c\log n\rceil$), we can ensure that 
$\|\What_L-W_L\|_\infty\geq n^{-1/3}$.  Thus, $|\what_{ij}-w_{ij}|\geq n^{-1/3}$,
for some $i,j$;
for definiteness, suppose that $i=j=0$ and that $\what_{00}>w_{00}$. 

Let $Z$ (respectively $\Zhat$) be the number of paths whose endpoints are both assigned~0 
in the hard-core distribution (respectively, the algorithm's output distribution).
Then $Z$ (respectively $\Zhat$) is a binomial random variable with expectation $\mu=w_{00}n/L$
(respectively $\muhat=\what_{00}n/L$).  Since $|\Ex Z-\Ex \Zhat|>\Omega(n^{2/3}/\log n)$, 
a Chernoff bound gives that $\Pr(Z\geq (\mu+\muhat)/2)$ and $\Pr(\Zhat\leq (\mu+\muhat)/2)$
both tend to zero exponentially fast with $n$.  It follows that the variation distance
between the distributions of $\sigma$ and~$\sigmahat$ is $1-o(1)$.  

The above argument assumes an absolute bound on running time, whereas the running time 
of an exact sampling algorithm will in general be a random variable~$T$.  To bridge the 
gap, suppose $\Pr(T\leq t)\geq\frac23$.  Then
\begin{align*}
  \|\sigmahat-\sigma\|_\mathrm{TV}&=\max_A\bigl|\Pr(\sigmahat\in A)-\Pr(\sigma\in A)\bigr|\\
  &=\max_A\Bigl|\big(\Pr(\sigmahat\in A\mid T\leq t)-\Pr(\sigma\in A)\big)\Pr(T\leq t)\\
  &\qquad\qquad\null+\bigl(\Pr(\sigmahat\in A\mid T> t)-\Pr(\sigma\in A)\big)\Pr(T>t)\Bigr|\\
  &\geq \tfrac23(1-o(1))-\tfrac13\times1,
\end{align*}
Where $\|\cdot\|_\mathrm{TV}$ denotes variation distance, and $A$ ranges over
events $A\subseteq \{0,1\}^{|V(G)|}$.
Thus $\|\sigma-\sigmahat\|_\mathrm{TV}\geq\frac13-o(1)$, which is a contradiction.
It follows that $\Pr(T\leq t)<\frac23$ and hence 
$\Ex(T)\geq\frac13t$.  Note that this argument places a lower bound on
parallel time not just for exact samplers, but even for (very) approximate ones.

With only a slight increase in work, one could take the instance $G$ to be a path 
of length~$n$, which might be considered more natural.  Identify $O(n/L)$ subpaths
within~$G$, suitably spaced, and of length~$L$.  The only complication is that 
the hard-core distribution does not have independent marginals on distinct subpaths.
However, by ensuring that the subpaths are separated by distance $n^{\alpha}$, 
for some small $\alpha>0$, the correlations can be controlled, 
and the argument proceeds, with only slight modification, as before.

\section*{Acknowledgements}

We would like to thank Yumeng Zhang for pointing out a factor $k$ saving in Corollary \ref{cor:sharing-CNF}.
We thank Dimitris Achlioptas, Fotis Iliopoulos, Pinyan Lu, Alistair Sinclair, and Yitong Yin for their helpful comments.
We also thank anonymous reviewers for their detailed comments.

HG and MJ are supported by the EPSRC grant EP/N004221/1. 
JL is supported by NSF grant CCF-1420934.
This work was done (in part) while the authors were visiting the Simons Institute for the Theory of Computing.
HG was also supported by a Google research fellowship in the Simons Institute.

\bibliographystyle{plain}
\bibliography{PRS}

\end{document}